\keywords{concurrency, weak memory models, complexity}  
\newcolumntype{H}{>{\setbox0=\hbox\bgroup}c<{\egroup}@{}}
\newcommand{\LW}{\mathsf{LW}}
\newcommand{\TC}{\mathbb{TC}}
\newcommand{\PC}{\mathbb{PC}}
\newcommand{\HB}{\mathbb{HB}}
\newcommand{\Hh}{\mathbb{H}}
\newcommand{\LWB}{\mathsf{LWB}}
\newcommand{\LRB}{\mathsf{LRB}}
\newcommand{\glw}{\textsf{lastWriteBefore}}
\newcommand{\glr}{\textsf{lastReadBefore}}
\newcommand{\HBMO}{\mathsf{HBMO}}
\newcommand{\PHBMO}{\mathsf{PredHBMO}}
\newcommand{\RdLst}{\mathbb{RList}}
\newcommand{\WtLst}{\mathbb{WList}}
\newcommand{\mx}{\sqcup}
\newcommand{\view}[3]{{#1}_{#2}^{#3}}
\newcommand{\nil}{\mathtt{NIL}}
\newcommand*\bigcdot{\mathpalette\bigcdot@{.5}}
\newcommand*\bigcdot@[2]{\mathbin{\vcenter{\hbox{\scalebox{#2}{$\m@th#1\bullet$}}}}}
\newcommand{\apicall}[3]{{#1}\bigcdot{\textup{\texttt{#2(}#3\texttt{)}}}}
\renewcommand{\smallskip}{}
\begin{document}

\title{Optimal Reads-From Consistency Checking for C11-Style Memory Models}         

\author[H. C. Tun\c{c}]{H\"{u}nkar Can Tun\c{c}}
\orcid{0000-0001-9125-8506} 
\affiliation{
	\institution{Aarhus University}            
	\country{Denmark}                    
}
\email{tunc@cs.au.dk}          

\author[P. A. Abdulla]{Parosh Aziz Abdulla}
\orcid{0000-0001-6832-6611}
\affiliation{
  \institution{Uppsala University}            
  \country{Sweden}                    
}
\email{parosh@it.uu.se}          

\author[S. Chakraborty]{Soham Chakraborty}
\orcid{0000-0002-4454-2050}
\affiliation{
  \institution{TU Delft}            
  \country{Netherlands}                    
}
\email{s.s.chakraborty@tudelft.nl} 

\author[K. Shankaranarayanan]{Shankaranarayanan Krishna}
\orcid{0000-0003-0925-398X}
\affiliation{
  \institution{IIT Bombay}            
  \country{India}                    
}
\email{krishnas@cse.iitb.ac.in}

\author[U. Mathur]{Umang Mathur}
\orcid{0000-0002-7610-0660} 
\affiliation{
  \institution{National University of Singapore}            
  \country{Singapore}                    
}
\email{umathur@comp.nus.edu.sg}          

\author[A. Pavlogiannis]{Andreas Pavlogiannis}
\orcid{0000-0002-8943-0722}
\affiliation{
  \institution{Aarhus University}            
  \country{Denmark}                    
}
\email{pavlogiannis@cs.au.dk}          


\begin{abstract}
Over the years, several memory models have been proposed to capture the subtle concurrency semantics of C/C++.
One of the most fundamental problems associated with a memory model $\MemModel$ is \emph{consistency checking}:~given an execution $\ex$, is $\ex$ consistent with $\MemModel$?
This problem lies at the heart of numerous applications, including specification testing and litmus tests, stateless model checking, and dynamic analyses.
As such, it has been explored extensively and its complexity is well-understood for 
traditional models like SC and TSO.
However, less is known for the numerous model variants of C/C++,
for which the problem becomes challenging due to the intricacies of their concurrency primitives.
%
In this work we study the problem of consistency checking for popular variants of the C11 memory model, in particular, the $\rcmm$ model, its release-acquire ($\ramm$) fragment, 
the strong and weak variants of $\ramm$ ($\sramm$ and $\wramm$),
as well as the $\rlxmm$ fragment of $\rcmm$.

Motivated by applications in testing and model checking,
we focus on \emph{reads-from} consistency checking.
The input is an execution $\ex$ specifying a set of events, 
their program order and their reads-from relation,
and the task is to decide the existence of a modification order on the writes of $\ex$
that makes $\ex$ consistent in a memory model.
We draw a rich complexity landscape for this problem;
our results include 
(i)~nearly-linear-time algorithms for certain variants, which improve over prior results,
(ii)~fine-grained optimality results, as well as 
(iii)~matching upper and lower bounds ($\NP$-hardness) for other variants.
To our knowledge, this is the first work to characterize the complexity of consistency checking 
for C11 memory models.
We have implemented our algorithms inside the \Trust model checker and the \celeventester testing tool.
Experiments on standard benchmarks show that our new algorithms improve consistency checking, often by a significant margin.
\end{abstract}


\maketitle


\section{Introduction}\label{sec:intro}


Modern programming languages such as C/C++ \cite{cstandard,cppstandard} have introduced first-class platform-independent concurrency primitives to gain performance from weak memory architectures. 
The programming model is popularly known as C11 \cite{Boehm:2008,Batty:2011}. 
The formal semantics of C11 has been an active area of research \cite{Batty:2011,Vafeiadis:2015,Batty:2016,Lahav:2017,Chakraborty:2019,Lee:2020,Margalit:2021} and 
other programming languages such as Java \cite{Bender:2019} and Rust \cite{Dang:2019} have also adopted similar concurrency primitives.   

One of the most fundamental computational problems associated with a memory model, particularly in testing and verification, is that of \emph{consistency checking}~\cite{Gibbons:1997,Furbach:2015,Kokologiannakis:2023}.
Here, one focuses on a fixed memory model $\MemModel$, typically described using constraints or axioms.
The input to the consistency problem pertaining to the memory model $\MemModel$ is then a partial execution $\ex$, typically described using a set of events $\E$ together with a set of relations on $\E$. 
The consistency problem then asks to determine if $\ex$ is \emph{consistent} with $\MemModel$.
Here, by \emph{partial} execution, we mean that the set of relations is not fully described in the input, in which case the problem asks whether $\ex$ can be extended to a complete execution that is consistent with $\MemModel$. The focus of this paper is \emph{reads-from} consistency checking; in the rest of the paper we refer to this simply as consistency checking. 

The problem of consistency checking has numerous applications in both software and hardware verification.
First, viewing memory models as contracts between the system designer and the software developers,
consistency checking is a common approach to testing memory subsystems, cache-coherence protocols and compiler transformations against the desired contract~\cite{Qadeer2003,Manovit2006,Chen2009,Wickerson2017,Windsor:STVR22}.
Second, since public documentations of memory architectures are typically not entirely formal, litmus tests can reveal or dismiss behaviors that are not covered in the documentation~\cite{Alglave2010,Alglave2011,Alglave:2014}.
Consistency checking for litmus tests makes testing more efficient (and thus also more scalable), by avoiding the enumeration of behaviors that are impossible under the given model.
Third, in the area of model checking, (partial) executions typically serve the role of abstraction mechanisms.
Consistency checking, thus, ensures that model checkers indeed explore valid system behavior.
As such, it has been instrumental in guiding recent research in partial-order reduction techniques and stateless model checking of concurrent software~\cite{Kokologiannakis:2022,Chalupa2017,Abdulla:2019b,Abdulla:2018,Bui:2021,Chatterjee:2019,Agarwal:2021,NorrisDemsky:2013}.
Focusing on partial executions allows such algorithms to consider coarser equivalences such as 
the \emph{reads-from} equivalence,
allowing for more proactive state-space reductions and better performance as a result.
These advances have also propelled the use of formal methods in the industry~\cite{VSync2021,loom,AmazonFM2021}.
Consistency checking of partial executions also forms the foundation of dynamic \emph{predictive analyses} by characterizing the space of perturbations that can be applied to an observed execution in an attempt to expose a
bug~\cite{Kalhauge2018,Huang14,Pavlogiannis2019,Mathur:lics2020,Luo:2021,Kini2017, MathurSHB2018,SyncP2021}. 


The ubiquitous relevance of consistency checking has led to a systematic study of its computational complexity under various memory models.
Under sequential consistency (SC), most variants of the problem were shown to be $\NP$-hard in the seminal work
of Gibbons and Korach~\cite{Gibbons:1997}.
Subsequently, more fine-grained investigations have characterized 
how input parameters such as the number of threads, memory locations, write accesses and communication topology
affect the complexity of consistency checking~\cite{Abdulla:2019b,Chini2020,Mathur:lics2020,Agarwal:2021}.
As the consistency problems remain intractable under most common weak memory models (such as SPARC/X86-TSO, PSO, RMO, PRAM)~\cite{Furbach:2015}, 
parametric results have also been established for these models~\cite{Bui:2021,Chini2020}.
Given its applications in analysis of concurrent programs,
clever heuristics have been proposed to enhance the efficiency
of checking consistency in practice~\cite{Zennou2019}.

The C11 memory model provides the flexibility to derive 
different weak memory model paradigms based on different 
subsets and combinations of the concurrency primitives, their memory orders, and their respective semantics. 
For instance, the release and acquire memory orders allow programmers to 
derive release-acquire ($\ramm$) as well as its weak ($\wramm$) and strong ($\sramm$) variants~\cite{Lahav:2022}.
The $\ramm$ model is weaker than SC and provides a rigorous foundation in defining 
synchronization and locking primitives~\cite{Lahav:2016}. 
The $\wramm$ and $\sramm$ are equivalent to variants of causal consistency~\cite{Lahav:2022}, 
a well studied consistency model in the distributed systems literature. 
C11 also provides `relaxed' memory access modes which constitutes the weaker memory model fragment $\rlxmm$. 
Relaxed memory accesses can reorder freely and are the most performant compared to 
accesses with stronger memory orders. 
In our work, we focus on the recently proposed declarative $\rcmm$ memory model~\cite{Margalit:2021}
capturing a rich fragment of C11, consisting of release, acquire and relaxed memory accesses as well as
memory fence operations.
This memory model is a natural fragment of the C11 model, given that
``\emph{only a few (practical) algorithms that actually employ SC accesses and become wrong when release/acquire accesses are used instead}''~\cite{Margalit:2021}.
Further, focusing on the non-$\scmm$ fragment allows us to reap the benefits
of polynomial time consistency checking, which otherwise quickly becomes intractable~\cite{Gibbons:1997}.

The intricacies of C11 and the abundance of its variants give rise to a plethora of consistency-checking instances.
Some first results show that consistency checking for $\ramm$ admits a polynomial bound~\cite{Lahav:2015,Abdulla:2018}, a stark difference to $\scmm$ for which this problem is $\NP$-hard and is not even well-parameterizable~\cite{Gibbons:1997,Mathur:lics2020}.
These positive results have facilitated efficient model checking and testing techniques~\cite{Abdulla:2018,Kokologiannakis:2019,Luo:2021}.
However, beyond these recent developments, little is known about the complexity of consistency checking for C11-style memories, and, to our knowledge, the setting remains poorly understood.
Our work fills this gap.




\begin{figure}
\centering
\def\ystep{0.4}
\begin{subfigure}{0.16\textwidth}
\vspace{0.2cm}
\[
\inparII{
x:=1;
\\ y:=1;
\\ x:=2;
}{
\ \ a:=y;
\\ \ \ x:=3;
\\ \ \ b:=x;
}
\]
\vspace{0.5cm}
\caption{Program}
\label{subfig:prg:mp}
\end{subfigure}
\hfill
\begin{subfigure}{0.26\textwidth}
\centering
\scalebox{0.9}{
\begin{tikzpicture}[yscale=1]
  \node (t11) at (-1.2,0*\ystep)  {$\wt(x)$};
  \node (t12) at (-1.2,-2*\ystep) {$\wt(y)$};
  \node (t13) at (-1.2,-4*\ystep) {$\wt(x)$};
  \node (t21) at (1,0*\ystep) {$\rd(y)$};
  \node (t22) at (1,-2*\ystep) {$\wt(x)$};
  \node (t23) at (1,-4*\ystep) {$\rd(x)$};

  \draw[po] (t11) to (t12);
  \draw[po] (t12) to (t13);
  \draw[po] (t21) to (t22);
  \draw[po] (t22) to (t23);
  \draw[rf,bend left=0] (t12) to node[above,pos=0.9, sloped]{$\rf$} (t21);
  \draw[rf,bend right=0] (t13) to node[above,pos=0.9, sloped]{$\rf$} (t23);
%
\end{tikzpicture} 
}  
\caption{
$\ramm$-consistent execution.
}
\label{subfig:ra:partial}
\end{subfigure}
\hfill
\begin{subfigure}{0.24\textwidth}
\centering
\scalebox{0.9}{
\begin{tikzpicture}[yscale=1]
  \node (t11) at (-1.2,0*\ystep)  {$\wt(x)$};
  \node (t12) at (-1.2,-2*\ystep) {$\wt(y)$};
  \node (t13) at (-1.2,-4*\ystep) {$\wt(x)$};
  \node (t21) at (1,0*\ystep) {$\rd(y)$};
  \node (t22) at (1,-2*\ystep) {$\wt(x)$};
    \node (t23) at (1,-4*\ystep) {$\rd(x)$};

  \draw[po] (t11) to (t12);
  \draw[po] (t12) to (t13);
  \draw[po] (t21) to (t22);
  \draw[po] (t22) to (t23);
  \draw[rf,bend left=0] (t12) to node[above,pos=0.9, sloped]{$\rf$} (t21);
  \draw[rf,bend right=0] (t13) to node[above,pos=0.9, sloped]{$\rf$} (t23);
  \draw[mo,bend right=15] (t22) to node[below,pos=0.5, sloped]{$\mo$} (t13);
\end{tikzpicture} 
} 
\caption{
$\ramm$-consistent $\mo$.
}
\label{subfig:ra:consistent}
\end{subfigure}
\hfill
\begin{subfigure}{0.28\textwidth}
\centering
\scalebox{0.9}{
\begin{tikzpicture}[yscale=1]
  \node (t11) at (-1.2,0*\ystep)  {$\wt(x)$};
  \node (t12) at (-1.2,-2*\ystep) {$\wt(y)$};
  \node (t13) at (-1.2,-4*\ystep) {$\wt(x)$};
  \node (t21) at (1,0*\ystep) {$\rd(y)$};
  \node (t22) at (1,-2*\ystep) {$\wt(x)$};
    \node (t23) at (1,-4*\ystep) {$\rd(x)$};

  \draw[po] (t11) to (t12);
  \draw[po] (t12) to (t13);
  \draw[po] (t21) to (t22);
  \draw[po] (t22) to (t23);
  \draw[rf,bend left=0] (t12) to node[above,pos=0.9, sloped]{$\rf$} (t21);
  \draw[rf,bend left=0] (t11) to node[below,pos=0.7, sloped]{$\rf$} (t23.west);
%
\end{tikzpicture} 
}
\caption{
$\ramm$-inconsistent execution.
}
\label{subfig:ra:inconsistent}
\end{subfigure}
\caption{
A program (\subref{subfig:prg:mp}) and a partial execution $\expartial$ specifying the writer $\rf^{-1}(\rd)$ of each read $\rd$ (\subref{subfig:ra:partial}).
$\expartial$ is $\ramm$-consistent, as witnessed by the modification order $\mo$ that abides to $\ramm$ semantics (\subref{subfig:ra:consistent}).
The partial execution in (\subref{subfig:ra:inconsistent}) is $\ramm$-inconsistent, as there is no modification order $\mo$ that abides to $\ramm$ semantics. 
}
\label{fig:nwra}
\end{figure}


\bparagraph{{Our contributions}}\label{subsec:contributions}
In this paper we study the reads-from consistency checking for the $\ramm$, $\sramm$, $\wramm$, $\rlxmm$ and $\rcmm$ memory models, with results that are optimal or nearly-optimal.
In all cases, the input is a partial execution $\expartial=\tup{\E, \po, \rf}$ 
with $\NumEvents=|\E|$ events and $\NumThreads$ threads, where $\po$ and $\rf$ are the program order and reads-from relation, respectively (see \cref{subsec:executions}), and the task is to determine if there is a modification order $\mo$
such that the extension $\ex = \tup{\E, \po, \rf, \mo}$ is consistent with the memory model in consideration. 
\cref{fig:nwra} illustrates an example for $\ramm$-consistency.

Our first result concerns $\rcmm$.
Consistency checking is known to be in polynomial time~\cite{Lahav:2015,Abdulla:2018,Luo:2021},
though of degree $3$ (i.e., $O(\NumEvents^3)$).
This cubic complexity has been identified as a challenge for efficient model checking (e.g.,~\cite{Kokologiannakis:2019,Kokologiannakis:2022}).
Here we show that the full $\rcmm$ model admits an algorithm that is nearly linear-time; i.e., a bound that becomes linear when the number of threads is bounded.

\begin{restatable}{theorem}{thmrcrfupper}
\label{thm:rc_rf_upper}
Consistency checking for $\rcmm$ can be solved in $O(\NumEvents \cdot \NumThreads)$ time.
\end{restatable}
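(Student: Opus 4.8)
The plan is to reduce $\rcmm$-consistency checking to two computations, each runnable in $O(\NumEvents \cdot \NumThreads)$ time: first computing the happens-before relation $\hb$, and then constructing and validating a suitable modification order $\mo$. The starting observation is that in the release/acquire/relaxed-plus-fences fragment captured by $\rcmm$, the synchronization relation $\sw$—and hence $\hb = (\po \cup \sw)^+$—is fully determined once $\expartial = \tup{\E,\po,\rf}$ is fixed, since the events in $\E$ carry their access modes and $\sw$ edges arise only from $\rf$ edges out of releasing writes (through release sequences) into acquiring reads, possibly mediated by fences. I would first check that $\po \cup \rf$ is acyclic (otherwise reject) and fix a topological order of $\E$. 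Processing events in this order, I would compute for each event $\event$ a vector clock $V_\event \in \nats^{\NumThreads}$ recording, per thread, the largest $\po$-index of an event that $\hb$-precedes $\event$; the $\rcmm$ synchronization rules (release views stored per write, acquire joins on reads, and auxiliary per-thread release/acquire views for fences) are encoded as a bounded number of joins of length-$\NumThreads$ vectors. Each event is thus processed with $O(\NumThreads)$ work, yielding $\hb$ as an $O(\NumThreads)$-query oracle in total time $O(\NumEvents \cdot \NumThreads)$.

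Second, I would show that once $\hb$ is fixed, a correct $\mo$ need not be searched for but can be canonically constructed. The key structural lemma is that the coherence axioms of $\rcmm$—which amount to the irreflexivity of $\hb ; \eco^?$ together with the RMW atomicity constraint—force, for every two same-location writes related by $\hb$, their $\mo$-order to agree with $\hb$, and force every write $w'$ that $\hb$-precedes a read $r$ to be $\mo$-before $\rfinv(r)$. I would prove that ordering the writes of each location by a fixed linear extension of these forced constraints (ties resolved using the vector-clock timestamps) yields a modification order that is $\rcmm$-consistent whenever \emph{any} modification order is, so that testing this single canonical $\mo$ is both sound and complete. Intuitively, $\hb$-incomparable same-location writes carry no mutual constraint beyond what the reads impose, so any acyclic completion of the forced constraints extends to a total $\mo$ without introducing a coherence cycle.

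Third, I would fold the construction and validation of this canonical $\mo$ into a single pass that stays within $O(\NumEvents \cdot \NumThreads)$, rather than paying the $O(\NumEvents^3)$ cost of saturating $\mo ; \hb$. Concretely, processing the writes of each location in $\hb$-topological order and maintaining the current $\mo$-maximal write per location, I would use the vector clocks to test each read $r$ for a coherence violation—some write to its location that $\hb$-precedes $r$ but is $\mo$-after $\rfinv(r)$—in $O(\NumThreads)$ time, and likewise to enforce the immediacy required by RMW atomicity. A detected violation certifies $\rcmm$-inconsistency via the completeness direction of the lemma, while the absence of violations certifies consistency via the explicitly constructed $\mo$.

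The main obstacle is the structural lemma of the second step: establishing that a greedily/canonically chosen $\mo$ is consistent exactly when some $\mo$ is. The subtlety is twofold. First, $\rcmm$'s relaxed accesses and fence-mediated synchronization make $\hb$ substantially richer than in pure $\ramm$, so I must verify that the vector-clock encoding captures $\sw$ exactly and that the forced constraints above remain the \emph{only} constraints on $\mo$. Second, RMW atomicity forces specific $\mo$-adjacencies that interact with the canonical linearization; I must argue that these adjacency constraints never conflict with the happens-before-induced order, so that the single-pass construction cannot fail spuriously while a different $\mo$ would have succeeded. Discharging this argument while keeping every per-event check at $O(\NumThreads)$ is the crux of the near-linear bound.
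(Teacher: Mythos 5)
Your overall skeleton matches the paper's: compute $\hb$ with per-event vector clocks in $O(\NumEvents\cdot\NumThreads)$, insert the $\mo$ orderings that the axioms force, and decide consistency by a per-location acyclicity/validation check. But the ``structural lemma'' you defer to step two is precisely where the proof lives, and your sketch of it contains two concrete errors. First, the forced constraints are mis-stated. Read-coherence alone forces $(\wt',\rfinv(\rd))\in\mo$ when $(\wt',\rd)\in\rf^?;\hb$, but in the presence of RMWs, atomicity propagates this ordering up the $\rf$-chain: the constraint that must be recorded is $(\wt',\TC[\rfinv(\rd)])\in\mo$, i.e., an edge to the \emph{top} of the chain of $\rfinv(\rd)$ (this is exactly \cref{lem:read_coherence_atomicity}, and the reason \cref{item:minimal_coherence_rc1,item:minimal_coherence_rc2} of minimal coherence are phrased via $\TC$). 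With only your weaker constraint, the constraint graph omits forced edges; e.g., if $\wt_1\to_{\rf}\ud_1$ is a chain, $\rd$ reads from $\ud_1$, $\wt'$ is $\hb$-before $\rd$, and $\wt_1$ is $\hb$-before $\wt'$, then the true constraints $(\wt_1,\wt')$ and $(\wt',\wt_1)$ form a cycle (inconsistent), whereas your recorded constraints $(\wt_1,\wt')$, $(\wt',\ud_1)$, $(\wt_1,\ud_1)$ are acyclic.

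Second, your justification of the lemma---``$\hb$-incomparable same-location writes carry no mutual constraint beyond what the reads impose, so any acyclic completion of the forced constraints extends to a total $\mo$ without introducing a coherence cycle''---is false exactly because of atomicity. An arbitrary linear extension (including one with ties broken by vector-clock timestamps) can interleave two $\rf$-chains, e.g.\ order $\wt_1,\wt_2,\ud_1,\ud_2$ for chains $\wt_1\to_{\rf}\ud_1$ and $\wt_2\to_{\rf}\ud_2$, which violates atomicity even though no forced edge or coherence cycle is present. This is why the paper does not test a single canonical total $\mo$ at all: it checks acyclicity of the \emph{partial} order $(\rf_x\cup\hb_x\cup\mopartial_x)$ and proves existence of a valid total extension abstractly (\cref{lem:rc_minimal_coherence}), via a completion that keeps each $\rf$-chain contiguous by only ever ordering the bottom of one chain before the top of another and then closing under $\rf$. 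Discharging your ``crux'' therefore requires both the $\TC$-routing of forced edges and this chain-respecting completion argument; as written, your step three could reject executions that a chain-respecting extension would witness as consistent, or (with the weaker constraints) the claimed equivalence between the canonical $\mo$ and existence of some $\mo$ simply does not hold.
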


A key step towards \cref{thm:rc_rf_upper} is our notion of \emph{minimal coherence}, 
which is a novel characterization that serves as a witness of consistency.
Our consistency-checking algorithm proves consistency by constructing a minimally coherent (partial) modification order.
Although similar witnesses have been used in the past (e.g., the writes-before order~\cite{Lahav:2015}, saturated traces~\cite{Abdulla:2018}, or C11Tester's framework~\cite{Luo:2021}),
the simplicity of minimal coherence allows, for the first time to our knowledge, for a nearly linear-time algorithm.


Next we turn our attention to $\sramm$.
Perhaps surprisingly, although the model is conceptually close to $\ramm$, it turns out that checking consistency for $\sramm$ is intractable.

\begin{restatable}{theorem}{thmsrarflower}
\label{thm:sra_rf_lower}
Consistency checking for $\sramm$ is $\NP$-complete, and $\WComplexity{1}$-hard in the parameter $\NumThreads$.
\end{restatable}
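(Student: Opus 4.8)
The plan is to establish both claims with a single reduction, together with an easy membership argument. For membership in $\NP$, observe that a modification order $\mo$ is a witness of size $O(\NumEvents^2)$: given $\expartial=\tup{\E,\po,\rf}$ and a candidate $\mo$, one computes $\hb=(\po\cup\rf)^+$ and $\fr=\rf^{-1};\mo$, and checks that $\mo$ is a per-location total order on writes and that $\hb\cup\mo\cup\fr$ is acyclic. All of this is doable in polynomial time, so guessing $\mo$ places the problem in $\NP$.

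For hardness, I would first pin down what makes $\sramm$ harder than its polynomial-time cousin $\ramm$. The two models differ only in how coherence is enforced: $\ramm$ forbids (roughly) only \emph{per-location} cycles, those closed by a single-location $\eco$ step, whereas $\sramm$ demands the \emph{global} acyclicity of $\hb\cup\mo\cup\fr$, so that $\mo$- and $\fr$-edges belonging to different locations may be glued together by $\hb$ into a forbidden cycle. Since $\hb$ and $\rf$ are fixed by the input, the only freedom in the consistency problem is the orientation of each pair of same-location writes left unordered by $\hb$; crucially, orienting a pair $w\mo w'$ also forces the $\fr$-edges $r\fr w'$ for every read $r$ with $w\rf r$. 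The consistency question thus becomes: can these binary orientation choices be made so that no cross-location cycle arises in $\hb\cup\mo\cup\fr$? This cross-location coupling, absent in $\ramm$, is the engine of hardness and is what I would exploit.

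I would obtain both lower bounds at once by a polynomial-time reduction from \emph{Multicolored $\NumThreads$-Clique}, which is $\NP$-hard and $\WComplexity{1}$-hard in the clique size. Given a graph with vertex classes $V_1,\dots,V_{\NumThreads}$, I would use $O(\NumThreads^2)$ threads and $O(\NumThreads)$ ``selector'' locations $x_1,\dots,x_{\NumThreads}$, where the writes to $x_i$ encode the vertices of $V_i$ and the chosen orientation of $\mo$ on $x_i$ designates one vertex of $V_i$ as ``selected'' (e.g.\ via which write a distinguished reader observes as $\mo$-maximal). For each pair $(i,j)$ I would install a checker thread whose $\po$/$\rf$ structure, together with the $\fr$-edges emitted by the two selections, closes a cycle in $\hb\cup\mo\cup\fr$ \emph{exactly} when the vertices selected from $V_i$ and $V_j$ are non-adjacent. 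An $\sramm$-consistent $\mo$ would then exist iff every pair of selected vertices is an edge, i.e.\ iff the selected $\NumThreads$ vertices form a clique. Since the number of threads is $O(\NumThreads^2)$ and the construction is polynomial in the graph size, this simultaneously yields $\NP$-hardness and $\WComplexity{1}$-hardness in $\NumThreads$.

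The hard part will be engineering the gadgets so that the correspondence is exact in both directions. On the completeness side, a genuine clique must induce orientations leaving $\hb\cup\mo\cup\fr$ acyclic; on the soundness side, the gadgets must rule out ``cheating'' orientations, in particular forcing each $x_i$ to commit to a single vertex and preventing any adjacency checker from being satisfiable when an edge is missing. Getting the $\fr$ side-effects to route correctly across locations, while packing the $\Omega(|V|+|E|)$ writes into only $O(\NumThreads^2)$ threads (so that $\po$-induced $\hb$ order does not spuriously over-constrain $\mo$), is the delicate core of the construction; the membership argument and the reduction skeleton around it are routine by comparison.
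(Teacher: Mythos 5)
Your proposal breaks down at its conceptual core: you attribute to $\sramm$ the requirement that $\hb\cup\mo\cup\fr$ be acyclic, but that is the $\scmm$ condition, not the $\sramm$ one. The actual $\sramm$ axioms are $\acy(\hb\cup\mo)$ (strong write coherence) together with $\irr(\fr;\rf^?;\hb)$ (read coherence) and $\irr(\fr;\mo)$ (atomicity); $\fr$ edges never participate in a global acyclicity requirement. The gap between the two conditions is real: the execution in \cref{subfig:sra:nosc} is $\sramm$-consistent even though its $\mo$ closes a $(\po\cup\mo\cup\fr)$-cycle. This error infects both halves of your argument. Your $\NP$-verifier checks acyclicity of $\hb\cup\mo\cup\fr$ and therefore rejects legitimate witnesses of $\sramm$-consistency, so it certifies the wrong problem (easily repaired by checking the actual axioms, but wrong as written). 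More seriously, your hardness ``engine''---gadgets whose $\fr$ and $\mo$ edges on different locations are glued by $\hb$ into a forbidden cycle exactly when two selected vertices are non-adjacent---forbids nothing under $\sramm$: such cycles are permitted, so the soundness direction of your reduction fails, and what you would prove is hardness of $\scmm$ consistency, which is already known. There is also a decisive structural obstruction: your selector/checker gadgets use only plain writes and reads, but the paper's own \cref{thm:sra_rf_normw_upper} shows that RMW-free $\sramm$ consistency is decidable in $O(\NumEvents\cdot\NumThreads)$ time. Hence no amount of ``delicate gadget engineering'' (which you in any case leave undone) can complete a reduction that produces RMW-free instances, unless $\mathcal{P}=\NP$.

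The paper's proof avoids gadget design altogether and pinpoints where the hardness really lives: in RMWs. It first shows (\cref{lem:sc_hardness}) that $\scmm$ consistency remains $\WComplexity{1}$-hard in $\NumThreads$ on instances containing \emph{only} write and RMW events, by taking the known hard instances in which every write is observed at most once and replacing each read by an RMW. Then the reduction to $\sramm$ is simply the identity map: on such instances every event is a write or RMW, so $\fr\subseteq\mo$, hence $(\po\cup\rf\cup\mo\cup\fr)=(\po\cup\rf\cup\mo)$, and strong write coherence of $\sramm$ yields exactly the $\scmm$ acyclicity condition. In other words, the absence of plain reads is precisely what collapses $\sramm$ onto $\scmm$---whereas your construction depends essentially on plain reads emitting $\fr$ edges, which is exactly the regime where the two models diverge and where $\sramm$ consistency is tractable.
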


Here $\WComplexity{1}$ is a parameterized complexity class~\cite{Chen2004FPT}.
This result states that, not only is the problem $\NP$-complete, but it is also unlikely to be \emph{fixed parameter tractable} in $\NumThreads$, i.e., solvable in time $O(\NumEvents^c \cdot f(\NumThreads))$, where $c >0$ and $f$ are independent of $\NumEvents$.
Nevertheless, our next result shows that this problem admits an upper bound that 
is polynomial when $\NumThreads=O(1)$.
Given the $\WComplexity{1}$-hardness, the next result is thus optimal, in the sense that $k$ has to appear in the exponent of $n$.


\begin{restatable}{theorem}{thmsrarfupper}
\label{thm:sra_rf_upper}
Consistency checking for $\sramm$ can be solved in $O(\NumThreads\cdot \NumEvents^{\NumThreads+1})$ time.
\end{restatable}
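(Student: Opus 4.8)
The plan is to recast the search for a witnessing modification order as a search for a \emph{global linearization} of the $\NumEvents$ events, and to explore the space of such linearizations by dynamic programming over \emph{frontiers}. A frontier is a downward-$\po$-closed set of events; since each thread's events are totally ordered by $\po$, a frontier is uniquely encoded by a tuple $(i_1,\dots,i_\NumThreads)$ recording how many events of each thread it contains, so there are at most $(\NumEvents+1)^\NumThreads = O(\NumEvents^\NumThreads)$ frontiers. I would build a linearization one event at a time, and the $\mo$ witnessing $\sramm$-consistency will be read off as the same-location projection of the constructed order.

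First I would set up the step relation. From a frontier $F$, the events that may be appended next are the $\po$-minimal events of $\E\setminus F$, of which there are at most $\NumThreads$ (the next unprocessed event of each thread). Appending an event $\event$ as the current maximum is \emph{admissible} when it preserves the local $\sramm$ constraints of the prefix: if $\event$ is a read, its $\rf$-source must already lie in $F$, and placing $\event$ last must not conflict with the $\mo$/$\fr$ edges implied by the writes already in $F$ to $\event$'s location; if $\event$ is a write, making it the current $\mo$-maximum for its location must not invalidate the source of any read already placed. Each such test can be phrased against $\hb$ (precomputed once from $\po$ and $\rf$) together with the writes of the relevant location present in $F$, and therefore costs $O(\NumEvents)$.

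The heart of the proof is the lemma that $\expartial$ is $\sramm$-consistent iff the full frontier $\E$ is reachable from $\emptyset$ by admissible steps; equivalently, that admissibility is a function of the frontier \emph{set} alone, not of the order in which its events were appended. Soundness (reachable $\Rightarrow$ consistent) is direct: the sequence of appended events yields a total order extending $\hb$, and its same-location write projection is an $\mo$ whose $\sramm$ axioms can be verified edge by edge. Completeness (consistent $\Rightarrow$ reachable) is the hard direction. Here I would start from a witnessing $\mo$ and argue, by an exchange/normal-form argument, that the events can be scheduled so that every prefix is an admissible frontier. The crucial point --- and the main obstacle --- is to show that the information an admissibility check needs about the prefix, namely the ``latest relevant write'' to each location governing future reads and coherence, is determined by the frontier set together with the fixed relation $\hb$, rather than by the internal ordering of the prefix. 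This monotonicity is exactly the property that holds for $\sramm$ (via its strong-coherence axiom) yet fails for $\scmm$, explaining why the same frontier-based scheme is correct here but would be unsound for $\scmm$.

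Finally I would account for the running time. A standard breadth- or depth-first reachability over the implicit frontier graph visits $O(\NumEvents^\NumThreads)$ frontiers; from each it considers at most $\NumThreads$ candidate events, each tested in $O(\NumEvents)$ time, and no edge set needs to be materialized. This yields a total of $O(\NumThreads \cdot \NumEvents^{\NumThreads+1})$, as claimed. Combined with the $\WComplexity{1}$-hardness of \cref{thm:sra_rf_lower}, this shows that the dependence of the exponent on $\NumThreads$ is unavoidable, so the algorithm is optimal in that sense.
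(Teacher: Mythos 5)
Your overall architecture coincides with the paper's: the paper also reduces $\sramm$-consistency to reachability in a graph whose nodes are downward-closed sets of events (equivalently, per-thread prefix tuples), with at most $O(\NumEvents^{\NumThreads})$ nodes, at most $\NumThreads$ successors per node, and an $O(\NumEvents)$ test per transition, and it likewise reads the witnessing $\mo$ off the traversal order and proves completeness by scheduling events according to a witnessing $\mo$. However, there is a genuine gap exactly at the point you yourself flag as ``the main obstacle.'' As you formulate admissibility --- the coherence check is performed when a \emph{read} is placed (``must not conflict with the $\mo$/$\fr$ edges implied by the writes already in $F$''), and the write-side check refers to ``reads already placed'' --- admissibility is \emph{not} a function of the frontier set: whether a read $\rd$ with source $\wt_{\mathsf{rf}}$ can be appended depends on the relative order in which the same-location writes inside $F$ were placed, and two different paths to the same frontier can order them differently. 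So the lemma you need (set-only admissibility) is false for your conditions, not merely unproven; defining the edge to exist when \emph{some} path passes the check breaks soundness (different edges may need incompatible histories), and requiring \emph{all} paths breaks completeness. Your conditions also never address atomicity for RMWs, which is the entire source of hardness here (\cref{thm:sra_rf_lower}); ``not invalidating the source of any read already placed'' does not constrain the problematic case, where the RMW is \emph{not yet} placed but its $\rf$-source is.

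The paper resolves both issues by pushing every constraint onto \emph{write} placement, in a form that mentions only membership in the current set $S$: a write/RMW $\wt$ is enabled iff (a) for every triplet $(\wt,\rd,\wt')$ with $(\wt',\rd)\in\hb$ --- where the read $\rd$ of $\wt$ need \emph{not} be in $S$ --- the competing write $\wt'$ is already in $S$ (this is forced by \cref{cor:read_coherence_sra}), and (b) there is no RMW $\ud\notin S$ whose writer $\rfinv(\ud)$ is already in $S$ on the same location (placing $\wt$ now would wedge it between $\rfinv(\ud)$ and $\ud$ in $\mo$, violating atomicity). Reads then need only $\hb$-downward-closure, and read-coherence holds automatically along every path, because each write was admitted only after all of its mandatory $\mo$-predecessors. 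This makes the edge relation genuinely set-defined, so plain reachability is sound. For the completeness direction you sketch as an ``exchange argument,'' the paper's proof is constructive: given a witnessing $\mo$, maintain a prefix downward-closed with respect to $(\hb\cup\mo)$; add any read preserving $\hb$-closure if one exists, and otherwise add a $(\hb\cup\mo)$-minimal write, whose enabledness follows from read-coherence, atomicity, and strong-write-coherence of the witness (acyclicity of $\hb\cup\mo$ is what guarantees such a minimal write exists). Your complexity accounting is the same as the paper's once the conditions are repaired.
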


Taking a closer look into the model, we identify RMWs as the source of intractability.
Indeed, the RMW-free fragment of $\sramm$ admits a nearly linear bound, much like $\rcmm$.
This fragment is relevant, as it coincides with the causal convergence model~\cite{bouajjani:2017}.

\begin{restatable}{theorem}{thmsrarfnormwupper}
\label{thm:sra_rf_normw_upper}
Consistency checking for the RMW-free fragment of $\sramm$ can be solved in $O(\NumEvents \cdot \NumThreads)$ time.
\end{restatable}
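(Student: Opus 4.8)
The plan is to follow the \emph{minimal coherence} strategy behind \cref{thm:rc_rf_upper}, specialized to the global coherence condition of $\sramm$. Recall that $\sramm$ strengthens $\ramm$ by requiring that the modification order be globally coherent: an execution is $\sramm$-consistent iff there is a per-location-total $\mo$ for which $\hb \cup \mo$ is acyclic \emph{across all locations} (rather than merely per location, as in $\ramm$), where $\hb = (\po \cup \rf)^+$. The first step is to isolate the $\mo$ edges that are \emph{forced} by coherence: (i)~$w_1 \mo w_2$ whenever $w_1 \hb w_2$ write the same location, and (ii)~$w' \mo w$ whenever some read $r$ has $w \rf r$ and a same-location write $w' \neq w$ satisfies $w' \hb r$. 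I denote by $\mopartial$ the resulting partial order (restricted per location). The crux is the claim that, in the $\Upd$-free setting, $\expartial$ is $\sramm$-consistent iff $\hb \cup \mopartial$ is acyclic.

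For the forward direction this is immediate, since any witnessing $\mo$ must contain $\mopartial$ while keeping $\hb \cup \mo$ acyclic. For the converse I would take a linear extension $T$ of $\hb \cup \mopartial$ and set each $\mo_x := T|_x$; then $\hb \cup \mo \subseteq T$ is acyclic, giving the global condition for free, while the forced edges~(ii) guarantee that every read still observes a coherent (latest) write, so per-location coherence holds for \emph{any} extension of $\mopartial$. This last step is exactly where $\Upd$-freeness is essential: an $\Upd$ event imposes an \emph{atomicity} constraint—it must read from its immediate $\mo$-predecessor—which is \emph{not} preserved by arbitrary extensions and couples distinct locations through exclusivity (two $\Upd$s cannot share an immediate predecessor). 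This coupling is precisely the combinatorial core exploited in the $\NP$-hardness of \cref{thm:sra_rf_lower}; removing $\Upd$s collapses it, leaving a purely acyclicity-based witness.

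For the running time, I would first compute, in a single topological pass over $\po \cup \rf$, a $\NumThreads$-dimensional $\hb$-timestamp for every event (propagating the pointwise maximum along $\po$ and $\rf$, then bumping the own-thread coordinate), at cost $O(\NumEvents \cdot \NumThreads)$; a cycle encountered here already refutes consistency. The forced edges are then represented \emph{compactly} rather than explicitly: instead of an edge~(ii) from \emph{every} $\hb$-earlier same-location write of $r$, it suffices, by transitivity through~(i), to add one edge from the $\hb$-latest such write \emph{per thread}. Using the timestamp of $r$ together with per-(thread, location) write lists analogous to those in \cref{thm:rc_rf_upper}, these $O(\NumThreads)$ edges per read are located in $O(\NumThreads)$ time, for a total of $O(\NumEvents \cdot \NumThreads)$ edges. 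Finally, since edges~(i) are subsumed by $\hb$, acyclicity of $\hb \cup \mopartial$ coincides with acyclicity of the $O(\NumEvents \cdot \NumThreads)$-edge graph $(\po \cup \rf)$ together with the compact edges, which a single depth-first search decides in $O(\NumEvents \cdot \NumThreads)$.

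I expect the main obstacle to be establishing the equivalence ``$\sramm$-consistent $\iff \hb \cup \mopartial$ acyclic'' for the \emph{global} coherence of $\sramm$: unlike $\ramm$/$\wramm$, where per-location reasoning suffices, a violating cycle may alternate $\hb$ and $\mo$ steps across several locations, so I must show that $\mopartial$—built purely from local forced edges—nonetheless captures every such cross-location obstruction, and that none is lost when (i)-edges are dropped and (ii)-edges are compacted to one per thread. Verifying that this compact witness is sound and complete for the global condition, and that $\Upd$-freeness is exactly what licenses the ``any linear extension works'' step, is the delicate part; the timestamp computation and the final acyclicity test are then routine adaptations of \cref{thm:rc_rf_upper}.
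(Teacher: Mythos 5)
Your proposal is correct and follows essentially the same route as the paper: your forced edges and the claim ``consistent iff $\hb\cup\mopartial$ acyclic'' are exactly the paper's \cref{cor:read_coherence_sra} together with its notion of minimal coherence under $\sramm$ (\cref{lem:sra_minimal_coherence}, proved by the same linear-extension argument), and your compaction to one edge per thread from the $\hb$-latest same-location write, located via $\hb$-timestamps and per-(thread,\,location) write lists followed by a single cycle check, is precisely \cref{algo:rf-consistency-sra-no-rmw} and its $O(\NumEvents\cdot\NumThreads)$ analysis. The only nitpick is your opening gloss of $\sramm$-consistency as mere acyclicity of $\hb\cup\mo$ (omitting read-coherence), but your actual argument enforces read-coherence through the forced edges, so no gap results.
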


Next, we show that the problem can be solved just as efficiently for $\wramm$.

\begin{restatable}{theorem}{thmwrarfupper}
\label{thm:wra_rf_upper}
Consistency checking for $\wramm$ can be solved in $O(\NumEvents \cdot \NumThreads)$ time.
\end{restatable}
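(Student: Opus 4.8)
The plan is to follow the \emph{minimal coherence} strategy established for \cref{thm:rc_rf_upper}, exploiting the fact that $\wramm$ is the weakest of the models we consider and, crucially, does not force the modification order to totally order $\hb$-incomparable writes to a common location. First I would make the $\wramm$ axioms explicit: since every access is release/acquire, synchronization coincides with $\rf$, so happens-before is simply $\hb = (\po \cup \rf)^+$, and $\wramm$-consistency of $\ex = \tup{\E,\po,\rf,\mo}$ reduces to (i) acyclicity of $\hb$, together with (ii) the coherence constraints of $\wramm$ relating $\hb$, $\rf$, $\mo$, and the derived $\fr = \rfinv\,;\mo$. The point, in contrast to $\sramm$ or $\scmm$, is that $\wramm$ leaves $\hb$-incomparable same-location writes mutually unordered, so no global ordering decision is ever required.

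Next I would isolate the set of $\mo$ edges that every $\wramm$-consistent witness is \emph{forced} to contain, namely those dictated by the coherence axioms: a same-location pair $w_1 \mo w_2$ is forced whenever $w_1 \hb w_2$, and whenever there is a read $r$ with $w_2 \rf r$ and a distinct same-location write $w_1$ satisfying $w_1 \hb r$ (forcing $w_1 \mo w_2$, since $r$ may not read a value overwritten in its causal past). Let $\mo_{\min}$ denote the relation generated by these forced edges. The key lemma is then: $\expartial = \tup{\E, \po, \rf}$ is $\wramm$-consistent if and only if $\hb$ is acyclic and $\hb \cup \mo_{\min}$ is acyclic. The forward direction is immediate, as every forced edge lies in any consistent $\mo$. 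For the converse, the permissiveness of $\wramm$ does the work: because $\wramm$ imposes no requirement on $\hb$-incomparable same-location writes, any per-location topological completion of $\mo_{\min}$ is already $\wramm$-consistent, so acyclicity of $\hb \cup \mo_{\min}$ is not merely necessary but sufficient.

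Efficiency comes from never materializing $\hb$ or $\mo_{\min}$ explicitly. I would maintain, for each thread, a vector clock of size $\NumThreads$ recording its current $\hb$-knowledge, updated along $\po$ and at each $\rf$ edge; processing events in a $\po\cup\rf$-topological order computes all clocks (and detects an $\hb$-cycle) in $O(\NumEvents \cdot \NumThreads)$ time. Using these clocks together with per-thread, per-location records of the $\hb$-latest write, the forced predecessors of each write and each read are identified in $O(\NumThreads)$ time per event, and acyclicity of $\hb \cup \mo_{\min}$ is certified within the same topological pass rather than by an explicit search over $\Theta(\NumEvents^2)$ edges. Summing over all $\NumEvents$ events yields the claimed $O(\NumEvents \cdot \NumThreads)$ bound.

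The main obstacle I anticipate is the sufficiency direction of the key lemma: one must argue precisely that every coherence axiom of $\wramm$ is discharged by \emph{any} completion of $\mo_{\min}$, i.e. that the edges omitted from $\mo_{\min}$ (those between $\hb$-incomparable writes) are never constrained by $\wramm$. This is exactly the property that fails for $\sramm$ and drives its $\NP$-hardness (\cref{thm:sra_rf_lower}), so the crux is to pin down which coherence conditions $\wramm$ drops relative to $\ramm$ and to show that, under the remaining ones, local forced ordering is the only obligation. A secondary technical point is verifying that the on-the-fly acyclicity check is faithful, i.e. that a cycle routed through $\mo_{\min}$ edges is always detected without enumerating the transitive closure of $\hb \cup \mo_{\min}$.
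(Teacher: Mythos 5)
Your proposal founders on the definition of $\wramm$ itself, not on the sufficiency argument you flag as the main obstacle. In $\wramm$ (\cref{fig:ax:models}) the axioms are (PO-RF), (weak-atomicity), and (weak-Rcoh), and \emph{none of them mentions} $\mo$ or $\fr$; as the paper notes, for $\wramm$ there is no modification order to search for at all. Weak-read-coherence, $\irr(\hbloc;[\WRMW];\hb;\rf^{-1})$, forbids $\rd$ from reading $\wt$ only when some same-location write $\wt'$ is $\hb$-\emph{sandwiched} between them, i.e., both $(\wt,\wt')\in\hb$ and $(\wt',\rd)\in\hb$. Your forced-edge rule --- ``$(\wt_1,\rd)\in\hb$ and $(\wt_2,\rd)\in\rf$ force $(\wt_1,\wt_2)\in\mo$'' --- is the read-coherence inference valid for $\ramm$, $\sramm$ and $\rcmm$ (\cref{lem:read_coherence_atomicity}); $\wramm$ deliberately drops it. Consequently, the direction of your key lemma that you call immediate (consistency implies acyclicity of $\hb\cup\mo_{\min}$) is false: under $\wramm$ nothing forces those edges.

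The paper's own \cref{subfig:wra:notscperloc} is a counterexample: thread $i\in\set{1,2}$ performs $\wt_i(x)$ and then $\rd_i(x)$ reading from the \emph{other} thread's write. No write is $\hb$-sandwiched between a read and its writer, so this execution is $\wramm$-consistent; yet your $\mo_{\min}$ contains both $(\wt_1,\wt_2)$ (forced via $\rd_1$) and $(\wt_2,\wt_1)$ (forced via $\rd_2$), so $\hb\cup\mo_{\min}$ is cyclic and your algorithm wrongly rejects. In effect, your construction (same-location $\hb$ edges between writes, plus read-coherence edges, plus an acyclicity check of $\hb\cup\mo_{\min}$) re-derives \cref{algo:rf-consistency-sra-no-rmw}, the paper's algorithm for RMW-free $\sramm$ --- a strictly stronger model; you also omit the weak-atomicity check, so an execution with two RMWs reading from the same write would be wrongly accepted. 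The paper's algorithm (\cref{algo:rf-consistency-wra}) involves no modification order and no acyclicity check beyond $(\po\cup\rf)$: for each read/RMW $\event$ on location $x$ with writer $\wt$, and each thread $u$, it finds the $\po$-maximal write $\wt_u$ of $u$ on $x$ with $(\wt_u,\event)\in\hb$, and reports a violation iff moreover $(\wt,\wt_u)\in\hb$ with $\wt_u\neq\wt$. Your vector-clock and per-thread last-write machinery for the $O(\NumEvents\cdot\NumThreads)$ bound is essentially the paper's and is fine; it just has to be applied to this purely $\hb$-based test rather than to a fictitious $\mo$.
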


Turning our attention to the $\rlxmm$ fragment of $\rcmm$, we show that the problem admits a truly linear bound (i.e., regardless of the number of threads).

\begin{restatable}{theorem}{thmrlxrfupper}
\label{thm:rlx_rf_upper}
Consistency checking for $\rlxmm$ can be solved in $O(\NumEvents)$ time.
\end{restatable}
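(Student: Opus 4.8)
The plan is to leverage the minimal-coherence framework developed for \cref{thm:rc_rf_upper} and to exploit the drastic structural simplification that occurs in the relaxed fragment. First I would instantiate the $\rcmm$ consistency definition on an input $\expartial$ in which every access is relaxed. Since release and acquire modes are absent, the synchronizes-with relation is empty, and hence happens-before collapses to program order, $\hb = \po$. This is the crucial observation: all cross-thread coherence effects that $\hb$ would otherwise induce disappear, so the consistency conditions reduce to (i)~acyclicity of $\po \cup \rf$ (the no-thin-air requirement, which depends only on the input), (ii)~per-location coherence taken with respect to $\hb = \po$, and (iii)~RMW atomicity. Because a pure-relaxed execution is $\rcmm$-consistent exactly when it is $\rlxmm$-consistent, the correctness of the minimal-coherence characterization is inherited from \cref{thm:rc_rf_upper}; the remaining work is purely a sharper complexity analysis for this special case.

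Next I would re-examine where the $\NumThreads$ factor enters the $O(\NumEvents \cdot \NumThreads)$ bound of \cref{thm:rc_rf_upper}. In the general algorithm this factor comes from maintaining, for each event, a $\NumThreads$-dimensional happens-before view (one timestamp per thread) needed to track synchronization through release/acquire. When $\hb = \po$, the happens-before past of an event is simply the $\po$-prefix within its own thread, so a view degenerates to a single scalar, namely the thread-local position. I would then show that every per-event operation of the minimal-coherence construction --- computing the event's view from its immediate $\po$-predecessor, determining the forced $\mo$ constraints imposed by a write or by the source of a read, and placing a write at its minimal coherent position --- now consults only $O(1)$ information (the immediate $\po$-predecessor and the single $\rf$ edge), rather than scanning all $\NumThreads$ threads. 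Processing the events in one topological sweep of $\po \cup \rf$ (which also detects violations of condition (i)) then yields the $O(\NumEvents)$ bound.

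The forced per-location constraints I would track are the standard coherence edges specialized to $\hb = \po$: a write is $\mo$-after every same-location write that is $\po$-before it; the source of a read is $\mo$-after every same-location write $\po$-before the read and $\mo$-before every such write $\po$-after it; and the sources of $\po$-ordered reads inherit the same ordering. With $\hb = \po$ these are all witnessed by immediate $\po$-adjacency together with the $\rf$ map, so maintaining, per location, the current $\mo$-maximal write and the relevant scalar positions suffices to place each new write minimally in $O(1)$ and to detect any coherence cycle as it arises.

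The step I expect to be the main obstacle is arguing, rigorously and within the $O(1)$-per-event budget, that the coherence and atomicity checks genuinely localize --- i.e., that no step secretly needs the full $\NumThreads$-dimensional information. RMW atomicity is the delicate point: an RMW's read must read the $\mo$-immediate predecessor of the RMW's write, so an RMW simultaneously forces an $\mo$ edge and constrains which write it may read, and this coupling must be enforced and checked locally during the sweep without ever comparing against writes in unrelated threads. Confirming that the minimal-coherence order remains a correct witness under these local-only updates --- and that failure of the local check indeed certifies that no $\mo$ exists --- is where the argument needs the most care.
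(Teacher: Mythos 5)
Your high-level route is the same as the paper's: specialize the $\rcmm$ minimal-coherence framework to the case $\hb=\po$, observe that the relevant $(\rf\cup\po\cup\mopartial)^+$-maximal conflicting write for a read is determined by its immediate $\po$-predecessor (the predecessor itself if it is a write/RMW, and its $\rf$-source if it is a read --- this is exactly the paper's \cref{lem:rlx_last_write}), and run one linear sweep maintaining a last-write pointer per thread and location, with acyclicity checks per location. Up to that point your plan coincides with \cref{algo:rf-consistency-relaxed}, and your reuse of \cref{lem:rc_minimal_coherence} for correctness is legitimate since $\rlxmm$ has the same axioms as $\rcmm$.

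The genuine gap is precisely the step you flag and then defer: how atomicity is handled inside the $O(1)$-per-event budget. You propose to ``enforce and check locally'' that each RMW reads the $\mo$-immediate predecessor of its write, but you give no mechanism, and a direct local check is indeed hopeless: whether some write of an unrelated thread falls $\mo$-between an RMW and its writer is not visible from the immediate $\po$-predecessor, and your plan of placing each write at a ``minimal'' position in a per-location total order is exactly where this coupling bites. The paper's resolution is structural rather than a check. First, weak-atomicity (no two RMWs read from the same write) is verified once, upfront, in $O(\NumEvents)$; this partitions all writes/RMWs into $\rf$-chains whose tops $\TC[\cdot]$ and positions $\PC[\cdot]$ are precomputed in $O(\NumEvents)$. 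Second, every inferred ordering is directed not to the read's writer $\wt_{\mathsf{rf}}$ but to the top $\TC[\wt_{\mathsf{rf}}]$ of its chain, and is inserted only when the $\TC/\PC$ comparison certifies $(\LW_{t,x},\wt_{\mathsf{rf}})\not\in\rf^+$. This makes \cref{item:minimal_coherence_rc2} of minimal coherence hold trivially, and atomicity of some total extension is then guaranteed once and for all by the chain-completion construction in the proof of \cref{lem:rc_minimal_coherence}; no per-event atomicity test is ever performed, and the final algorithm only checks acyclicity of $(\po_x\cup\rf_x\cup\mopartial_x)$ per location. Without this $\rf$-chain device (or an equivalent), your sweep cannot certify that the locally inserted orderings extend to an atomicity-respecting total $\mo$, so the argument as proposed does not go through in the presence of RMWs.
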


Finally, observe that, in contrast to \cref{thm:rlx_rf_upper}, the bounds in \cref{thm:rc_rf_upper}, \cref{thm:sra_rf_normw_upper} and \cref{thm:wra_rf_upper} can become super-linear in the presence of many threads.
It is thus tempting to search for a truly linear-time algorithm for any of $\ramm$, $\wramm$ and (RMW-free) $\sramm$.
Unfortunately, our final result shows that this is unlikely, in all models.

\begin{restatable}{theorem}{thmrflower}
\label{thm:rf_lower}
There is no consistency-checking algorithm for the RMW-free fragments of any of
$\ramm$, $\wramm$, and $\sramm$ that runs in time
$O(\NumEvents^{\omega/2 - \epsilon})$, for any fixed $\epsilon > 0$.
Moreover, there is no combinatorial algorithm for the problem that runs in time $O(\NumEvents^{3/2-\epsilon})$, under the combinatorial BMM hypothesis.
\end{restatable}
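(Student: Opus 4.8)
The plan is to establish both bounds with a single reduction from \emph{triangle detection}: given a tripartite graph on vertex sets $U,V,W$ of size $N$ with edge sets $E_{UV}\subseteq U\times V$, $E_{VW}\subseteq V\times W$, $E_{UW}\subseteq U\times W$, decide whether some triple $(u,v,w)$ forms a triangle. Triangle detection is solvable in $O(N^{\omega})$ time via (Boolean) matrix multiplication and, under the combinatorial BMM hypothesis, admits no combinatorial $O(N^{3-\delta})$ algorithm; the two regimes of the theorem correspond exactly to these two barriers. The crux is to engineer an RMW-free input $\expartial=\tup{\E,\po,\rf}$ with $\NumEvents=\Theta(N^2)$ events whose consistency status encodes the presence of a triangle, so that any consistency algorithm running in time $T(\NumEvents)$ yields a triangle-detection algorithm in time $T(\Theta(N^2))+O(N^2)$. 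Instantiating $T(\NumEvents)=O(\NumEvents^{\omega/2-\epsilon})$ then gives triangle detection in $O(N^{\omega-2\epsilon})$, improving over the $O(N^{\omega})$ matrix-multiplication bound, while a combinatorial $T(\NumEvents)=O(\NumEvents^{3/2-\epsilon})$ gives $O(N^{3-2\epsilon})$, contradicting the combinatorial BMM hypothesis.

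First I would realize the two edge relations $E_{UV}$ and $E_{VW}$ in the \emph{fixed} part of the execution, i.e.\ in $\po\cup\rf$, using $\Theta(1)$ events per edge and one hub event per vertex, so that happens-before exhibits a two-step path from the $u$-event to the $w$-event precisely when $\exists v:\ (u,v)\in E_{UV}\wedge (v,w)\in E_{VW}$; thus $\hb$ restricted to $U$-to-$W$ events computes the Boolean product $E_{UV}\cdot E_{VW}$ through the middle layer. Second, I would encode $E_{UW}$ through coherence: for each $(u,w)\in E_{UW}$ I would install a small read/write gadget (a dedicated location with a forced $\mo$) that creates a from-reads edge $\fr$, and hence an $\eco$ edge, from the $w$-event back to the $u$-event. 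A triangle $(u,v,w)$ then closes a cycle $u \xrightarrow{\hb} w \xrightarrow{\eco} u$ in the combined relation, rendering $\expartial$ inconsistent; conversely, I would argue that when the graph is triangle-free no such cycle arises, so $\expartial$ is consistent, witnessed by an explicit $\mo$ as in \cref{thm:rc_rf_upper}.

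The main obstacle is that $\ramm$-coherence forbids \emph{every} cycle in $\hb\cup\eco$, not only the length-three cycles that correspond to genuine triangles; a naive layered encoding admits spurious cycles such as $u_1\xrightarrow{\hb}w_1\xrightarrow{\eco}u_2\xrightarrow{\hb}w_2\xrightarrow{\eco}u_1$, which would make the instance inconsistent even on a triangle-free graph. Overcoming this requires a construction in which any coherence cycle must pass through a single product witness: I would give the coherence gadget for each pair $(u,w)$ private, per-pair copies of the $w$-read and the $u$-write so that distinct $\eco$ edges cannot be chained, and route happens-before so that, after taking an $\eco$ edge, the only way back to a starting event is along the two-step middle-layer path that produced it. Keeping this within the $\Theta(N^2)$ event budget (one gadget per edge of $E_{UW}$, not per triple) is what makes the count tight.

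A second, more routine obstacle is uniformity across models. Since the gadget is RMW-free and uses only release/acquire accesses, the synchronization (hence $\hb$) and the coherence condition coincide for $\ramm$, $\wramm$, and $\sramm$, so the \emph{same} instance witnesses the bound in all three; I would verify this by checking that the additional constraints distinguishing $\sramm$ and $\wramm$ from $\ramm$ are vacuous on the restricted shape of the gadget. Finally I would confirm that the instance is built in $O(N^2)$ time and has $\NumEvents=\Theta(N^2)$ events, completing the fine-grained transfer in both the algebraic and the combinatorial regimes.
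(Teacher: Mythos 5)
Your overall strategy coincides with the paper's: a fine-grained reduction from triangle detection in which two of the three edge relations are realized inside $\hb$ via per-vertex hub events (the paper's ``junction'' events), the third is realized via an $\rf$ edge on a shared location, each graph edge contributes $O(1)$ events so the transfer of the $O(\NumEvents^{\omega/2-\epsilon})$ and combinatorial $O(\NumEvents^{3/2-\epsilon})$ bounds is the same, and model-uniformity is obtained exactly as in the paper (a triangle violates the weakest model's axiom, weak-read-coherence, hence all three models are inconsistent; triangle-freeness admits an explicit $\mo$ witnessing $\sramm$-consistency, hence all three are consistent).

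There is, however, a genuine gap, and it sits precisely at what you call the main obstacle. You take as your consistency criterion ``no cycle in $\hb\cup\eco$,'' conclude that spurious multi-$\eco$ cycles such as $u_1\xrightarrow{\hb}w_1\xrightarrow{\eco}u_2\xrightarrow{\hb}w_2\xrightarrow{\eco}u_1$ ``would make the instance inconsistent even on a triangle-free graph,'' and propose to engineer them away with per-pair private copies and routed $\hb$. Both halves of this are wrong. First, such cycles cannot be routed away: with per-pair gadgets and hub-routed $\hb$ (which the near-linear size budget forces), a two-gadget cycle is just two intended half-cycles glued at the hubs, built from exactly the same edge types (gadget-to-hub $\hb$, hub-to-hub $\hb$, hub-to-gadget $\hb$, per-gadget $\fr$) as the single triangle cycle you must keep, so no routing can exclude one without excluding the other. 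Second — and this is why the reduction nonetheless works — these cycles are harmless: unlike $\scmm$, none of $\ramm$, $\wramm$, $\sramm$ demands acyclicity of $\hb\cup\eco$. Their axioms forbid only bounded patterns: (weak-)read-coherence forbids a cycle containing exactly one $\fr$ (equivalently, an $\rf$ edge whose writer has an $\hb$-intermediate same-location write), and the only global acyclicity requirement, $\sramm$'s $\acy(\hb\cup\mo)$, involves no $\fr$ at all. Accordingly, the paper's triangle-free direction never reasons about $\hb\cup\eco$ cycles; it exhibits the explicit $\mo$ and checks each axiom separately, showing that a read-coherence violation forces a genuine triangle and that $\hb\cup\mo$ is acyclic by the layered structure of the construction. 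Indeed, on a triangle-free graph containing a suitably oriented six-cycle, the paper's construction (and any construction of your shape) does contain a two-$\fr$ alternating cycle, yet it is $\sramm$-consistent — your criterion would wrongly declare it inconsistent. Until the cycle criterion is replaced by this per-axiom analysis, your converse direction cannot be completed.
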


Here $\omega$ is the matrix multiplication exponent, with currently $\omega\simeq2.37$.
\cref{thm:rf_lower} states that a truly linear-time algorithm for any of these models would bring matrix multiplication in $n^{2+o(1)}$ time, a major breakthrough.
Focusing on combinatorial algorithms~(i.e., excluding algebraic fast-matrix multiplication, which appears natural in our setting), consistency checking for any of these models requires at least $n^{3/2}$ time unless (boolean) matrix multiplication (BMM) is improved below the classic cubic bound (which is considered unlikely, aka the BMM hypothesis~\cite{Williams19}).

Due to space restrictions, we relegate all proofs to\begin{arxiv}~\cref{sec:app_rf}\end{arxiv}\begin{pldi}~to our technical report~\cite{arxiv}\end{pldi}.

\bparagraph{Experiments}
We have implemented our algorithms inside the \Trust model checker and the \celeventester testing tool,
and evaluated their performance on consistency checking for benchmarks utilizing instructions in the $\ramm$ model.
Our results report consistent and often significant speedups that reach $162\times$ for \Trust and $104.2\times$ for \celeventester.

Overall, our efficiency results enable practitioners to perform model checking and testing for $\rcmm$, RMW-free $\sramm$, $\wramm$, and $\rlxmm$ more efficiently, and apply these techniques to larger systems.
On the other hand, our hardness result for $\sramm$ indicates that, akin to $\scmm$,  
performing consistency checking for $\sramm$ efficiently requires developing 
practically oriented heuristics that work well in the common cases.
Finally, our super-linear lower bound for all models except $\rlxmm$ indicates that further improvements over our $O(\NumEvents\cdot \NumThreads)$ bounds will likely be highly non-trivial.

\section{Axiomatic Concurrency Semantics}\label{sec:models}

In this section we introduce the C/C++ concurrency semantics we consider in this work, along with  the RC20 model and its variants~\cite{Margalit:2021,Lahav:2022}.

\bparagraph{Syntax} 
C/C++ defines a shared memory concurrency model using different kinds of concurrency primitives. 
In addition to plain (or non-atomic) load and store accesses, C/C++ provides atomic accesses for load, store, atomic read-modify-write (RMW -- such as atomic increment), and fence operations.
We only consider atomic accesses here.
An atomic access is parameterized by a memory mode, among relaxed ($\MOrlx$), acquire ($\MOacq$), release ($\MOrel$), acquire-release ($\MOacqrel$), and sequential-consistency ($\MOsc$). 
The memory order for a read, write, RMW, and fence access is one of $\set{\MOrlx,\MOacq, \MOsc}$, 
$\set{\MOrlx, \MOrel, \MOsc}$, $\set{\MOrlx, \MOacq, \MOrel, \MOacqrel, \MOsc}$, 
and $\set{\MOacq, \MOrel, \MOacqrel, \MOsc}$, respectively.
These accesses result in different types of events during execution. 
In this paper we consider the models which are based on non-$\scmm$ primitives. 
Nevertheless, $\rcmm$ defines $\scmm$ fences using the release-acquire primitives \cite{Lahav:2022}.   
The memory modes are partially ordered on increasing strength of synchronization according to the lattice $\MOrlx \sqsubset \set{\MOacq,\MOrel} \sqsubset \MOacqrel$. 
An access is \emph{acquire} (\emph{release}) if its order is $\MOacq$ ($\MOrel$), or stronger. 



\subsection{Executions}\label{subsec:executions}

The axiomatic concurrency models are defined with respect to the executions they allow.
Hence, a program can be represented as a set of executions.
In turn, an execution is defined by a set of events that are generated from shared memory accesses or fences, and relations between these events. 

\bparagraph{Events} 
An event is a tuple $\tup{\id,\tid,\llab}$ where $\id$, $\tid$, $\llab$ denote a unique identifier, thread identifier, and label of the event. 
The label $\llab=\tup{\op, \ord, \lloc}$ is a tuple where $\op$ denotes the corresponding memory access or fence operation and $\ord$ denotes the 
memory mode. 
For memory accesses, $\lloc$ denotes its memory location, while in the case of fences, we have $\lloc = \bottom$.
For the purpose of the reads-from consistency problem we consider in this paper, we 
omit the \emph{values} read or written in memory access events.
We write $\wt(t, x)$/$\rd(t,x)$/$\ud(t,x)$ to denote a write/read/read-modify-write event in thread $t$ on location $x$,
and simply write $\event(x)$ to denote an event for which the thread is implied or not relevant.
As a matter of convention, we omit mentioning the memory order throughout the paper, as it will either be clear from the context or not relevant.

The set of read, write, atomic update, and fence events are $\R$, $\W$, $\Upd$ and $\F$ respectively, and are generated 
from the executions of load, store, atomic load store, fence accesses respectively. 
As we only deal with executions (as opposed to program source), we use $\Upd$ to denote a successful read-modify-write operation.
Failed read-modify-write operations simply result in read accesses.
We refine the set of events in various ways. 
For instance, $\E^{\sqsupseteq \MOrel}$ denotes the set of 
events with memory order that is at least as strong as $\MOrel$. 
For a set of events $\E$, we write $\E.locs$, $\locx{\E}$, and $\E.tids$ to denote the set of distinct locations accessed by events in $\E$, the subset of events in $\E$ that access memory location $x$, and the different threads participating in $\E$. 

\bparagraph{Notation on relations}
Consider a binary relation $S$ over a set of events $\E$.
The reflexive, transitive, reflexive-transitive closures, and inverse relations of $S$
are denoted as $S^?$, $S^+$, $S^*$, $S^{-1}$ respectively. 
The relation $S$ is acyclic if $S^+$ is irreflexive. 
We write $\irr(S)$ and $\acy(S)$ to denote that  relation $S$ irreflexive and acyclic respectively. 
Given two relations $S_1$ and $S_2$, we denote their composition by $S_1;S_2$. 
$[A]$ denotes the identity relation on a set $A$, i.e. 
$[A](x,y) \triangleq x = y \land x \in A$.
For a given memory location $x$, we let $S_x=[\locx{\E}];S;[\locx{\E}]$ be the restriction of $S$ to all events of $\E$ on $x$.

\bparagraph{Candidate executions and relations}
An execution (also referred to as candidate execution~\cite{Batty:2011} or execution graph~\cite{Lahav:2017}) 
is a tuple $\ex = \tup{\E, \po, \rf, \mo}$  where $\ex.\E$ is a set of events and $\ex.\po$, $\ex.\rf$, $\ex.\mo$ are binary relations over $\ex.\E$. 
In particular, the \emph{program order} $\po$ is a partial order that enforces a total order on events of the same thread. 
The \emph{reads-from relation} $\rf\subseteq (\WRMW) \times (\RRMW)$ associates write/RMW events $\event_1$ to read/RMW events $\event_2$ reading from $\event_1$.
Naturally, we require that $\event_1.\lloc=\event_2.\lloc$, and that $\rfinv$ is a function, i.e., every read/RMW event has a unique writer.
The \emph{modification order} $\mo\subseteq (\WRMW)\times (\WRMW)$ is the union of modification orders $\mo_x$, where each $\mo_x$ is a total order over $(\locx{\W}\cup \locx{\Upd})$.

We frequently also use some derived relations (see \cref{fig:ax:models}).
The \emph{from-read relation} $\fr\subseteq (\RRMW)\times (\WRMW)$ relates every read or RMW event to all the write or RMW events that are $\mo$-after its own writer.
The \emph{synchronizes-with relation} $\sw\subseteq \E^{\sqsupseteq \MOrel}\times \E^{\sqsupseteq \MOacq}$ relates release and acquire events, for instance, when an acquire read event reads from a release write. 
Fence instructions combined with relaxed accesses also participate in $\sw$. 
The \emph{happens-before relation} $\hb$ is the transitive closure of $\po$ and $\sw$.
We also project $\hb$ to individual locations, denoted as $\hbloc$.

\begin{figure}
\scalebox{0.9}{
\begin{tikzpicture}
   \node[draw=black, dashed,thick,rounded corners=.3cm, text width=7cm,align=left, inner sep=1pt] at (0, 0) {
    \small
    \begin{align*}
    \begin{array}{lll}
        \fr&{\triangleq}& (\rf^{-1};\mo)\setminus [\id] \\
        \sw&{\triangleq}& [\E^{\sqsupseteq \MOrel}];\!([\F];\po)^?;\rf^+;\!(\po;[\F])^?;\![\E^{\sqsupseteq \MOacq}] \\
        \hb&{\triangleq}&(\po\cup \sw)^+ \quad\quad\quad \hbloc\quad{\triangleq}\quad\bigcup_x \hb_x\\
    \end{array}
    \end{align*}
    };
   \node[draw=black, dashed,thick,rounded corners=.3cm, text width=7cm,align=left, inner sep=1pt] at (0,-2.4) {
    \small
    \begin{align*}
    \begin{array}{lr}
        \irr(\mo;\rf^?;\hb^?) & \text{(Wcoh)}\\
        \acy(\hb \cup \mo) & \text{(strong-Wcoh)} \\
        \irr(\fr;\rf^?;\hb) & \text{(Rcoh)} \\
        \irr(\hbloc;[\WRMW];\hb;\rf^{-1})\!\! & \text{(weak-Rcoh)} \\
        \irr(\fr;\mo) & \text{(atomicity)} \\
        \forall u_1,\! u_2\!\in\! \Upd, \rf(u_1)\!\!\neq \! \rf(u_2) & \text{(weak-atomicity)} \\
        \acy(\po \cup \rf) & \text{(PO-RF)}
    \end{array}
    \end{align*}
    };
    \node[draw=black, dashed,thick,rounded corners=.3cm, text width=8.2cm,align=left, inner sep=1pt,minimum height=4.5cm] at (7.8,-1.6) {
    \small
    \renewcommand{\arraystretch}{1.5}
    \begin{align*}
    \begin{array}{lr}
        \acy(\po \cup \rf \cup \mo \cup \fr)& (\scmm) \\
        \text{(PO-RF)} \land \text{(Wcoh)} \land \text{(Rcoh)} \land \text{(atomicity)} & (\ramm)^\dagger \\
        \text{(PO-RF)} \land \text{(strong-Wcoh)} \land \text{(Rcoh)} \land \text{(atomicity)} & (\sramm) \\
        \text{(PO-RF)} \land \text{(weak-Rcoh)} \land \text{(weak-atomicity)} & (\wramm) \\
        \text{(PO-RF)} \land \text{(Wcoh)} \land \text{(Rcoh)} \land \text{(atomicity)}&(\rcmm) \\
        \text{(PO-RF)} \land \text{(Wcoh)} \land \text{(Rcoh)} \land \text{(atomicity)} & (\rlxmm)^\dagger
    \end{array}
    \end{align*}
    \;\;$^\dagger$ These are fragments of $\rcmm$.
    };
\end{tikzpicture}
}
    \caption{Relations, Axioms, and Consistency Models on C11 Concurrency.}
    \label{fig:ax:models}
\end{figure}


\subsection{Consistency Axioms}\label{subsec:consistency_axioms}

Consistency axioms characterize different aspects or constraints of a memory model.
We broadly classify these constraints as 
coherence, atomicity, global ordering, and causality cycles.  
Different interpretations of these constraints give rise to different consistency models as shown in \cref{fig:ax:models}.

\bparagraph{Coherence}
In an execution, \emph{coherence} enforces `SC-per-location': the memory accesses per memory locations are totally ordered. 
\emph{Write-coherence} enforces that $\mo$ agrees with $\hb$. 
A stronger variant is \emph{strong-write-coherence}, which requires that this condition holds transitively.
\emph{Read coherence} enforces that a read $\rd(x)$ cannot read from a write $\wt(x)$ if 
there is an \emph{intermediate} write $\wt'(x)$ that happens-before $\rd(x)$, 
i.e. $\hb(\wt'(x),\rd(x))$ holds.
In the vanilla read-coherence, the notion of `intermediate' relates to $\mo$,
i.e., we have $(\wt(x),\wt'(x))\in \mo$,
while in \emph{weak-read-coherence}~\cite{Lahav:2022}, 
`intermediate' relates to $\hb$, i.e., we have $(\wt(x),\wt'(x))\in\hb$.

\bparagraph{Atomicity}
The property ensures that (successful) RMW accesses indeed update the memory locations atomically. 
Following~\cite{Lahav:2022}, we consider two variants. 
\emph{Atomicity} ensures that no intermediate write event 
on the same location takes place between an RMW and its writer.
\emph{Weak-atomicity} simply prohibits two RMW events to have the same writer.


\bparagraph{Causality cycles} 
A causality cycle arises in the presence of primitives that have weaker behaviors than release-acquire accesses. 
Such a cycle consists of $\po$ and $\rf$ orderings and may result in 
`out-of-thin-air' behavior in certain programs. 
To avoid such `out-of-thin-air' behavior, many consistency models 
explicitly disallow such cycles~\cite{Lahav:2022,Luo:2021}.
In the absence of $\MOrlx$ accesses, the PO-RF axiom coincides with the requirement for $\hb$ acyclicity.


\subsection{Axiomatic Consistency Models and Consistency Checking} 


\begin{figure}
\centering
\def\ystep{0.5}
\begin{subfigure}{0.25\textwidth}
\centering
\scalebox{0.9}{
\begin{tikzpicture}[yscale=1]
  \node (t11) at (0,0*\ystep)  {$\wt(x)$};
  \node (t12) at (0,-1.5*\ystep) {$\wt(x)$};
  \node (t13) at (0,-3*\ystep) {$\wt(y)$};
  \node (t14) at (0,-4.5*\ystep) {$\rd(y)$};
  \node (t21) at (2.5,0*\ystep) {$\wt(y)$};
  \node (t22) at (2.5,-4.5*\ystep) {$\rd(x)$};
  \draw[po] (t11) to node[left]{$\po$} (t12);
  \draw[po] (t12) to (t13);
  \draw[po] (t13) to (t14);
  \draw[po] (t21) to (t22);
  \draw[rf,bend left=0] (t11) to node[above,pos=0.2, sloped]{$\rf$} (t22);
  \draw[fr,bend left=0] (t22) to node[below,pos=0.2, sloped]{$\fr$} (t12);
  \draw[mo,bend right=0] (t13) to node[above,pos=0.8, sloped]{$\mo$} (t21);
  \draw[rf,bend right=0] (t21) to node[below,pos=0.8, sloped]{$\rf$} (t14);
\end{tikzpicture}
}
\caption{
$\sramm$
}
\label{subfig:sra:nosc}
\end{subfigure}
\hfill
\begin{subfigure}{0.25\textwidth}
\centering
\scalebox{0.9}{
\begin{tikzpicture}[yscale=1]
  \node (t11) at (0,0*\ystep)  {$\wt(y)$};
  \node (t12) at (0,-2.5*\ystep) {$\wt(x)$};
  \node (t13) at (0,-4.5*\ystep) {$\rd(x)$};
  \node (t21) at (2.5,0*\ystep) {$\wt(x)$};
  \node (t22) at (2.5,-2.5*\ystep) {$\wt(y)$};
    \node (t23) at (2.5,-4.5*\ystep) {$\rd(y)$};
  \draw[po] (t11) to (t12);
  \draw[po] (t12) to (t13);
  \draw[po] (t21) to (t22);
  \draw[po] (t22) to (t23);
  \draw[rf,bend left=0] (t11) to node[above,pos=0.9, sloped]{$\rf$} (t23);
  \draw[rf,bend right=0] (t21) to node[above,pos=0.9, sloped]{$\rf$} (t13);
  \draw[mo,bend left=10] (t12) to node[above,pos=0.8, sloped]{$\mo$} (t21);
  \draw[mo,bend right=10] (t22) to node[above,pos=0.8, sloped]{$\mo$} (t11);
\end{tikzpicture} 
}  
\caption{
$\ramm$
}
\label{subfig:ra:nosra}
\end{subfigure}
\hfill
\begin{subfigure}{0.23\textwidth}
\centering
\scalebox{0.9}{
\begin{tikzpicture}[yscale=1]
  \node (t11) at (0,0*\ystep)  {$\wt(x)$};
  \node (t12) at (0,-4.5*\ystep) {$\rd(x)$};
  \node (t21) at (2.5,0*\ystep) {$\wt(x)$};
  \node (t22) at (2.5,-4.5*\ystep) {$\rd(x)$};
  \draw[po] (t11) to (t12);
  \draw[po] (t21) to (t22);
  \draw[rf,bend left=0] (t11) to node[above,pos=0.8, sloped]{$\rf$} (t22);
  \draw[rf,bend right=0] (t21) to node[above,pos=0.8, sloped]{$\rf$} (t12);
\end{tikzpicture}   
} 
\caption{
$\wramm$
}
\label{subfig:wra:notscperloc}
\end{subfigure}
\hfill
\begin{subfigure}{0.23\textwidth}
\centering
\scalebox{0.9}{
\begin{tikzpicture}[yscale=1]
  \node (t11) at (0,0*\ystep)  {$\wt(x)$};
  \node (t12) at (0,-2.25*\ystep) {$\wt(x)$};
  \node (t13) at (0,-4.5*\ystep) {$\wt(y)$};
  \node (t21) at (2.5,0,0*\ystep) {$\rd(y)$};
  \node (t22) at (2.5,-4.5*\ystep) {$\rd(x)$};
  \draw[po] (t11) to (t12);
  \draw[po] (t12) to (t13);
  \draw[po] (t21) to (t22);
  \draw[rf,bend left=0] (t11) to node[above,pos=0.1, sloped]{$\rf$} (t22);
  \draw[rf,bend right=0] (t13) to node[above,pos=0.1, sloped]{$\rf$} (t21);
  \draw[fr,bend right=0] (t22) to node[below,pos=0.5, sloped]{$\fr$} (t12);
\end{tikzpicture}      
}
\caption{
$\rlxmm$
}
\label{subfig:relaxed}
\end{subfigure}
\caption{
Executions consistent in various memory models.
$\mo$ edges  that go along $(\po\cup \rf)$ are not shown.
}
\label{fig:mmex}
\end{figure}

We can now present the memory models we consider in this work.
See \cref{fig:ax:models} for a summary.

\bparagraph{Sequential consistency}
Sequential consistency ($\scmm$) enforces a global order on all memory accesses. 
This is a stronger constraint than coherence, which orders same-location memory accesses. 
In addition, $\scmm$ also guarantees atomicity. 

\bparagraph{Release-Acquire and variants}
The release-acquire ($\ramm$) memory model is weaker than SC, 
and is arguably the most well-understood fragment of C11.
At the same time, $\ramm$ enables high-performance implementations of fundamental concurrency algorithms~\cite{Lahav:2016,desnoyers:11}.
Broadly, under the release-acquire semantics (including other related variants),
each read-from ordering establishes a synchronization. 
In this case $\hb$ reduces to $\hb \triangleq(\po \cup \rf)^+$.
Following~\cite{Lahav:2022}, we consider three variants of 
release-acquire models: $\ramm$, strong $\ramm$ ($\sramm$), and weak $\ramm$ ($\wramm$). 
These models coincide with standard variants of causal consistency~\cite{Burckhardt:2014, bouajjani:2017, Lloyd:2011}, which is a ubiquitous consistency model relevant also in other domains such as distributed systems.

$\sramm$ enforces a stronger coherence guarantee (namely strong-write-coherence)
on write accesses compared to $\ramm$. 
$\wramm$ does not place any restrictions on the $\mo$ ordering between same-location writes. 
Instead, the only orderings considered between same-location writes are through the $[\W];\hbloc;[\W]$ relation. 
Thus, $\wramm$ provides weaker constraints for coherence and atomicity.


\bparagraph{RC20}
The recently introduced $\rcmm$ model~\cite{Margalit:2021} defines a rich fragment of the C11 model
consisting of acquire/release and relaxed accesses.
Despite the absence of $\scmm$ accesses, $\rcmm$ can express many practical synchronization algorithms, and can simulate $\scmm$ fences.


\bparagraph{Relaxed}
Finally, the relaxed fragment of $\rcmm$ contains only $\MOrlx$ accesses, resulting in $\hb = \po$.

\bparagraph{Comparison between memory models}
The above models can be partially ordered according to the behaviors (executions) they allow as $\scmm \mmorder \sramm \mmorder \ramm \mmorder \set{\wramm, \set{  \rcmm  \mmorder \rlxmm}}$, with models towards the right
allowing more behaviors.
All models are weaker than $\scmm$. 
$\rlxmm$ is weaker than $\ramm$ but incomparable with $\wramm$.
In particular, the lack of synchronization across $\rf$ in $\rlxmm$ makes 
$\hb$ weaker in $\rlxmm$ compared to $\wramm$.
On the other hand, $\wramm$ allows extra behaviors over $\rlxmm$ because it does not enforce write-coherence.
Finally, $\rcmm$ can be viewed as a combination of $\ramm$ and $\rlxmm$, where 
fences may add synchronization between relaxed accesses.
See \cref{fig:mmex} for an illustration.

\bparagraph{Extensions of the models with non-atomics}
For ease of presentation, we do not explicitly handle non-atomic accesses.
The above memory models can be straightforwardly extended to include non-atomics
with ``catch-fire'' semantics, similarly to previous works~\cite{Lahav:2022}.
Intuitively, non-atomic accesses on any given location 
must always be $\hb$-ordered, as otherwise this implies a data race, leading to undefined behavior~\cite{cstandard,cppstandard}.

\bparagraph{The reads-from consistency problem}
An execution $\ex = \tup{\E, \po, \rf, \mo}$ is \emph{consistent} in a memory model $\MemModel$, written $\ex\models \MemModel$, if it satisfies the axioms of the model.
A \emph{partial execution} $\expartial=\tup{\E, \po, \rf}$ is an abstraction of executions without the modification order.
We call $\expartial$ \emph{consistent} in $\MemModel$, written similarly as $\expartial\models \MemModel$, if there exists an $\mo$ such that $\ex\models \MemModel$, where $\ex = \tup{\E, \po, \rf, \mo}$.
Thus the problem of \emph{reads-from consistency checking} (or simply consistency checking, from now on) is to find an $\mo$ that turns $\expartial$ consistent\footnote{Except for $\wramm$, the axioms of which do not involve $\mo$.}.


\section{Auxiliary Functions, Data Structures and Observations}\label{sec:helper_functions}

Our consistency checking algorithms rely on some common notation and computations.
To avoid repetition, we present these here as auxiliary functions, while we refer to \cref{fig:helpers} for examples.

\bparagraph{Happens-before computation}
One common component in most of our algorithms is the computation of the $\hb$-timestamp $\HB_{\event} : \E.tids \to \nats_{\geq 0}$ 
of each event $e$, declaratively defined as
\begin{align*}
\HB_e(t) = |\setpred{f}{f.\tid = t \land (f, e) \in \hb^?}|
\end{align*}
That is, $\HB_e(t)$ points to the last event of thread $t$ that is $\hb$-ordered before (and including) $e$.
The computation of all $\HB_e$ can be computed by a relatively straightforward algorithm (see e.g.,~\cite{Luo:2021}).
We will thus take the following proposition for granted in this work.

\begin{proposition}\label{prop:HB-computation}
The happens-before relation can be computed in $O(\NumEvents \cdot \NumThreads)$ time.
\end{proposition}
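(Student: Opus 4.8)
The plan is to compute, in essentially one pass over the events, a vector-clock representation of $\hb$: for each event $e$ I would maintain the timestamp $\HB_e$ directly, so that its $t$-th component already equals $\HB_e(t)$ from the declarative definition. Correctness hinges on the fact that $\hb = (\po \cup \sw)^+$ is the transitive closure of a relation whose restriction to each thread is a total order; hence reachability along $\hb$ is faithfully captured by pointwise maxima of $\NumThreads$-dimensional integer vectors, one coordinate per thread.

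First I would fix a linearization $\prec$ of the events that is a topological order of $\po \cup \rf$ (and, for the fence-enriched models, of the slightly larger relation underlying $\sw$). Such an order exists because the PO-RF axiom guarantees acyclicity of $\po \cup \rf$, and it can be produced by a standard topological sort: the thread-order cover relation of $\po$ and the relation $\rf$ each contribute only $O(\NumEvents)$ edges (one $\po$-cover edge and at most one $\rf$ edge per event, since $\rfinv$ is a function), so the sort costs $O(\NumEvents)$. Every $\sw$ edge points forward along $\po$ and $\rf$, so $\prec$ also linearizes $\sw$ and thus $\hb$; processing events along $\prec$ therefore guarantees that when $e$ is handled, the timestamps of all its $\hb$-predecessors are already available.

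Second, I would establish the recurrence defining $\HB_e$. Writing $t = e.\tid$ and letting $p$ be the immediate $\po$-predecessor of $e$, the timestamp is the pointwise maximum $\mx$ of $\HB_p$ with the timestamps contributed by the synchronizes-with edges into $e$, after which the $t$-component is incremented to account for $e$ itself. In the plain release-acquire case every $\rf$ edge synchronizes, so the only extra contribution is $\HB_{\rfinv(e)}$ when $e$ is a read or RMW, giving $\HB_e = \bigl(\HB_p \mx \HB_{\rfinv(e)}\bigr)[t \mapsto \HB_p(t)+1]$ with exactly $O(1)$ joins per event. I would then prove by induction along $\prec$ that this recurrence computes the declarative $\HB_e$, using that an event of thread $s$ is $\hb$-before $e$ iff it is $\hb$-before some immediate $(\po\cup\sw)$-predecessor of $e$ (or equals $p$).

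The main obstacle is the fence-mediated part of $\sw$, namely the patterns $([\F];\po)^?$ and $(\po;[\F])^?$ together with the release sequence $\rf^+$: a single $\rf$ edge may synchronize only in conjunction with a preceding release fence and/or a following acquire fence, and the $\rf^+$ chains run through RMW events. Enumerating the $\sw$ pairs directly could be quadratic, so the key is to keep the computation incremental. Each thread would maintain a constant number of auxiliary views (a current release view, refreshed at release fences, and the view carried by the most recent qualifying write), so that the contribution of each synchronizing $\rf$ edge is obtained as a single $O(\NumThreads)$ join, while each RMW both absorbs and forwards the release view along the $\rf^+$ chain in $O(\NumThreads)$ time. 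I would argue that these views exactly capture the transitive effect of the relational definition, so every event still performs only $O(1)$ joins of cost $O(\NumThreads)$. Summing over the $\NumEvents$ events yields the claimed $O(\NumEvents \cdot \NumThreads)$ bound; this matches the computation of~\cite{Luo:2021}, which is why we take the proposition for granted.
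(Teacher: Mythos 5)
The paper never actually proves this proposition: it is explicitly taken for granted, with a pointer to the ``relatively straightforward'' algorithm of \cite{Luo:2021}. Your proposal is a correct rendering of exactly that standard vector-clock computation---process events in a topological order of $\po\cup\rf$ (which exists by the acyclicity check performed before $\HB$ is ever computed), join $\NumThreads$-dimensional timestamps across $\po$ edges and synchronizing $\rf$ edges, and handle the fence-mediated $\sw$ patterns with a constant number of per-thread release/acquire views at $O(\NumThreads)$ cost per event---so it supplies the argument the paper defers to, rather than taking a different route from it.
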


\bparagraph{Last write and last read}
Given a thread $t$, location $x$, and index $c \in \nats_{\geq 0}$, we define
\begin{align*}
\glw(t, x, c) = 
\begin{cases} \event & 
\begin{array}{l}\text{if }  \event  \text{ is the last event such that, } \event \in \locx{\W} \cup \locx{\Upd}, \\
		\event.\tid = t \text{ and } |\setpred{g}{(g, \event) \in \po^?}| \leq c \\
\end{array} \\
		\bot & \text{ if no such event exists} 
\end{cases}
\end{align*}
\begin{align*}
\glr(t, x, c) = 
\begin{cases} \event & 
\begin{array}{l}\text{if }  \event  \text{ is the last event such that, } \event \in \locx{\R} \cup \locx{\Upd}, \\
		\event.\tid = t \text{ and } |\setpred{g}{(g, \event) \in \po^?}| \leq c \\
\end{array} \\
		\bot & \text{ if no such event exists} 
\end{cases}
\end{align*}
In words, $\glw(t, x, c)$ returns the latest $\po$-predecessor $\wt(t,x)$ or $\ud(t,x)$ of the $c$-th event of thread $t$ (similarly for $\glr(t, x, c)$).
When our consistency algorithms process a read/RMW event $\event(u,x)$, they query for $\glw(t, x, c)$ and $\glr(t, x, c)$ on each thread $t$, 
where $c=\HB_e(t)$.
We call $u$ the \emph{observer thread}.
Our efficient handling of such queries is based on the insight that, along subsequent queries from the same observer thread, $c$ is monotonically increasing ($\HB$ timestamps are monotonic along $\po$-ordered events).
We develop a simple data structure for handling such queries as follows.
For each thread $t$, memory location $x$, and observer thread $u$,
we maintain lists $\view{\WtLst}{t,x}{u}$ and $\view{\RdLst}{t,x}{u}$,
each containing the sequence of write/RMW events and read/RMW events performed by $t$ on $x$,
together with their thread-local indices.
The parameterization by $u$ ensures that $u$ observes its own local
version of this list.
Answering a query $\glw(t, x, c)$ consists of iterating over $\view{\WtLst}{t,x}{u}$ until the correct event is identified.
Subsequent queries continue the iteration from the last returned position.
The total cost of traversing all these lists is $O(\NumEvents\cdot\NumThreads)$ (each event appears in $\NumThreads$ lists, one per observer thread).
In pseudocode descriptions, we will call $\apicall{\view{\WtLst}{t,x}{u}}{get}{c}$ (resp., $\apicall{\view{\RdLst}{t,x}{u}}{get}{c}$)
to access the event $\event = \glw(t, x, c)$ (resp., $\event = \glr(t, x, c)$).
This implementation of $\glw$ and $\glr$ is novel compared to prior works (e.g.,~\cite{Luo:2021}), and crucial for obtaining the linear-time bounds developed in our work.


\begin{figure}
\centering
\begin{subfigure}{0.58\textwidth}
\scalebox{0.9}{
\begin{tikzpicture}[yscale=0.6]

\node (t1) at (0,1)  {$t_1$};
\node (t2) at (2.5,1)  {$t_2$};
\node (t3) at (5,1)  {$t_3$};
\node (t4) at (7,1)  {$t_4$};
  \node (t11) at (0,0)  {$e_1: \ \wt(x)$};
  \node (t12) at (0,-2) {$e_2: \ \rd(x)$};
  \node (t13) at (0,-4) {$e_3: \ \wt^\MOrel(y)$};
  \node (t41) at (7,0) {$e_6: \ \rd^\MOacq(y)$};
  \node (t42) at (7,-2) {$e_7: \ \rd(x)$};
  \node (t21) at (2.5,-3) {$e_4: \Upd^\MOrlx(y)$};
  \node (t31) at (5,-1) {$e_5:\Upd^\MOrlx(y)$};
 

  \draw[po] (t11) to node[left]{$\po$} (t12);
  \draw[po] (t12) to (t13);
  \draw[po] (t41) to (t42);
  \draw[rf,bend left=20] (t13) to node[above]{$\rf$} (t21);
  \draw[rf,bend left=20] (t21.north) to node[above]{$\rf$} (t31);
  \draw[rf,bend left=20] (t31.north) to node[above]{$\rf$} (t41);
%
\end{tikzpicture}
}
\end{subfigure}
\hfill
\begin{subfigure}{0.4\textwidth}
\begin{tikzpicture}
\node[draw=black, dashed,rounded corners=.3cm, text width=5cm,align=left, inner sep=5pt,minimum height=4cm] at (7.8,-1.6) {
    \small
    $\HB_{e_{7}}(t_1) = 3, \ \HB_{e_{7}}(t_4) = 2$
\\[1.5pt]
$\HB_{e_{7}}(t_2) = \HB_{e_{7}}(t_3) = 0$
\\[3pt]
$\glw(t_1, x, 2) =  e_1$
\\[1.5pt] 
$\glw(t_2, x, 1) =  \bot$
\\[1.5pt] 
$\glw(t_2, y, 1) =  e_4$
\\[1.5pt] 
$\glr(t_1, x, 2) =  e_2$
\\[3pt]
$\TC[e_4] = \TC[e_5] = e_3$
\\[1.5pt]
$\PC[e_3]=0$, $\PC[e_4]=1$, $\PC[e_5] = 2$
    };
\end{tikzpicture}
\end{subfigure}
\caption{
Example of a partial execution (left) and its auxiliary functions (right).
}
\label{fig:helpers}
\end{figure}

\bparagraph{Top of, and position in  $\rf$-chain}
All memory models we consider satisfy weak-atomicity (atomicity implies weak atomicity),
i.e., no two RMW events can have the same writer.
This implies that all write and RMW events are arranged in disjoint \emph{$\rf$-chains},
where a chain is a maximal sequence of events $\event_0, \event_1, \ldots, \event_{\ell}$ ($\ell \geq 0$), 
such that
(i)~$\event_0\in \W$ and $\event_1, \ldots, \event_\ell\in \Upd$, and
(ii)~$\rfinv(\event_{i})=\event_{i-1}$ for each $i\geq 1$.
In words, we have a chain of events connected by $\rf$, starting with the top write event $e_0$, and (optionally) continuing with a maximal sequence of RMW events that read from this chain.
Given an event $\event\in  (\WRMW)$, we often refer to the $\rf$-chain that contains $\event$.
Specifically, the top of the chain $\TC[\event]$
is the unique event $f$ such that $(f, \event) \in \rf^*$ and $f.\op = \wt$.
The position $\PC[\event]$ of $e$ in its $\rf$-chain is
$|\setpred{f}{(f, \event) \in \rf^+}|$.
Both $\TC$ and $\PC$ can be computed in $O(\NumEvents)$ time for all events.

\begin{wrapfigure}{r}{0.39\textwidth}
\scalebox{1}{
\begin{tikzpicture}

\node[] (e1) at (0, 0) {$\TC[\wt]$};
\node[] (f1) at (1.5, 0) {$\ud$};
\node[] (f3) at (3.5, 0) {$\wt$};
\node[] (e3) at (4.5, 0) {$\rd$};
\node[] (e2) at (2.25, -0.9) {$\wt'$};
\node[] (condition) at (2.25, 0.9) {$(\wt',\wt)\not \in \rf^+$};

\draw [rf] (e1) -- (f1) node [midway, above] {$\rf$};
\draw [rf, dotted] (f1) -- (f3) node [midway, above] {$\rf^*$};
\draw [rf] (f3) -- (e3) node [midway, above] {$\rf$};
\draw[->, thick] (e2)  [bend right=10] to  node[below, align=left, sloped]  {$\rf^?;\hb$}         (e3);
\draw[mo] (e2) [bend left=10] to node[below, align=left, sloped]  {$\mo$}         (e1);
\end{tikzpicture}
}
\caption{
An $\mo$ ordering implied by read-coherence, write-coherence and atomicity.
}
\label{fig:Chain-Rule}
\vspace{-0.2cm}
\end{wrapfigure}

\bparagraph{Conflicting triplets}
A \emph{conflicting triplet} (or just \emph{triplet}, for short) is a triplet of distinct events $(\event_1, \event_2, \event_3)$ such that
(i)~all events access the same location $x$,
(ii)~$\event_1, \event_3 \in (\locx{\W}\cup \locx{\Upd})$ and $\event_2\in (\locx{\R}\cup \locx{\Upd})$, and
(iii)~$\rfinv(\event_2)=\event_1$.

Finally, we state a simple lemma that is instrumental throughout the paper.
This lemma identifies certain $\mo$ orderings 
implied by the basic axioms of read-coherence, write-coherence and atomicity, and thus applies to all models except $\wramm$.
\cref{fig:Chain-Rule} provides an illustration.

\begin{restatable}{lemma}{lemreadcoherenceatomicity}
\label{lem:read_coherence_atomicity}
Let $\ex = \tup{\E, \po, \rf, \mo}$ be an execution that satisfies read-coherence, write-coherence and atomicity.
Let $(\wt, \rd, \wt')$ be a triplet.
If $(\wt', \rd)\in \rf^?;\hb$ and 
$(\wt', \wt)\not \in \rf^+$, then
$(\wt', \TC[\wt])\in \mo$.
\end{restatable}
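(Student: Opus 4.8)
The plan is to exploit that $\mo_x$ is a total order on $\locx{\W}\cup\locx{\Upd}$: since both $\wt'$ and $\TC[\wt]$ are write/RMW events on the common location $x$, trichotomy gives exactly one of $\wt'=\TC[\wt]$, $(\TC[\wt],\wt')\in\mo$, or $(\wt',\TC[\wt])\in\mo$, so it suffices to rule out the first two. I would argue by contradiction in each case, after first pinning down the $\mo$-shape of an $\rf$-chain, since that shape is what makes the case analysis go through.

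The preliminary step I would carry out is to show that every $\rf$-chain $\event_0\xrightarrow{\rf}\dots\xrightarrow{\rf}\event_\ell$ is $\mo$-increasing and $\mo$-consecutive, i.e. $(\event_{i-1},\event_i)\in\mo$ and no write/RMW lies strictly $\mo$-between $\event_{i-1}$ and $\event_i$. The increasing part follows from write-coherence: if instead $(\event_i,\event_{i-1})\in\mo$, then together with $(\event_{i-1},\event_i)\in\rf$ we would get $(\event_i,\event_i)\in\mo;\rf^?;\hb^?$ (taking the reflexive parts), contradicting $\irr(\mo;\rf^?;\hb^?)$. The consecutiveness follows from atomicity: any write/RMW $w$ with $(\event_{i-1},w)\in\mo$ and $(w,\event_i)\in\mo$ would give $(\event_i,w)\in\fr$ (as $\event_i$ reads from $\event_{i-1}$ and $\event_i\neq w$), hence $(\event_i,\event_i)\in\fr;\mo$, contradicting $\irr(\fr;\mo)$. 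A consequence I would record is that the write/RMW events lying $\mo$-between $\TC[\wt]$ and $\wt$ are precisely the proper $\rf$-ancestors of $\wt$ on its own chain.

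The second ingredient is the fact that $(\wt',\wt)\in\mo$. Here I would use read-coherence: $\rd$ reads from $\wt$, so if $(\wt,\wt')\in\mo$ then $(\rd,\wt')\in\fr$; combined with the hypothesis $(\wt',\rd)\in\rf^?;\hb$ this yields $(\rd,\rd)\in\fr;\rf^?;\hb$, contradicting $\irr(\fr;\rf^?;\hb)$. Since $\wt'\neq\wt$ and the two are $\mo$-comparable, this forces $(\wt',\wt)\in\mo$.

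Finally I would close the two remaining cases. If $\wt'=\TC[\wt]$, then $\wt'$ is the top write of $\wt$'s chain, so $(\wt',\wt)\in\rf^+$ (strict, as $\wt'\neq\wt$), contradicting the hypothesis $(\wt',\wt)\notin\rf^+$. If $(\TC[\wt],\wt')\in\mo$, then together with $(\wt',\wt)\in\mo$ the event $\wt'$ lies strictly $\mo$-between $\TC[\wt]$ and $\wt$; by the chain-consecutiveness established above, $\wt'$ must be one of the intermediate chain events, whence $(\wt',\wt)\in\rf^+$, again contradicting the hypothesis. As both alternatives are impossible, $(\wt',\TC[\wt])\in\mo$, as required. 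I expect the main obstacle to be the chain-consecutiveness argument of the second paragraph: it is the one place where atomicity and write-coherence must be combined, and the only point where one must carefully justify that no stray write/RMW can sit $\mo$-between two chain endpoints; every other deduction is a short, single-axiom cycle argument.
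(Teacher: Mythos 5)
Your proof is correct and follows essentially the same route as the paper's: read-coherence forces $(\wt',\wt)\in\mo$, and atomicity (with write-coherence orienting the $\rf$-chain) rules out $\wt'$ sitting $\mo$-inside $\wt$'s chain, forcing $(\wt',\TC[\wt])\in\mo$. The only difference is presentational --- the paper inducts upward along the chain, applying atomicity at each step to push $\wt'$ past each $\rf$-ancestor of $\wt$, whereas you package those same atomicity applications as a chain-consecutiveness lemma and conclude by trichotomy of the total order $\mo_x$.
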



\section{Consistency Checking}\label{sec:rf}

This section presents the main results of the paper, as outlined in \cref{subsec:contributions}. 
We start with an algorithm for checking consistency under $\wramm$ in \cref{subsec:rf_wra}.
For $\sramm$, we show in \Cref{subsec:rf_sra_lower} that the problem is $\NP$-complete in general, but has a polynomial time algorithm for the RMW-free programs as shown in  
 \Cref{subsec:rf_sra_normw}. 
\Cref{subsec:rf_rc20} and \Cref{subsec:rf_rlx} show that consistency checking is polynomial time for $\rcmm$ and linear-time for $\rlxmm$ along with the respective algorithms. 
Finally, we study the lower bound of consistency checking for RMW-free $\ramm$, $\wramm$, and $\sramm$ in \Cref{subsec:rf_lower}.


\subsection{Consistency Checking for WRA}\label{subsec:rf_wra}

We start with the $\wramm$ model, which is conceptually simpler as there is no $\mo$ involved in the consistency axioms.
\cref{algo:rf-consistency-wra} checks for consistency in 
 $O(\NumEvents \cdot\NumThreads)$, towards \cref{thm:wra_rf_upper}.

Given a partial execution $\expartial=(\E, \po, \rf)$,
the  algorithm first verifies that
there are no $(\po\cup\rf)$-cycles and every write/RMW event
is read by at most one RMW event (\cref{line:rf-wra-hb-acyclic}).
Afterwards, the algorithm streams the events of $\E$ in an order consistent with
$\po$ and verifies weak-read-coherence, i.e.,
there is no triplet $(\wt, \rd, \wt')$ such that
$\set{(\wt, \wt'), (\wt', \rd)} \subseteq \hb$.
In order to check this in linear time, the algorithm first
computes the array of $\hb$-timestamps (\cref{line:rf-wra-hb-array-init})
and the last write events $\glw(t, x, c)$ for each thread $t$,
location $x$ and index $c$ (\cref{line:rf-wra-lwb-array-init}),
as defined in \cref{sec:helper_functions}.


\begin{algorithm*}[t]
\Input{A partial execution $\expartial=(\E, \po, \rf)$}
\BlankLine
\lIf{$(\po\cup\rf)$ is cyclic or $\rf$ violates weak-atomicity}{
    \declare `Inconsistent' \label{line:rf-wra-hb-acyclic}
} 
\Let $\HB$ be an $\E$-indexed array storing the $\hb$-timestamps of events \label{line:rf-wra-hb-array-init} \;
\Let $\set{\view{\WtLst}{t,x}{u}}_{t, x, u}$ be data structures implementing $\glw(\cdot, \cdot, \cdot)$ \label{line:rf-wra-lwb-array-init} \;
\ForEach{$\event \in \E$ in $\po$-order}{
    \Case{$\event = \rd(t, x)$ or $\event = \ud(t, x)$}{
        \Let $w = \rfinv(\event)$, $t' = w.\tid$ and $c = \HB[w][t']$ \;
        \ForEach{$u \in \E.tids$}{
            \Let $c_u$ = \lIf*{$(\event.\op = \ud \land u = t)$}{$\HB[\event][u] - 1$} \lElse*{$\HB[\event][u]$}\;
            \Let $w_u $= $\apicall{\view{\WtLst}{t,x}{u}}{get}{$c_u$}$ \;
            \lIf*{$(\HB[w_u][t’] \geq c) \land ( (u = t’) {\Rightarrow} \HB[w_u][t’] > c)$}{
                \declare `Inconsistent' \label{line:wra-rf-weak-read-coherence}
            }
        }
    }
}
\declare `Consistent'
\caption{Checking Consistency for $\wramm$.
\label{algo:rf-consistency-wra}
}
\end{algorithm*}
\normalsize

In order to check that weak-read-coherence is not violated,
at a read/RMW event $\event$ with $\event.\tid = t$ and $\event.\lloc = x$, 
\cref{algo:rf-consistency-wra} checks if there is an event $\event' \in \locx{\W}\cup \locx{\Upd}$
such that $\event'$ is $\hb$-sandwiched between $\rfinv(\event)$ and $\event$.
Since $\po\subseteq \hb$, it suffices to check if for any thread $u$,
the event $\glw(u, x, \HB_e[u])$ can play the role of $\event'$ above (\cref{line:wra-rf-weak-read-coherence}).



The total running time on an input partial execution $\expartial$
with $\NumEvents$ events 
and $\NumThreads$ threads can be computed as follows.
The initialization of $\HB$ and the lists $\set{\view{\WtLst}{t,x}{u}}_{t, x, u}$, and
the total cost
of all calls to $\apicall{\view{\WtLst}{t,x}{u}}{get}{$c_u$}$
takes $O(\NumEvents\cdot  \NumThreads)$ time (\cref{sec:helper_functions}).
Afterwards, the algorithm spends $O(\NumThreads)$ time at each event.
This gives a total running time of $O(\NumEvents \cdot \NumThreads)$.
We thus have the following theorem.

\thmwrarfupper*

\subsection{Consistency Checking for SRA}\label{subsec:rf_sra_lower}

We now turn our attention to $\sramm$, and prove the bounds of \cref{thm:sra_rf_lower} and \cref{thm:sra_rf_upper}.

\bparagraph{The hardness of consistency checking for $\sramm$}
First, note that consistency checking is a problem in $\NP$.
Indeed, given a partial execution $\expartial=(\E, \po, \rf)$, 
one can guess a candidate $\mo$ and verify that $\ex\models \sramm$, 
where $\ex=(\E, \po, \rf, \mo)$ is a complete execution, by checking against the axioms of $\sramm$.
Each axiom can be verified in polynomial time, giving us membership in $\NP$.
Now we turn our attention to $\WComplexity{1}$-hardness (which will also imply $\NP$-hardness).
Our reduction is from the consistency problem for $\scmm$, which is known to be 
$\NP$-hard~\cite{Gibbons:1997} and more recently shown to be $\WComplexity{1}$-hard~\cite{Mathur:lics2020}.
We obtain hardness in two steps.

First, we observe that the consistency problem for $\scmm$ is $\WComplexity{1}$-hard even over instances in which every write event is observed at most once.
This can be obtained from the proof of $\WComplexity{1}$-hardness in~\cite{Mathur:lics2020}.
Towards our $\WComplexity{1}$-hardness proof for $\sramm$, we can substitute 
in such instances every read access by an RMW access without affecting the $\scmm$ consistency of the execution.
Intuitively, as any write observed by a read does not have any other readers, the write of the substituting RMW has no effect.
Formally, we have the following lemma.

\begin{restatable}{lemma}{lemschardness}
\label{lem:sc_hardness}
Consistency checking for $\scmm$ with only write and RMW events is $\WComplexity{1}$-hard in the parameter $\NumThreads$.
\end{restatable}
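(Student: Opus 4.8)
The plan is to reduce from $\scmm$ reads-from consistency checking, which is known to be $\WComplexity{1}$-hard in $\NumThreads$~\cite{Mathur:lics2020}, via a purely syntactic relabeling that turns read events into RMW events. The reduction will leave the set of threads untouched and be computable in polynomial time, so it is an fpt-reduction and $\WComplexity{1}$-hardness transfers directly. It is convenient to use the standard characterization of $\scmm$ reads-from consistency: $\expartial=\tup{\E,\po,\rf}$ is consistent iff there is a total order $T$ on $\E$ with $\po\subseteq T$ such that for every reader $\event$ with $w=\rfinv(\event)$, the write $w$ is the immediately $T$-preceding event on $\event.\lloc$ among all writes/RMWs. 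From such a $T$ one reads off an $\mo$ for which $\po\cup\rf\cup\mo\cup\fr$ is acyclic, and conversely.

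First I would strengthen the source problem: I claim $\scmm$ consistency remains $\WComplexity{1}$-hard even over instances in which every write/RMW event is the $\rf$-source of at most one reader. I would obtain this by inspecting the gadget underlying the $\WComplexity{1}$-hardness proof of~\cite{Mathur:lics2020} and checking that (a possibly trivially modified version of) it already enjoys this property. Given such an instance $\expartial=\tup{\E,\po,\rf}$, the reduction outputs $\expartial'$ by replacing each read $\rd(t,x)$ with an RMW $\ud(t,x)$ on the same location, in the same thread, at the same $\po$-position and reading from the same write; thus $\po$ and $\rf$ are unchanged and only the operation labels differ. The resulting $\expartial'$ contains only write and RMW events and has the same set of threads. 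I then prove $\expartial\models\scmm \iff \expartial'\models\scmm$.

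For the forward direction I take a witnessing total order $T$ for $\expartial$ and argue the \emph{same} $T$ witnesses $\expartial'$. Each relabeled RMW $u$ now also writes on its location $x$; since $u$ (formerly a read) reads from its source $w$, no write on $x$ lies strictly between $w$ and $u$ in $T$, so placing the write of $u$ at $u$'s position makes it the immediate $\mo$-successor of $w$, yielding atomicity for free. The crux is that these freshly introduced RMW-writes cannot shadow any other reader's source: the write of $u$ sits immediately after $w$ in $\mo$, so it could interpose between a source $w'$ and a distinct reader $e'$ only if $w'=w$; but by the ``observed at most once'' property $w$ has no reader other than $u$ itself, and $u$ reads $w$, not its own output. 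This is exactly where the strengthened source instance is essential. Every $\scmm$ edge of $\expartial'$ remains a subset of $T$, so $\acy(\po\cup\rf\cup\mo\cup\fr)$ is preserved.

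For the converse I take a witnessing total order $T'$ for $\expartial'$ and simply forget that the RMWs write, recovering $\expartial$. Deleting the RMW-writes cannot invalidate any reads-from relationship: in $T'$ each reader's source is already its immediately $T'$-preceding same-location write among \emph{all} writes, so removing a subset of writes can only leave it immediately preceding. Hence $T'$ restricted to $\E$ witnesses $\expartial\models\scmm$. Since the reduction runs in polynomial time and preserves $\NumThreads$, $\WComplexity{1}$-hardness of $\scmm$ carries over to instances with only write and RMW events. I expect the main obstacle to be the rigorous justification of the strengthened source hardness (the ``observed at most once'' instances) without reproducing the full construction of~\cite{Mathur:lics2020}; the shadowing argument in the forward direction is the key conceptual step but is routine once that property is in hand.
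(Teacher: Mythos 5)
Your proposal is correct and takes essentially the same route as the paper: both rest on the observation that $\scmm$ consistency remains $\WComplexity{1}$-hard when every write is observed at most once (extracted from~\cite{Mathur:lics2020}), both then relabel each read as an RMW while leaving $\po$, $\rf$ and the thread count untouched, and both invoke the at-most-once property exactly in the direction from the read instance to the RMW instance. The only difference is stylistic: you verify the equivalence via the total-order characterization of $\scmm$ (your ``shadowing'' argument showing the fresh RMW writes cannot interpose between another reader and its source), whereas the paper argues directly on the axiomatic side, restricting $\mo_2$ to the original writes in one direction and setting $\mo_2=\mo_1\cup\fr_1$ in the other.
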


Given \cref{lem:sc_hardness}, we can now prove \cref{thm:sra_rf_lower}.
The key observation is that the strong-write-coherence of $\sramm$ implies a total order on all write/RMW events.
Thus, over instances where every event is either a write or an RMW, strong-write-coherence yields a total order on all events, which, in turn, implies an $\scmm$-consistent execution.
We arrive at the following theorem.

\thmsrarflower*

\bparagraph{A parameterized upper bound}
We now turn our attention to \cref{thm:sra_rf_upper}, i.e., we solve consistency checking for $\sramm$ in time $O(\NumThreads\cdot \NumEvents^{\NumThreads+1})$.
Recall that our goal is to construct an $\mo$ that witnesses the consistency of $\expartial=(\E, \po, \rf)$.
One natural approach is to enumerate 
all possible $\mo$'s and check whether any of them leads to a consistent $\ex$.
However, this leads to an exponential algorithm 
regardless of the number of threads (there are exponentially many possible $\mo$'s even with two threads) 
which is beyond the bound of \cref{thm:sra_rf_upper}.
We instead follow a different approach.


\eparagraph{Algorithm}
Given the poset $(\E,\hb)$, a set $Y \subseteq \E$ is said to be \emph{downward-closed} if for all 
$\event_1 \in Y, \event_2 \in \E$, 
if $(\event_2, \event_1)\in \hb$ then $e_2 \in Y$. 
\newcommand{\Gg}{\mathcal{G}}
We define a (directed) \emph{downward graph} $\Gg_{\expartial}$ induced by $(\E,\hb)$, and show that the question of $\expartial\models\sramm$ reduces to checking reachability in $\Gg_{\expartial}$.
The node set of $\Gg_{\expartial}$ consists of all downward closed subsets $S$ of $\E$, with $\emptyset$ being the \emph{root node} and $\E$ being the \emph{terminal node}.
Given a node $S$ in $\Gg_{\expartial}$, we insert edges $S\to S'$
where $S'$ is obtained by extending $S$ with an event which is \emph{executable} in $S$. 
An event $\event$ executable in $S$ if the following conditions hold.
\begin{compactenum}
\item\label{item:executability1} All events $\event'$ such that  $(\event', \event)\in \hb$ are in $S$.
\item\label{item:executability2} If $\event\in (\W\cup\Upd)$ is a write/RMW event, then it
must also be \emph{enabled}. 
We say that a write/RMW event $\wt$ is enabled if the following hold.

\begin{compactitem}
\item[(a)]\label{item:enabledness1} 
For every triplet $(\wt, \rd,\wt')$, if $(\wt', \rd) \in \hb$, then $\wt' \in S$.
Intuitively, executing $\wt$ while $\wt'\not \in S$ represents a guess that $(\wt,\wt')\in \mo$,
which would violate read-coherence as $(\wt',\rd)\in \hb$.
\item[(b)]\label{item:enabledness2}
For every RMW event $\ud$ and triplet $(\wt', \ud, \wt)$, if $\ud\not \in S$ then $\wt'\not \in S$.
Intuitively, executing $\wt$ while $\ud\not \in S$ but $\wt' = \rfinv(\ud) \in S$ represents a 
guess that $(\wt', \wt) \in \mo$ and $(\wt, \ud)\in \mo$, which would violate atomicity as 
it would imply $(\ud,\wt)\in\fr$ and thus $(\ud, \ud)\in \fr;\mo$.
\end{compactitem}
\end{compactenum}

\begin{wrapfigure}{r}{0.45\textwidth}
\vspace{-0.45cm}
\centering
\begin{tikzpicture}[thick,
pre/.style={<-,shorten >= 2pt, shorten <=2pt, very thick},
post/.style={->,shorten >= 2pt, shorten <=2pt,  very thick},
seqtrace/.style={->, line width=1},
aux_seqtrace/.style={->, line width=1, draw=gray},
und/.style={very thick, draw=gray},
event/.style={rectangle, minimum height=3.5mm, fill=white, minimum width=6mm,   line width=1pt, inner sep=1, font={\small}},
aux_event/.style={event, fill=gray!20},
virt/.style={circle,draw=black!50,fill=black!20, opacity=0},
bad/.style={preaction={fill, white}, pattern color=red!40, pattern=north east lines},
good/.style={preaction={fill, white}, pattern color=green!60, pattern=north west lines},
isLabel/.style={rectangle, fill opacity=0.5, fill=white, text opacity=1}
]

\newcommand{\LocalTrace}{\rho}

\newcommand{\xstep}{2.45}
\newcommand{\ystep}{0.6}
\newcommand{\yaux}{0.17727}

\fill[gray!20]
(-0.25*\xstep, 0.32*\ystep) to
(-0.25*\xstep, -0.5*\ystep) to
(1*\xstep, -1.5*\ystep - \yaux) to
(2.3*\xstep, -1.5*\ystep - \yaux) to
(2.3*\xstep, 0.32*\ystep) to cycle;

\node[]       (t1)   at (0*\xstep, 0*\ystep) {$t_1$};
\node[]       (t1_end) at (0*\xstep, -4*\ystep) {};

\node[]       (t2)   at (0.5*\xstep, 0*\ystep) {$t_2$};
\node[]       (t2_end) at (0.5*\xstep, -4*\ystep) {};

\node[]       (t3)   at (1.0*\xstep, 0*\ystep) {$t_3$};
\node[]       (t3_end) at (1.0*\xstep, -4*\ystep) {};

\node[]       (t4)   at (1.5*\xstep, 0*\ystep) {$t_4$};
\node[]       (t4_end) at (1.5*\xstep, -4*\ystep) {};

\node[]       (t5)   at (2.0*\xstep, 0*\ystep) {$t_5$};
\node[]       (t5_end) at (2.0*\xstep, -4*\ystep) {};

\draw[po] (t1) to (t1_end);
\draw[po] (t2) to (t2_end);
\draw[po] (t3) to (t3_end);
\draw[po] (t4) to (t4_end);
\draw[po] (t5) to (t5_end);

\node[event]  (t1_r_x)   at (0*\xstep, -1.5*\ystep) {$\rd(x)$};
\node[event]  (t1_w2_y)   at (0*\xstep, -3*\ystep) {$\wt_1(y)$};
\node[event]  (t2_r_y1)   at (0.5*\xstep, -2*\ystep) {$\rd_1(y)$};
\node[event]  (t2_r_y2)   at (0.5*\xstep, -3*\ystep) {$\rd_2(y)$};
\node[aux_event]  (t3_w_x)  at (1.0*\xstep, -1*\ystep - \yaux) {$\wt(x)$};
\node[event]  (t3_w1_y)  at (1.0*\xstep, -3*\ystep) {$\wt_2(y)$};
\node[aux_event]  (t4_u1_z)  at (1.5*\xstep, -1*\ystep - \yaux) {$\ud_1(z)$};
\node[event]  (t4_w_z)  at (1.5*\xstep, -3*\ystep) {$\wt(z)$};
\node[event]  (t5_u2_z)  at (2.0*\xstep, -3*\ystep) {$\ud_2(z)$};

\draw[rf]   (t3_w_x) -- (t1_r_x) node [midway, pos=0.7, above,sloped] {\small$\rf$};;
\draw[rf]   (t3_w1_y) -- (t2_r_y2) node [midway, above,sloped] {\small$\rf$};;
\draw[rf]   (t1_w2_y) -- (t2_r_y1) node [midway, above,sloped] {\small$\rf$};
\draw[hb]   (t1_w2_y) -- (t2_r_y2) node [midway, below,sloped] {\small$\hb$};
\draw[rf]   (t4_u1_z) -- (t5_u2_z) node [midway, above,sloped] {\small$\rf$};;

\end{tikzpicture}
\caption{
Enabledness and executability of a set $S$, marked in gray. 
Only $\rd(x)$ and $\ud_2(z)$ are executable events.
$\wt_1(y)$ is not executable as $\rd(x) \not\in S$.
$\rd_1(y)$ is not executable as $\wt_1(y) \not\in S$.
$\wt_2(y)$ is not enabled as $(\wt_1(y), \rd_2(y)) \in \hb$ but $\wt_1(y) \not\in S$.
$\wt(z)$ is not enabled as $\ud_2(z) \not\in S$.
}
\label{fig:exec}
\vspace{-0.5cm}
\end{wrapfigure}

\cref{fig:exec} illustrates the above notions.
Conceptually, every path from the root $\emptyset$ to a node $S$ in $\Gg_{\expartial}$ represents an $\mo$ on the write/RMW events of $S$.
Although there can be exponentially many such paths, the node $S$ ``forgets'' their corresponding $\mo$'s.
Instead, $S$ represents a partial $\mopartial$, which orders every write/RMW event on a location $x$ of $S$ before every write/RMW event on $x$ outside $S$.
The algorithm terminates and returns that $\expartial\models \sramm$ iff the node $\E$ is reachable from the root node $\emptyset$ in $\Gg_{\expartial}$.
Observe that $\Gg_{\expartial}$ contains $O(\NumEvents^{\NumThreads})$ nodes, while each node has $\leq \NumThreads$ successors.
Deciding whether a node has a transition to another node can be easily done in $O(\NumEvents)$ time.
Hence, the total time of the algorithm is $O(\NumThreads\cdot \NumEvents^{\NumThreads+1})$.
We thus arrive at the following theorem.

\thmsrarfupper*



\subsection{Consistency Checking for the RMW-Free Fragment of $\sramm$}\label{subsec:rf_sra_normw}

On close inspection, RMW events played a central role in the $\NP$-hardness of \cref{thm:sra_rf_lower}.
A natural question thus arises: does the hardness persist in the absence of RMWs?
Here we show that the RMW-free fragment of $\sramm$ can be handled efficiently (\cref{thm:sra_rf_normw_upper}).
Observe that \cref{lem:read_coherence_atomicity} applies to $\sramm$, as strong-write-coherence implies write-coherence.
However, the lemma admits a simplification under RMW-free $\sramm$.
In particular, as $\ex$ does not contain $\MOrlx$ accesses, $(\wt', \rd)\in \rf^?;\hb$ reduces to $(\wt', \rd)\in \hb$.
Moreover, as $\ex$ is RMW-free, we have $\TC[\wt]=\wt$, and hence $(\wt', \wt)\in \mo$.
Thus, \cref{lem:read_coherence_atomicity} reduces to the following corollary.

\begin{corollary}\label{cor:read_coherence_sra}
Consider any execution $\ex = \tup{\E, \po, \rf, \mo}$ that satisfies read-coherence and strong-write-coherence.
Consider any triplet $(\wt, \rd, \wt')$.
If $(\wt', \rd)\in \hb$ then
$(\wt', \wt)\in \mo$.
\end{corollary}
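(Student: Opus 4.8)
The plan is to obtain the corollary as a direct specialization of \cref{lem:read_coherence_atomicity}: I will check that, in the RMW-free fragment of $\sramm$, every hypothesis of that lemma is satisfied and that its conclusion collapses to the stated one. First I would confirm the three axiomatic premises of the lemma. Read-coherence is assumed. Write-coherence follows from the assumed strong-write-coherence, which is strictly stronger. Atomicity is not among the corollary's assumptions, but it holds vacuously here: the axiom $\irr(\fr;\mo)$ can only fail through an event lying in both the domain of $\fr$ (contained in $\R\cup\Upd$) and the codomain of $\mo$ (contained in $\W\cup\Upd$), hence in $(\R\cup\Upd)\cap(\W\cup\Upd)=\Upd$; since the fragment is RMW-free, $\Upd=\emptyset$ and no violation is possible.

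Next I would verify the two triplet-side conditions of the lemma for the given triplet $(\wt, \rd, \wt')$. The hypothesis $(\wt', \rd)\in\rf^?;\hb$ is immediate from the assumption $(\wt', \rd)\in\hb$, because $\rf^?$ contains the identity relation. The side condition $(\wt', \wt)\notin\rf^+$ is automatic: without RMWs we have $\rf\subseteq\W\times\R$ with $\W$ and $\R$ disjoint, so $\rf$ cannot be composed with itself and $\rf^+=\rf\subseteq\W\times\R$; as $\wt\in\W$ is a genuine write and not a read, $(\wt', \wt)\notin\rf^+$.

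With all hypotheses in place, applying \cref{lem:read_coherence_atomicity} gives $(\wt', \TC[\wt])\in\mo$. The last step is to simplify the conclusion using $\TC[\wt]=\wt$, which holds because in the RMW-free fragment every write forms a singleton $\rf$-chain and is thus its own chain top; this yields $(\wt', \wt)\in\mo$, as claimed. Since the corollary is essentially a restriction of the lemma, the only genuine care is needed at the two points where a premise of the lemma has been silently dropped, namely verifying that atomicity is vacuous and that the $\rf^+$ side condition holds automatically. These easily-overlooked checks are precisely what makes the simplification legitimate, and I expect them to be the main (though minor) obstacle.
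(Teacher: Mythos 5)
Your proof is correct and follows essentially the same route as the paper: the corollary is obtained by specializing \cref{lem:read_coherence_atomicity} to the RMW-free $\sramm$ setting, using strong-write-coherence to discharge write-coherence, the trivial containment $\hb\subseteq\rf^?;\hb$ for the hypothesis, and $\TC[\wt]=\wt$ for the conclusion. You are in fact slightly more thorough than the paper's own derivation, which silently drops the atomicity premise and the side condition $(\wt',\wt)\notin\rf^+$; your observations that atomicity is vacuous (since $\dom(\fr)\cap\codom(\mo)\subseteq\Upd=\emptyset$) and that $\rf^+=\rf\subseteq\W\times\R$ cannot reach the write $\wt$ are exactly the right justifications for these omissions.
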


\bparagraph{Minimal coherence under $\sramm$}
\cref{cor:read_coherence_sra} identifies necessary orderings 
in any modification order that witnesses the consistency of $\expartial$.
Towards an algorithm, we must also determine if
there are non-trivial conditions \emph{sufficient} to conclude consistency.
We answer this in the positive, by capturing these conditions in the notion of \emph{minimal coherence}.
Consider a partial modification order $\mopartial=\bigcup_x \mopartial_x$, 
where each $\mopartial_x$ is a partial order.
We call $\mopartial$ \emph{minimally coherent for $\expartial$ under $\sramm$} if the following conditions hold.
\begin{compactenum}
\item\label{item:minimal_coherence_sra1}
For every triplet $(\wt, \rd, \wt')$ with $(\wt', \rd)\in \hb$, we have $(\wt', \wt)\in (\hb\cup\mopartial)^+$.
\item\label{item:minimal_coherence_sra2}
$(\hb\cup \mopartial)$ is acyclic.
\end{compactenum}

\begin{wrapfigure}{r}{0.4\textwidth}
\centering
\vspace{-0.2cm}
\begin{tikzpicture}[node distance=3.5mm and 7mm]
  \node(r1)  at (0,0) {$\rd_1$};
  \node[below = of r1] (w1) {$\wt_1$};
  \node[below = of w1] (r2) {$\rd_2$};
  
  \node[right = of r1]  (w2) {$\wt_2$};
  \node[below = of w2] (w3) {$\wt_3$};
  
  \node[right = of w2] (w4) {$\wt_4$};
  \node[below = of w4] (w5) {$\wt_5$};
  \node[below = of w5] (w6) {$\wt_6$};

   \draw[po] (w2) to node[right]{$\po$} (w3);
   \draw[po] (r1) to node[right]{$\po$} (w1);
   \draw[po] (w1) to node[right]{} (r2);

   \draw[po] (w4) to node[right]{$\po$} (w5);
   \draw[po] (w5) to node[right]{} (w6);

   \draw[rf, bend right=38] (w5) to node[above,sloped, pos=0.4]{$\rf$} (r1);
   \draw[bend left, rf] (w3) to node[right]{$\rf$} (r2);
   
   \draw[mo] (w1) to node[below, sloped]{$\mopartial$} (w3);
   \draw[mo] (w5) to node[below, sloped]{$\mopartial$} (w3);

\end{tikzpicture}
\caption{
A minimally coherent $\mopartial$ under $\sramm$.
All events access the same location.
}
\label{fig:mincoh-sra}
\vspace{-0.3cm}
\end{wrapfigure}

\cref{fig:mincoh-sra} illustrates the notion of minimal coherence under $\sramm$.
Observe that any $\mo$ witnessing the consistency of $\expartial$ satisfies these conditions.
In the following, we show that minimally coherent modification orders are
also sufficient for witnessing consistency.
We note that for RMW-free executions, minimal coherence coincides 
with the previous notions of coherence that witnesses consistency under $\ramm$~\cite{Lahav:2015,Abdulla:2018,Luo:2021}.
However, these notions also handle RMWs, and are not directly applicable in $\sramm$, as the problem of consistency checking is $\NP$-hard for $\sramm$ with RMWs (\cref{thm:sra_rf_lower}).

\begin{restatable}{lemma}{lemsraminimalcoherence}
\label{lem:sra_minimal_coherence}
Consider any RMW-free, partial execution $\expartial=(\E,\po, \rf)$.
If there exists partial modification order $\mopartial$ that is minimally coherent for $\expartial$ under $\sramm$, then $\expartial\models\sramm$.
\end{restatable}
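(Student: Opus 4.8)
The plan is to prove that given a minimally coherent $\mopartial$, we can complete it to a total modification order $\mo$ on the write events of each location such that the resulting $\ex = \tup{\E, \po, \rf, \mo}$ satisfies all axioms of $\sramm$, namely strong-write-coherence, read-coherence, and atomicity (the PO-RF axiom being taken care of separately, as its failure would already be detected). Since $\ex$ is RMW-free, atomicity holds vacuously, so the real work lies in producing an $\mo$ that is simultaneously strong-write-coherent and read-coherent.

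\medskip

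\noindent\textbf{The construction.} First I would note that by condition~\ref{item:minimal_coherence_sra2}, $(\hb \cup \mopartial)$ is acyclic, so its restriction to each location $x$, namely $(\hb \cup \mopartial)_x$ on the set $\locx{\W}$, is a partial order. My approach is to take $\mo_x$ to be \emph{any} linear extension of the per-location relation $(\hb \cup \mopartial)^+$ restricted to $\locx{\W}$, and set $\mo = \bigcup_x \mo_x$. This is well-defined precisely because of acyclicity. The key claim to establish is then that this $\mo$ witnesses $\sramm$-consistency.

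\medskip

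\noindent\textbf{Verifying the axioms.} For strong-write-coherence, I must show $\acy(\hb \cup \mo)$. Here I would argue that $\hb \cup \mo \subseteq (\hb \cup \mopartial)^+$: the $\hb$ edges are trivially contained, and each $\mo_x$ edge lies in $(\hb \cup \mopartial)^+$ by construction as a linear extension of that relation. Since $(\hb \cup \mopartial)^+$ is acyclic, so is any sub-relation, giving strong-write-coherence. For read-coherence, $\irr(\fr; \rf^?; \hb)$, since there are no RMWs this reduces to showing there is no triplet $(\wt, \rd, \wt')$ with $(\wt, \wt') \in \mo$ and $(\wt', \rd) \in \hb$ (recall $\fr = \rf^{-1};\mo$ and $\rd$ reads from $\wt$). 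So suppose toward contradiction such a triplet exists with $(\wt', \rd) \in \hb$. By condition~\ref{item:minimal_coherence_sra1}, $(\wt', \wt) \in (\hb \cup \mopartial)^+$, hence $(\wt', \wt) \in \mo$ since $\mo$ linearly extends this relation on location $x$ and $\wt, \wt'$ are distinct. But we assumed $(\wt, \wt') \in \mo$, contradicting that $\mo$ is a (total) order. This rules out the offending triplet and establishes read-coherence.

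\medskip

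\noindent\textbf{The main obstacle.} The delicate point I expect is the interaction in condition~\ref{item:minimal_coherence_sra1} between $\hb$ and $\mopartial$: the condition only guarantees $(\wt', \wt) \in (\hb \cup \mopartial)^+$, a \emph{mixed} closure, rather than a pure $\mopartial$ edge. I must be careful that when I linearly extend, such mixed-closure orderings are genuinely respected by $\mo$. The clean way to handle this is to extend the \emph{combined} relation $(\hb \cup \mopartial)^+$ to a total order on all events (a topological sort of the acyclic relation), and then read off $\mo$ as the projection onto same-location write pairs; this automatically makes every $(\hb \cup \mopartial)^+$ ordering between two same-location writes an $\mo$ ordering, which is exactly what the read-coherence argument needs. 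I would also double-check the edge case where $\wt = \wt'$ is impossible in a valid triplet (the triplet requires distinct events), so the contradiction step is sound. With these pieces assembled, $\ex \models \sramm$ follows, completing the proof.
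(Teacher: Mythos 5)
Your proof is correct and takes essentially the same route as the paper's: the paper likewise takes $\mo$ to be a linear extension of the acyclic relation $(\hb\cup\mopartial)^+$ projected onto the writes, gets strong-write-coherence by containment in that acyclic (total) order, and gets read-coherence from condition~(1) of minimal coherence together with antisymmetry of $\mo$. Your ``main obstacle'' resolution---linearizing the combined relation globally and then projecting, rather than extending each $\mopartial_x$ independently---is exactly the reading the paper's one-line construction intends, so nothing further is needed.
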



\begin{algorithm*}[t]
\Input{Events $\E$, program order $\po$ and reads-from relation $\rf$}
\BlankLine
\lIf{$(\po \cup \rf)$ is cyclic}{
    \declare `Inconsistent' \label{line:rf-sra-hb-acyclic}
} 
\Let $\HB$ be an $\E$-indexed array storing the $\hb$-timestamps of events \label{line:rf-sra-hb-array-init} \;
\Let $\set{\view{\WtLst}{t,x}{u}}_{t, x, u}$ be data structures implementing $\glw(\cdot, \cdot, \cdot)$ \label{line:rf-sra-lwb-array-init} \;
\lForEach{$x \in \E.locs$}{
    $\mopartial_x \gets \emptyset$
}
\ForEach{$\event \in \E$ in $\po$-order}{ \label{line:rf-sra-loop-start}
    \Case{$\event = \rd(t, x)$}{
        \Let $\wt_{\mathsf{rf}} = \rfinv(\event)$ \;
        \ForEach{$u \in \E.tids$}{\label{line:rf-sra-saturation-loop}
            \Let $\wt_u = \apicall{\view{\WtLst}{u,x}{t}}{get}{$\HB[\event][u]$}$ \label{line:rf-sra-lwb} \;
            \lIf*{$\wt_u \neq \wt_{\mathsf{rf}}$}{
            $\mopartial_x \gets \mopartial_x \cup \set{(\wt_u, \wt_{\mathsf{rf}})}$ \; \label{line:rf-sra-add-mo-edge}
            }
        }
    }
}
\lIf{$(\hb \cup \bigcup_{x \in \E.locs}\mopartial_x)$ is cyclic}{\declare `Inconsistent' \label{line:rf-sra-cycle}}
\lElse{\declare `Consistent'}
\caption{Checking consistency for the RMW-free fragment of SRA.}
\label{algo:rf-consistency-sra-no-rmw}
\end{algorithm*}
\normalsize

\bparagraph{Algorithm}
\cref{cor:read_coherence_sra} and \cref{lem:sra_minimal_coherence} suggest a polynomial-time algorithm for deciding the $\sramm$ consistency of an RMW-free, partial execution $\expartial=(\E, \po, \rf)$.
Similarly to $\wramm$, we first verify that $(\po\cup \rf)$ is acyclic.
Then, we construct a partial modification order $\mopartial$ by identifying all conflicting triplets $(\wt, \rd, \wt')$ such that $(\wt', \rd)\in \hb$, and inserting an ordering $(\wt', \wt)\in \mopartial$.
Finally, we report that $\expartial\models \sramm$ iff $(\hb\cup \mopartial)$ is acyclic.

Although this process runs in polynomial time, it is still far from the  nearly linear bound 
we aim for (\cref{thm:sra_rf_normw_upper}).
The key extra step towards this bound comes from a closer look at minimal coherence:~based on \cref{item:minimal_coherence_sra1}, it suffices to only consider 
conflicting triplets $(\wt, \rd, \wt')$ in which $\wt'$ is $\po$-maximal among all write events $\wt''$ forming a conflicting triplet $(\wt, \rd, \wt'')$ such that $(\wt'', \rd) \in \hb$.
For each thread $t$, we thus only need to identify the $\po$-maximal write $\wt'$ in the scheme outlined above.
This concept is illustrated in~\cref{fig:mincoh-sra}.
Consider the triplets $(\wt_3, \rd_2, \wt_4)$, $(\wt_3, \rd_2, \wt_5)$, and
$(\wt_3, \rd_2, \wt_6)$. 
Only the first two satisfy the above definition (since $(\wt_4, \rd_2), (\wt_5, \rd_2) \in \hb$ but $(\wt_6, \rd_2) \not \in \hb$).
In this case, only identifying the event $\wt_5$ is sufficient as it is the po-maximal write among $\wt_4$ and $\wt_5$.

This insight is precisely formulated in
\cref{algo:rf-consistency-sra-no-rmw}.
The algorithm uses the auxiliary functions from \cref{sec:helper_functions} 
to compute the $\HB$-timestamp of each event.
Also recall that, for threads $t$ and $u$ and location $x$,
$\view{\WtLst}{t,x}{u}$ denotes (thread $u$'s copy of) the $\po$-ordered 
list of write accesses performed by $t$ on location $x$.
It then processes events in $\expartial$ in an order consistent with $\po$ 
and builds a partial modification order $\mopartial$.
When processing a read event $\rd$, the algorithm identifies 
for every thread $u$, the $\po$-maximal write $\wt'$ of $u$ that 
forms a conflicting triplet $(\wt, \rd, \wt')$ with $(\wt', \rd) \in \hb$
 (\cref{line:rf-sra-lwb}), and inserts $(\wt', \wt)\in \mopartial$ (\cref{line:rf-sra-add-mo-edge}).
Finally, it checks whether $\mopartial$ violates strong-write-coherence.

\bparagraph{Correctness and complexity}
The completeness follows directly from \cref{cor:read_coherence_sra}: every ordering inserted in $\mopartial_x$ is present in any modification order $\mo$ that witnesses the consistency of $\expartial$, while the acyclicity check in \cref{line:rf-sra-cycle} is necessary for strong-write-coherence.
Hence, if the algorithm returns ``Inconsistent'', we have $\expartial\not\models \sramm$.
The soundness comes from the fact that $\mopartial$ satisfies \cref{item:minimal_coherence_sra1} of minimal coherence at the end of the loop of \cref{line:rf-sra-loop-start}, while if the acyclicity check in \cref{line:rf-sra-cycle} passes, $\mopartial$ also satisfies \cref{item:minimal_coherence_sra2} of minimal coherence.
Thus, by \cref{lem:sra_minimal_coherence}, $\expartial\models \sramm$.

The time spent in computing $\HB$ and initializing and accessing the lists $\view{\WtLst}{t,x}{u}$
is $O(\NumEvents\cdot\NumThreads)$ (\cref{sec:helper_functions}).
The number of orderings added in $\mopartial$ is $O(\NumEvents \cdot \NumThreads)$,
taking $O(\NumEvents \cdot \NumThreads)$ total time.
Finally, the check in  \cref{line:rf-sra-cycle}
is $O(\NumEvents \cdot \NumThreads)$ time
as it corresponds to detecting a cycle on a graph
with $|\E| = \NumEvents$ nodes and $\leq \NumEvents \cdot (\NumThreads + 1)$ edges.
This gives a total running time of $O(\NumEvents\cdot\NumThreads)$.
We thus arrive at \cref{thm:sra_rf_normw_upper}.

\thmsrarfnormwupper*

\subsection{Consistency Checking for $\rcmm$}\label{subsec:rf_rc20}

We now turn our attention to the full $\rcmm$ memory model, which comprises a mixture of $\MOrel$, $\MOacq$ and $\MOrlx$ memory accesses.
Similarly to the RMW-free $\sramm$, we obtain a nearly linear bound (\cref{thm:rc_rf_upper}).
Note, however, that here we also allow RMW events.
As $\rcmm$ satisfies read-coherence, write-coherence and atomicity, \cref{lem:read_coherence_atomicity} applies also in this setting.
However, our earlier notion of minimal coherence under $\sramm$ is no longer applicable
as is --- \cref{lem:sra_minimal_coherence} does not hold for $\rcmm$.
Fortunately, we show this model enjoys a similar notion of coherence minimality.

\bparagraph{Minimal coherence under $\rcmm$}
Consider a partial modification order $\mopartial=\bigcup_x \mopartial_x$.
We call $\mopartial$ \emph{minimally coherent for $\expartial$ under $\rcmm$} if the following conditions hold.

\begin{compactenum}
\item\label{item:minimal_coherence_rc1} For every triplet $(\wt, \rd, \wt')$ accessing location $x$, if $(\wt', \rd)\in \rf^?;\hb$ and $(\wt', \wt)\not \in \rf^+$, then $(\wt', \TC[\wt])\in (\rf_x\cup \hb_x \cup \mopartial_x)^+$.
\item\label{item:minimal_coherence_rc2} For every two write/RMW events $\wt_1, \wt_2$ accessing location $x$,
if $(\wt_1, \wt_2)\not \in \rf^+$ and $(\wt_1, \wt_2)\in \mopartial_x$, then $(\wt_1, \TC[\wt_2])\in \mopartial_x$.
\item\label{item:minimal_coherence_rc3} $(\rf_x\cup\hb_x\cup \mopartial_x)$ is acyclic, for each $x\in\E.locs$.
\end{compactenum}

\cref{fig:mincoh} illustrates the above definition.
Observe that any $\mo$ witnessing the consistency of $\expartial$ satisfies minimal coherence.
As before, minimal coherence is also a sufficient witness of consistency.

\begin{restatable}{lemma}{lemrcminimalcoherence}
\label{lem:rc_minimal_coherence}
Consider any partial execution $\expartial=(\E,\po, \rf)$.
If there exists partial modification order $\mopartial$ that is minimally coherent for $\expartial$ under $\rcmm$, then $\expartial\models\rcmm$.
\end{restatable}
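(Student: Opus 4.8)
The plan is to show that if $\mopartial$ is minimally coherent for $\expartial$ under $\rcmm$, then we can extend $\mopartial$ to a full modification order $\mo$ (a union of total orders $\mo_x$) such that $\ex = \tup{\E, \po, \rf, \mo}$ satisfies all four $\rcmm$ axioms: PO-RF, write-coherence, read-coherence, and atomicity. Since $\expartial$ is a partial execution, PO-RF (acyclicity of $\po\cup\rf$) must already be assumed or checked separately; the real work is producing an $\mo$ witnessing the remaining three axioms. The construction proceeds location by location: for each $x$, I would take a linear extension $\mo_x$ of the partial order $(\rf_x \cup \hb_x \cup \mopartial_x)^+$, which exists precisely because \cref{item:minimal_coherence_rc3} guarantees acyclicity. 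The candidate $\mo$ is then $\bigcup_x \mo_x$. The crux is verifying that this particular choice of linear extension — respecting $\rf$, $\hb$, and $\mopartial$ simultaneously — satisfies the axioms.

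\textbf{Verifying the axioms.} First, \emph{write-coherence} ($\irr(\mo;\rf^?;\hb^?)$) and its per-location content are immediate since $\mo_x$ extends $\hb_x$ and the axiom is local to $x$; I would spell out that an $\hb$-edge between two same-location writes is absorbed into $\mo_x$ by construction. Second, \emph{atomicity} ($\irr(\fr;\mo)$): here I would use the fact that $\rf$-chains are respected. Because $\mo_x$ extends $\rf_x$, every RMW event $\ud$ immediately follows its writer $\rfinv(\ud)$ in the chain-induced order, so no write can be $\mo$-inserted strictly between $\rfinv(\ud)$ and $\ud$; \cref{item:minimal_coherence_rc2} is what forces writes that precede a chain to precede its \emph{top} $\TC[\cdot]$, preventing exactly the interleavings that would create an $\fr;\mo$ cycle. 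The key lemma machinery here is \cref{lem:read_coherence_atomicity}, whose illustration in \cref{fig:Chain-Rule} is precisely the atomicity-forced ordering I need to reproduce. Third, \emph{read-coherence} ($\irr(\fr;\rf^?;\hb)$) is the step where \cref{item:minimal_coherence_rc1} does the work: given a triplet $(\wt, \rd, \wt')$ with $(\wt', \rd)\in\rf^?;\hb$, a read-coherence violation would require $(\wt, \wt')\in\mo$, but \cref{item:minimal_coherence_rc1} (together with $\mo$ extending $(\rf_x\cup\hb_x\cup\mopartial_x)^+$) forces $(\wt', \TC[\wt])\in\mo$, hence $(\wt',\wt)\in\mo$, contradicting the violation.

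\textbf{The main obstacle} I expect is the interaction between the chain structure and the two coherence axioms — specifically, showing that ordering writes relative to $\TC[\cdot]$ (the top of the chain) rather than to individual RMW events inside a chain is both \emph{sufficient} to rule out all violations and \emph{consistent} (does not overconstrain $\mo_x$ into a cycle). The subtlety is that $\fr$ edges emanate from RMW events in the interior of a chain, while \cref{item:minimal_coherence_rc1,item:minimal_coherence_rc2} phrase everything in terms of $\TC$; I must verify that an $\fr$ edge from an interior RMW, composed with $\rf^?;\hb$, still lands me in a configuration controlled by the $\TC$-based conditions. I anticipate needing a careful case analysis on whether the offending write lies inside or outside the relevant $\rf$-chain, using the condition $(\wt',\wt)\not\in\rf^+$ as the separator between the two regimes, exactly as in the hypothesis of \cref{lem:read_coherence_atomicity}.

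\textbf{Overall structure.} Concretely, I would: (i) fix an arbitrary linear extension $\mo_x$ of $(\rf_x\cup\hb_x\cup\mopartial_x)^+$ for each $x$, justified by \cref{item:minimal_coherence_rc3}; (ii) prove write-coherence directly from $\hb_x\subseteq\mo_x$; (iii) prove atomicity from $\rf_x\subseteq\mo_x$ plus \cref{item:minimal_coherence_rc2}; (iv) prove read-coherence by contradiction, deriving the required $\mo$-ordering from \cref{item:minimal_coherence_rc1} and invoking \cref{lem:read_coherence_atomicity} to handle the $\TC$-to-$\wt$ gap within a chain; and (v) conclude $\ex\models\rcmm$, hence $\expartial\models\rcmm$. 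The only genuinely nontrivial reasoning is step (iv) combined with the chain case analysis flagged above; the remaining steps are largely bookkeeping once the linear-extension construction is in place.
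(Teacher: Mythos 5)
There is a genuine gap, and it is exactly the one the paper flags in the main text: \emph{not every total extension of $(\rf_x\cup \hb_x \cup \mopartial_x)$ qualifies as the witnessing $\mo_x$ --- some extensions violate atomicity}. Your step (i) takes an \emph{arbitrary} linear extension, and your atomicity argument (step (iii)) then rests on a false claim: extending $\rf_x$ only guarantees that the writer of an RMW is ordered \emph{before} the RMW, not \emph{immediately} before it, so a linear extension is free to interleave an unrelated write strictly between them. Concretely, take a single location $x$ with one $\rf$-chain $\wt_1,\ud_1$ (i.e., $(\wt_1,\ud_1)\in\rf$) and one additional write $\wt_2$ in another thread, with no $\hb$ between $\wt_2$ and the chain and $\wt_2$ read by nobody. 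Then $\mopartial_x=\emptyset$ is minimally coherent (\cref{item:minimal_coherence_rc1,item:minimal_coherence_rc2} hold vacuously, and \cref{item:minimal_coherence_rc3} holds), yet the linear extension $\wt_1 < \wt_2 < \ud_1$ of $(\rf_x\cup\hb_x\cup\mopartial_x)^+$ gives $(\ud_1,\wt_2)\in\fr$ and $(\wt_2,\ud_1)\in\mo$, violating atomicity. Neither of your two proposed rescues applies: \cref{item:minimal_coherence_rc2} constrains only pairs \emph{already ordered} by $\mopartial_x$ (here empty), and \cref{lem:read_coherence_atomicity} derives orderings that are necessary in an execution \emph{already satisfying} the axioms --- it cannot be used to steer the choice of extension.

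What is missing is a construction that keeps $\rf$-chains contiguous, which is the actual content of the paper's proof. The paper extends $\mopartial_x$ by a saturation phase that repeatedly picks two $\rf$-chains whose relative order is undetermined and inserts an edge from the \emph{last} element of one chain to the \emph{top} write of the other, and then proves three facts: this preserves acyclicity of $(\rf_x\cup\hb_x\cup\mo_x)$ (\cref{lem:completion_acyclicity}); any path entering a chain from outside can be redirected to the chain's top, which is where both \cref{item:minimal_coherence_rc1} and \cref{item:minimal_coherence_rc2} of minimal coherence are used (\cref{lem:completion_atomicity}); and at the fixpoint the chains are totally ordered as blocks, so $(\rf_x\cup\mo_x)^+$ restricted to write/RMW events is already total and no arbitrary choice remains (\cref{lem:completion_totality}). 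Atomicity then holds precisely because no chain is ever split. Your ``main obstacle'' paragraph correctly identifies where the difficulty lies, but the case analysis you sketch does not supply this chain-contiguity mechanism, and without it the lemma's conclusion does not follow from your construction.
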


\begin{wrapfigure}{r}{0.4\textwidth}
\vspace{-0.45cm}
\centering
\begin{tikzpicture}[node distance=5mm and 7mm]
  \node (w1) at (0,0)  {$\wt_1$};
  \node[below = of w1] (u1) {$\ud_1$};
  \node[below = of u1] (u2) {$\ud_2$};
  \node[right = of u1] (r1) {$\rd_1$};
  \node[below = of r1] (r2) {$\rd_2$};
  \node[right = of r1] (u4) {$\ud_3$};
  \node[below = of u4] (u3) {$\ud_4$};
  \node[above = of u4] (w2) {$\wt_2$};
  \node[right = of u4] (w3) {$\wt_3$};
  \node[right = of u3] (r3) {$\rd_3$};
  
   \draw[po] (r1) to node[right]{$\po$} (r2);
   \draw[po] (w3) to node[right]{$\po$} (r3);
   \draw[rf] (w1) to node[left]{$\rf$} (u1);
   \draw[rf] (u1) to node[left]{$\rf$} (u2);
   \draw[rf] (u1) to node[below]{$\rf$} (r1);
   \draw[rf] (w2) to node[left]{$\rf$} (u4);
   \draw[rf] (u4) to node[left]{$\rf$} (u3);
   \draw[rf] (u3) to node[above]{$\rf$} (r3);
   \draw[rf] (u3) to node[above]{$\rf$} (r2);
   \draw[mo] (w3) to node[above, sloped]{$\mopartial$} (u3);
   \draw[mo] (w3) to node[above, sloped]{$\mopartial$} (w2);
   \draw[mo] (u1) to node[above, sloped]{$\mopartial$} (w2);
   
\end{tikzpicture}
\caption{
A minimally coherent $\mopartial$ under $\rcmm$.
All events access the same location.
}
\label{fig:mincoh}
\vspace{-0.5cm}
\end{wrapfigure}
Our algorithm for consistency checking in $\rcmm$ relies on \cref{lem:rc_minimal_coherence} to construct a minimally-coherent partial modification order that witnesses the consistency of $\expartial$.
In particular, the algorithm employs the simple inference rule of $\mopartial$ edges illustrated earlier in \cref{fig:Chain-Rule},
and is a direct application of \cref{item:minimal_coherence_rc1} of minimal coherence.
At a glance, it might come as a surprise that such a simple rule suffices to deduce consistency.
Indeed, analogous relations have been used in the past as 
consistency witnesses (e.g., the writes-before order~\cite{Lahav:2015}, saturated traces~\cite{Abdulla:2018}, or C11Tester's framework~\cite{Luo:2021}).
However, these witness relations are stronger than minimal coherence,
while the algorithms for computing them
(and thus checking consistency) have a higher polynomial complexity $O(\NumEvents^3)$ (or $O(\NumEvents^2\cdot \NumThreads)$) compared to our nearly linear bound.


On a more technical level, not every total extension of $(\rf_x\cup \hb_x \cup  \mopartial_x)$ 
qualifies as the complete $\mo_x$ that witnesses consistency;
in particular, some extensions might violate atomicity.
This is also the case in prior witness relations~\cite{Lahav:2015,Abdulla:2018,Luo:2021}.
However, a key difference between prior work and minimal coherence is the following.
In prior witness relations, the events of an $\rf$-chain are either totally ordered or unordered with respect to any event outside this chain.
In contrast, minimal coherence allows only some events of the $\rf$-chain being ordered with outside events.
For example, in \cref{fig:mincoh} observe that $(\ud_1, \wt_2) \in \mopartial$. 
This implies that, due to atomicity, the pair $(\ud_2, \wt_2)$ must be ordered in any valid total extension of $\mopartial$.
However, minimal coherence does not force $(\ud_2, \wt_2)$ in $\mopartial$.
Nevertheless, our proof of~\cref{lem:rc_minimal_coherence} shows that, as long as $\mopartial_x$ is minimally coherent, there always exists an extension $\mopartial'_x\supseteq \mopartial_x$ 
that can serve as the witnessing modification order, in the spirit of the prior notions of witness relations.
In \cref{fig:mincoh}, for example, this extension
would be $\mopartial'_x = \mopartial_x \cup (\ud_2, \wt_3)$.


\begin{algorithm*}[t]
\Input{Events $\E$, program order $\po$ and reads-from relation $\rf$}
\BlankLine
\lIf{$(\po \cup \rf)$ is cyclic or $\rf$ violates weak-atomicity}{
    \declare `Inconsistent' \label{line:rf-rc20-po-rf-acyclic}
} 
\Let $\HB$ be an $\E$-indexed array storing the $\hb$-timestamps of events \label{line:rf-rc20-hb-array-init} \;
\Let $\set{\view{\WtLst}{t,x}{u}}_{t, x, u}$ and 
$\set{\view{\RdLst}{t,x}{u}}_{t, x, u}$ be data structures implementing $\glw()$ and $\glr()$ \label{line:rf-rc20-lwb-array-init} \;
\Let $\TC$ and $\PC$ be $\E$-indexed arrays denoting the top and position of events in their $\rf$-chains \label{line:rf-rc20-tc-pc-array-init}\;
\lForEach{$x \in \E.locs$}{
    $\mopartial_{x} \gets \emptyset$;
}
\ForEach{$\event \in \E$ in $\po$-order}{
\label{line:rf-rc20-mo-infer-start}
    \Case{$\event = \rd(t, x)$ or $\event = \ud(t, x)$}{
        \Let $\wt_{\mathsf{rf}} = \rfinv(\event)$ \; 
        \ForEach{$u \in \E.tids$}{
            \Let $c_u$ = \lIf*{$(\event.\op = \ud \land u = t)$}{$\HB[\event][u] - 1$} \lElse*{$\HB[\event][u]$} \;
            \Let $\wt_u^\wt = \apicall{\view{\WtLst}{u,x}{t}}{get}{$c_u$}$ and \Let $\wt_u^\rd = \rfinv(\apicall{\view{\RdLst}{u,x}{t}}{get}{$c_u$})$ \label{line:rf-rc20-latest-writes} \;
            \For{$\wt_u \in \set{\wt_u^\wt, \wt_u^\rd}$}{ \label{line:rf-rc20-loop-glr-glw}
                \uIf{$(\TC[\wt_{\mathsf{rf}}] \neq \TC[\wt_u])$ or $(\PC[\wt_{\mathsf{rf}}] < \PC[\wt_u])$}{ \label{line:rf-rc20-check-chain-condition}
                    $\mopartial_{x} \gets \mopartial_{x} \cup \set{(\wt_u, \TC[\wt_{\mathsf{rf}}])}$ \label{line:rf-rc20-update-mo}
                }
            }
        }
    }
}
\ForEach{$x \in \E.locs$}{
    \lIf{$(\rf_x \cup \hb_x \cup \mopartial_{x})$ is cyclic}{\declare `Inconsistent' \label{line:rf-rc20-cycle}}
}
\declare `Consistent'
\caption{
\label{algo:rf-consistency-rc20}
Checking consistency for $\rcmm$.
}
\end{algorithm*}
\normalsize

\bparagraph{Algorithm}
The insights made above are turned into a consistency checking procedure in \cref{algo:rf-consistency-rc20}.
This algorithm first verifies the absence of $(\po\cup \rf)$ cycles 
(which also implies that $\hb\subseteq (\po\cup \rf)^+$ is irreflexive), and that $\rf$ follows weak-atomicity (\cref{line:rf-rc20-po-rf-acyclic}).
Then, it computes auxiliary data discussed in~\cref{sec:helper_functions} (\crefrange{line:rf-rc20-hb-array-init}{line:rf-rc20-tc-pc-array-init}).
The main computation is performed in \Crefrange{line:rf-rc20-mo-infer-start}{line:rf-rc20-update-mo}, 
where the algorithm constructs a minimally coherent partial modification 
order $\mopartial_x$ for each location $x$.
The algorithm iterates over all read/RMW events $\event$ 
accessing some location $x$, and identifies $\wt_{\mathsf{rf}} = \rfinv(\event)$.
Then, it iterates over all threads $u$ and identifies 
the $\po$-maximal write/RMW event $\wt_u$ such that
either $(\wt', \event)\in \hb$ (in which case $\wt'$ is the event $\wt_u^\wt$ in \cref{line:rf-rc20-latest-writes})
or $(\wt', \event)\in \rf;\hb$ (in which case $\wt'$ is the event $\wt_u^\rd$ in \cref{line:rf-rc20-latest-writes}).
It then checks whether $(\wt_u, \wt_{\mathsf{rf}})\not\in \rf^+$,
by checking that either $\wt_u$ and $\wt_{\mathsf{rf}}$
belong to different $\rf$-chains (`$(\TC[\wt_{\mathsf{rf}}] \neq \TC[\wt_u])$'), 
or $\wt_{\mathsf{rf}}$ appears earlier than $\wt_u$
in the common $\rf$-chain (`$(\PC[\wt_{\mathsf{rf}}] < \PC[\wt_u])$'); 
see~\cref{line:rf-rc20-check-chain-condition}.
If so, the algorithm inserts an ordering $(\wt_u, \TC[\wt_{\mathsf{rf}}]) $ in $\mopartial_x$ (\cref{line:rf-rc20-update-mo}).
Finally, \cref{line:rf-rc20-cycle} verifies that $\mopartial_x$ satisfies write-coherence.
\cref{fig:mincoh-rc20-rlx-a} displays the resulting $\mopartial$ computed by \cref{algo:rf-consistency-rc20} on a partial execution.

\bparagraph{Correctness and complexity}
Completeness follows from \cref{lem:read_coherence_atomicity}: every ordering inserted in $\mopartial_x$ is present in any modification order $\mo$ that witnesses the consistency of $\expartial$, while the acyclicity check in \cref{line:rf-rc20-cycle} is necessary for write-coherence.
Hence, if the algorithm returns ``Inconsistent'', $\expartial\not\models \rcmm$.
The soundness comes from the fact that $\mopartial$ satisfies \cref{item:minimal_coherence_rc1} of minimal coherence at the end of the loop of \cref{line:rf-rc20-mo-infer-start}.
Since all orderings inserted in $\mopartial$ are to the top of an $\rf$-chain,
\cref{item:minimal_coherence_rc2} of minimal coherence is trivially satisfied at all times.
Finally, if the acyclicity check in \cref{line:rf-rc20-cycle} passes, $\mopartial$ also satisfies \cref{item:minimal_coherence_rc3} of minimal coherence.
Thus, by \cref{lem:rc_minimal_coherence}, we have $\expartial\models \rcmm$.

The time spent in computing $\HB$, $(\TC,\PC)$ and 
accessing the lists $\set{\view{\WtLst}{t,x}{u}}_{t, x, u}$ and $\set{\view{\RdLst}{t,x}{u}}_{t, x, u}$
is $O(\NumEvents\cdot \NumThreads)$ (\cref{sec:helper_functions}).
The number of orderings added in $\mopartial$ is $O(\NumEvents \cdot \NumThreads)$,
taking $O(\NumEvents \cdot \NumThreads)$ total time.
For each location $x\in \E.locs$, the acyclicity check in \cref{line:rf-rc20-cycle} can be performed in $O(\NumEvents_x\cdot \NumThreads)$ time, where $\NumEvents_x=|\locx{\W}\cup\locx{\Upd}|$.
For this, we construct a graph $G_x$ that consists of all events $(\locx{\W}\cup\locx{\Upd})$ and $O(\NumEvents_x\cdot \NumThreads)$ edges.
Given two events $\event_1=(t_1, x)$, $\event_2=(t_2, x)$ we have an edge $\event_1\to \event_2$ in $G_x$ iff $(\event_1, \event_2)\in (\rf\cup \mopartial_x)$ or 
$\event_1=\glw(t_1, x, \HB[\event_2][t_1]-1)$.
We then check for a cycle in $G_x$.
Repeating this for all locations $x$, we obtain $O(\NumEvents\cdot\NumThreads)$ total time.
We thus arrive at \cref{thm:rc_rf_upper}.

\thmrcrfupper*



\begin{figure}[t!]
\centering
\def\ystep{0.5}

\begin{subfigure}[b]{0.45\textwidth}
\centering
\scalebox{0.9}{
\begin{tikzpicture}[node distance=4.25mm and 7mm]
  \node (w1) at (0,0) {$\wt_1(x)$};
  \node[below = of w1] (rmw1) {$\ud_1(x)$};
  \node[below = of rmw1] (rmw2) {$\ud_2(x)$};

  \node[right = of w1] (w2) {$\wt_2(x)$};

  \node[right = of w2] (w3) {$\wt_3(x)$};
  \node[below = of w3] (w4) {$\wt_4(y)$};

  \node[right = of w3] (r1) {$\rd_1(y)$};
  \node[below = of r1] (r2) {$\rd_2(x)$};
  \node[below = of r2] (r3) {$\rd_3(x)$};

   \draw[po] (w3) to node[right]{$\po$} (w4);

   \draw[po] (r1) to node[right]{$\po$} (r2);
   \draw[po] (r2) to node[right]{} (r3);

   \draw[rf] (w4) to node[below, sloped]{$\rf$} (r1);
   \draw[rf,bend right=15] (rmw1) to node[below,pos=0.45]{$\rf$} (r2);
   \draw[bend right, rf] (w2) to node[below, sloped,pos=0.7]{$\rf$} (r3);

    \draw[rf] (w1) to node[left]{$\rf$} (rmw1);
    \draw[rf] (rmw1) to node[left]{$\rf$} (rmw2);

   \draw[mo] (w3) to node[below]{$\mopartial$} (w2);
   \draw[mo,bend right=20] (w3) to node[above, sloped]{$\mopartial$} (w1);
   \draw[mo] (rmw1) to node[below, sloped,pos=0.6]{$\mopartial$} (w2);

\end{tikzpicture}
}
\caption{
All accesses are $\MOrel$/$\MOacq$.
}
\label{fig:mincoh-rc20-rlx-a}
\end{subfigure}
\hfill
\begin{subfigure}[b]{0.45\textwidth}
\centering
\scalebox{0.9}{
\begin{tikzpicture}[node distance=4.25mm and 7mm]
  \node (w1) at (0,0) {$\wt_1(x)$};
  \node[below = of w1] (rmw1) {$\ud_1(x)$};
  \node[below = of rmw1] (rmw2) {$\ud_2(x)$};

  \node[right = of w1] (w2) {$\wt_2(x)$};

  \node[right = of w2] (w3) {$\wt_3(x)$};
  \node[below = of w3] (w4) {$\wt_4(y)$};

  \node[right = of w3] (r1) {$\rd_1(y)$};
  \node[below = of r1] (r2) {$\rd_2(x)$};
  \node[below = of r2] (r3) {$\rd_3(x)$};

   \draw[po] (w3) to node[right]{$\po$} (w4);

   \draw[po] (r1) to node[right]{$\po$} (r2);
   \draw[po] (r2) to node[right]{} (r3);

   \draw[rf] (w4) to node[below, sloped]{$\rf$} (r1);
   \draw[rf,bend right=15] (rmw1) to node[below,pos=0.45]{$\rf$} (r2);
   \draw[bend right, rf] (w2) to node[below, sloped,pos=0.7]{$\rf$} (r3);

    \draw[rf] (w1) to node[left]{$\rf$} (rmw1);
    \draw[rf] (rmw1) to node[left]{$\rf$} (rmw2);

   \draw[mo] (rmw1) to node[below, sloped,pos=0.6]{$\mopartial$} (w2);
\end{tikzpicture}
}
\caption{
All accesses are $\MOrlx$.
}
\label{fig:mincoh-rc20-rlx-b}
\end{subfigure}
\caption{
A minimally coherent $\mopartial$ computed by \cref{algo:rf-consistency-rc20} (\subref{fig:mincoh-rc20-rlx-a}) and \cref{algo:rf-consistency-relaxed} (\subref{fig:mincoh-rc20-rlx-b}).
}
\label{fig:mincoh-rc20-rlx}
\end{figure}


\subsection{Consistency Checking for $\rlxmm$}\label{subsec:rf_rlx}
We now turn our attention to the $\rlxmm$ fragment.
As a strict subset of $\rcmm$ (where $\hb=\po$), consistency checking for this model can be performed in $O(\NumEvents\cdot\NumThreads)$ time by \cref{thm:rc_rf_upper}.
Although this bound is nearly linear time, here we show that the $\rlxmm$ fragment
enjoys a \emph{truly} linear time consistency checking, independent of $\NumThreads$ (\cref{thm:rlx_rf_upper}).
This improvement is based on two insights.

As $\rf$ edges do not induce any synchronization in this fragment, our first insight is that the input partial execution $\expartial=(\E, \po, \rf)$ can be partitioned into separate executions $\locx{\expartial}=(\locx{\E}, \locx{\po}, \locx{\rf})$, one for each location $x\in \E.locs$.
Indeed, we have $\expartial\models \rlxmm$ iff $\acy(\po \cup \rf)$ and $\locx{\expartial}\models \rlxmm$ for each $x\in \E.locs$.
Thus, without loss of generality, we may assume that $\expartial$ consists of a single location.
Our second insight comes from the simplified formulation of minimal coherence under $\rlxmm$.

\bparagraph{Minimal coherence under $\rlxmm$}
Let us revisit the concept of minimal coherence under $\rcmm$.
Focusing on the $\rlxmm$ fragment, we have $\hb=\po$.
Thus, the first and third conditions of minimal coherence are reduced to the following.
\begin{compactenum}[label=(\arabic*$'$)]
\item \label{item:minimal_coherence_rlx1} 
For every triplet $(\wt, \rd, \wt')$ accessing location $x$, if $(\wt', \rd)\in \rf^?;\po$ and $(\wt', \wt)\not \in \rf^+$, we have $(\wt', \TC[\wt])\in (\rf_x\cup \po_x \cup \mopartial_x)^+$.
\addtocounter{compactenumi}{1}
\item \label{item:minimal_coherence_rlx3} 
$(\rf_x\cup\po_x\cup \mopartial_x)$ is acyclic, for each $x\in\E.locs$.
\end{compactenum}
Similarly to $\rcmm$, \cref{fig:mincoh} also serves as an illustration of the above definition.
The key insight towards a truly linear-time algorithm is as follows.
Consider the execution of \cref{algo:rf-consistency-rc20} on a partial execution $\expartial$ under $\rlxmm$ semantics.
Further, consider a read/RMW event $\event$ processed by the algorithm with $\wt_{\mathsf{rf}}=\rfinv(\event)$.
Among all events $\wt'$ forming a triplet $(\wt_{\mathsf{rf}}, \event, \wt')$ and such that $(\wt', \event)\in \rf^?;\po$, there exists one that is $(\rf\cup\po\cup \mopartial_x)^+$-maximal.
In particular, if the immediate $\po$-predecessor of $\event$ is a read event $\rd$, then the event $\rfinv(\rd)$ is this maximal $\wt'$.
Otherwise, the immediate $\po$-predecessor of $\event$ is a write/RMW event $\wt''$, which is also the maximal $\wt'$.
Thus, it suffices to keep track of this information on-the-fly,
and only insert $(\wt', \TC[\wt_{\mathsf{rf}}])$ in $\mopartial_x$, if necessary, to make $\mopartial_x$ minimally-coherent.
As we now do not have to compute $\HB$-timestamps or iterate over all threads during the processing of $\event$, we have a truly linear-time algorithm.



\begin{algorithm*}[t]
\Input{Events $\E$, program order $\po$ and reads-from relation $\rf$}
\BlankLine
\lIf{$(\po \cup \rf)$ is cyclic or $\rf$ violates weak-atomicity}{
    \declare `Inconsistent' \label{line:rf-relaxed-hb-acyclic}
} 
\Let $\TC$ and $\PC$ be $\E$-indexed arrays denoting the top and position of events in their $\rf$-chains \;
\ForEach{$x \in \E.locs$}{
    \lForEach{$t \in \locx{\E}.tids$}{ \label{line:rf-relaxed-lw-init}
        $\LW_{t,x} \gets \nil$
    }
    $\mopartial_{x} \gets \emptyset$ \;
    \ForEach{$\event \in \locx{\E}$ in $(\po_x \cup \rf_x)$-order}{\label{line:rf-rlx-mo-infer-start}
        \Case{$\event = \wt(t, x)$}{
            $\LW_{t,x} \gets \event$ \; \label{line:rf-relaxed-write}
        }
        \Case{$\event = \rd(t, x)$}{
            \If{$(\TC[\rfinv_x(\event)] \neq \TC[\LW_{t,x}])$ or $(\PC[\rfinv_x(\event)] < \PC[\LW_{t,x}])$}{ \label{line:rf-relaxed-check-read}
                $\mopartial_{x} \gets \mopartial_{x} \cup \set{(\LW_{t,x}, \TC[\rfinv_x(\event)])}$ \label{line:rf-relaxed-updated-mo-read}
            }
            $\LW_{t,x} \gets \rfinv(\event)$ \;
        }
        \Case{$\event = \ud(t, x)$}{
            Execute \crefrange{line:rf-relaxed-check-read}{line:rf-relaxed-updated-mo-read} followed by \cref{line:rf-relaxed-write} \label{line:rf-relaxed-end} \;
        }
    }
    \lIf{$(\po_x \cup \rf_x \cup \mopartial_{x})$ is cyclic}{\declare `Inconsistent'  \label{line:rf-relaxed-mo-cycle}} 
}
\declare `Consistent'
\caption{Checking consistency for $\rlxmm$.}
\label{algo:rf-consistency-relaxed}
\end{algorithm*}
\normalsize

\bparagraph{Algorithm}
The above insights are turned into an algorithm in \cref{algo:rf-consistency-relaxed}.
The algorithm first verifies that $(\po\cup \rf)$ is acyclic and $\rf$ satisfies weak-atomicity (\cref{line:rf-relaxed-hb-acyclic}).
Then, it performs a separate pass for each location $x$ and constructs the minimally coherent $\mopartial_x$.
To this end, it keeps track in $\LW_{t,x}$ the unique $(\rf_x\cup\po_x\cup \mopartial_x)^+$-maximal write/RMW event that has an $\rf^?;\po$ path to the current event of thread $t$.
When a read/RMW event $\event$ is processed, the algorithm potentially updates $\mopartial_x$ with an ordering $(\LW_{t,x}, \TC[\wt_{\mathsf{rf}}])$ (\cref{line:rf-relaxed-check-read}), using the same condition as in \cref{algo:rf-consistency-rc20}.
\cref{fig:mincoh-rc20-rlx} contrasts the $\mopartial$ computed by \cref{algo:rf-consistency-relaxed} to the $\mopartial$ computed by \cref{algo:rf-consistency-rc20} on the same partial execution but with different access levels.
We arrive at the following theorem.

\thmrlxrfupper*

\subsection{A Super-Linear Lower Bound for RMW-Free $\ramm$, $\wramm$, and $\sramm$}\label{subsec:rf_lower}

Finally, we address the existence of a truly linear-time algorithm for any model other than $\rlxmm$.
We show that this is unlikely, by proving the two lower bounds of \cref{thm:rf_lower}.
The proof is via a \emph{fine-grained} reduction from the problem of checking triangle freeness in undirected graphs, which suffers the same lower bounds. That is, there is no algorithm (resp. combinatorial algorithm, under the BMM hypothesis) for checking triangle-freeness in time $O(n^{\omega/2 - \epsilon})$ (resp. $O(n^{3/2 - \epsilon})$) for any fixed $\epsilon > 0$~\cite{Williams2018}, where $n$ is the number of nodes in the graph.

\newcommand{\varN}{y}
\newcommand{\varE}{x}
\newcommand{\varH}{z}
\newcommand{\juncEvRd}{\sf\rd Jxn}
\newcommand{\juncEvWt}{\sf\wt Jxn}
\newcommand{\hbEv}{\sf H}
\newcommand{\conn}[2]{{#1}\to{#2}}

\bparagraph{Reduction}
Given a graph $G = (V_G, E_G)$ of $n$ vertices, we construct an RMW-free partial execution $\expartial = \tup{\E, \po, \rf}$ with $|\E|=O(n)$
such that $\expartial$ is consistent with any of $\ramm$,
$\wramm$ and $\sramm$ iff $G$ is triangle-free.
For simplicity, we let $V_G=\{1,\dots, n\}$.

\eparagraph{Events and memory locations}
We start with the event set $\E$.
For the moment, all events belong to different threads, while we only define the memory location of an event when relevant.
For every node $\alpha \in V_G$, $\expartial$ contains
(i)~a location $\varN_\alpha$ and a write event $\wt_{\alpha}(\varN_\alpha)$ and
(ii)~auxiliary ``junction'' events $\juncEvRd_{\alpha}$ and $\juncEvWt_{\alpha}$ on fresh locations.
For every edge $(\alpha, \beta) \in E_G$ with $\alpha < \beta$, $\expartial$ contains
(i)~an event $e_{(\alpha, \beta)}$ that accesses a fresh location,
(ii)~a read event $\rd^{\alpha}_{\beta}(\varN_{\beta})$, and
(iii)~a write event $\wt^{\beta}_{\alpha}(\varN_{\alpha})$.

\eparagraph{Relations $\rf$ and $\hb$}
We now define the $\rf$ relation.
Our construction also makes certain events $\hb$ ordered.
This can be done trivially by introducing auxiliary events with an $\rf$ relation between them, while $\expartial$ remains of size $O(n)$.
In particular, every $(\event_1, \event_2)\in\hb$ edge can be simulated using fresh events $\rd, \wt$ such that 
(i)~$(\wt, \rd) \in \rf$
(ii)~$(\event_1,\wt)\in \po$, and
(iii)~$(\rd, \event_2)\in \po$.
For simplicity of presentation, we do not mention these events explicitly, but rather directly the $\hb$ relation they result in.
For every edge $(\alpha, \beta) \in E_G$ with $\alpha < \beta$, we have the following relations:
\[
(\wt_{\beta}, \rd^{\alpha}_{\beta}) \in \rf
\qquad \qquad\quad\quad
(\wt_{\alpha}, \wt^{\beta}_{\alpha}) \in \hb
\qquad\qquad\quad\quad  
(\wt^{\beta}_{\alpha}, \juncEvWt_{\beta}) \in \hb
\]
\[
(\juncEvWt_{\beta}, e_{(\alpha, \beta)}) \in \hb 
\qquad\qquad 
(e_{(\alpha, \beta)}, \juncEvRd_{\alpha}) \in \hb
\qquad\qquad
(\juncEvRd_{\alpha}, \rd^{\alpha}_{\beta}) \in \hb
\]


\begin{figure}
\scalebox{0.9}{
\begin{tikzpicture}[thick,
pre/.style={<-,shorten >= 2pt, shorten <=2pt, very thick},
post/.style={->,shorten >= 3pt, shorten <=3pt,   thick},
seqtrace/.style={line width=2},
und/.style={very thick, draw=gray},
event/.style={rectangle, minimum height=5mm, minimum width=8mm,  line width=0.5pt, inner sep=1},
virt/.style={circle,draw=black!50,fill=black!20, opacity=0}]

\begin{scope}[shift={(-8,-1)},scale=0.7]
\node[circle,draw,inner sep=2pt,minimum size=4pt] (one) at (0, 0) {$1$};
\node[circle,draw,inner sep=2pt,minimum size=4pt] (two) at (1, -1.2) {$2$};
\node[circle,draw,inner sep=2pt,minimum size=4pt] (three) at (2, 0) {$3$};
\path (one) edge (two);
\path (one) edge (three);
\path (two) edge (three);
\end{scope}

\newcommand{\xstep}{1.7}
\newcommand{\ystep}{0.7}
\newcommand{\yaux}{0.17727}

\fill[gray!20]
(-2.5, 0.5*\ystep) to
(-2.5, -2.5*\ystep) to
(2.5, -2.5*\ystep) to
(2.5, 0.5*\ystep) to cycle;

\node[event] (e13) at (0, 0*\ystep) {$e_{(1,3)}$};

\node[event] (e1r) at (-2, -0.5*\ystep) {$\juncEvRd_{1}$};
\node[event] (e3w) at (2, -0.5*\ystep) {$\juncEvWt_{3}$};

\node[event] (w2) at (0, -1.5*\ystep) {$\wt_{2}$};
\node[event] (r1y2) at (-2, -2*\ystep) {$\rd^{1}_{2}$};
\node[event] (w3y2) at (2, -2*\ystep) {$\wt^{3}_{2}$};


\node[event] (w3y1) at (4, -2*\ystep) {$\wt^{3}_{1}$};
\node[event] (e23) at (4, 0*\ystep) {$e_{(2,3)}$};
\node[event] (r1y3) at (-4, -2*\ystep) {$\rd^{1}_{3}$};
\node[event] (e12) at (-4, 0*\ystep) {$e_{(1,2)}$};


\draw [hb] (e13) -- (e1r) node [midway, above, sloped] {$\hb$};
\draw [hb] (e3w) -- (e13) node [midway, above, sloped] {$\hb$};
\draw [hb] (e1r) -- (r1y2) node [midway, right] {$\hb$};
\draw [rf] (w2) -- (r1y2) node [midway, above, sloped] {$\rf$};
\draw [hb] (w2) -- (w3y2) node [midway, above, sloped] {$\hb$};
\draw [hb] (w3y2) -- (e3w) node [midway, left] {$\hb$};


\draw [hb] (w3y1) -- (e3w) node [midway, below, sloped] {$\hb$};
\draw [hb] (e3w) -- (e23) node [midway, above, sloped] {$\hb$};
\draw [hb] (e1r) -- (r1y3) node [midway, below, sloped] {$\hb$};
\draw [hb] (e12) -- (e1r) node [midway, above, sloped] {$\hb$};


\end{tikzpicture}
}
\caption{
\emph{Left:} A graph $G$ with three nodes $V_G = \set{1,2,3}$ containing a triangle. 
\emph{Right:} A slice of the partial execution $\expartial$ for $G$.
We have $(\wt_{2}, \wt^{3}_{2})\in \hb$ and $(\wt^{3}_{2},\rd^{1}_{2})\in \hb$,
thus violating weak read coherence in $\expartial$.
\label{fig:super-linear-lower-bound}
}
\end{figure}

\cref{fig:super-linear-lower-bound} illustrates the above construction for a slice of the constructed partial execution $\expartial$.
We conclude with a proof sketch of \cref{thm:rf_lower}, and refer to\begin{arxiv}~\cref{subsec:app_rf_lower}\end{arxiv}\begin{pldi}~\cite{arxiv}\end{pldi} for the full proof.

\bparagraph{Correctness and time complexity}
If there is a triangle $(\alpha, \beta, \gamma)$ in $G$ with $\alpha<\beta,\gamma$, then $(\wt^\gamma_{\beta}, \rd^{\alpha}_{\beta}) \in \hb$
because of the sequence of $\hb$ edges:
$(\wt^{\gamma}_{\beta}, \juncEvWt_{\gamma})$,
$(\juncEvWt_{\gamma}, e_{(\alpha, \gamma)})$,
$(e_{(\alpha, \gamma)}, \juncEvRd_{\alpha})$,
$(\juncEvRd_{\alpha}, \rd^{\alpha}_{\beta})$.
Together with $(\wt_{\beta}, \rd^{\alpha}_{\beta}) \in \rf$ and $(\wt_{\beta}, \wt^{\gamma}_{\beta}) \in \hb$
by construction, we obtain a weak-read-coherence violation.
In the other direction, if there are no triangles in $G$, then the modification order
$\mo = \bigcup_{\alpha \in V_G} \mo_{y_\alpha}$ where
$\mo_{y_\alpha}$ orders $\wt_{\alpha}$ before every other write $\wt^{\beta}_{\alpha}$
on $\varN_\alpha$, makes $\ex = \tup{\expartial.\E, \expartial.\po, \expartial.\rf, \mo}$ $\sramm$-
(and thus also $\ramm$- and $\wramm$-) consistent.
Such an $\mo$ ensures that $(\hb \cup \mo)$ is acyclic.
Triangle-freeness ensures read-coherence --- 
a violation of read coherence implies that there are three events
$e_1 = \wt_{\beta}, e_2 = \wt^{\gamma}_{\beta}, e_3 = \rd^{\alpha}_{\beta}$
such that $(e_1, e_3) \in \rf$, $(e_1, e_2) \in \mo$
and $(e_2, e_3) \in \hb$, implying a triangle $(\alpha, \beta, \gamma)$ in $G$.
The total time to construct $\expartial$ is $O(|V_G| + |E_G|)$. Further, our reduction is completely combinatorial.
We thus arrive at the following theorem.

\thmrflower*


\section{Experimental Evaluation}\label{sec:experiments}
We implemented our consistency-checking algorithms for $\ramm$/$\rcmm$ and evaluated their performance on two standard settings of program analysis, namely,
(i)~\emph{stateless model checking}, using \Trust~\cite{Kokologiannakis:2022}, and
(ii)~\emph{online testing}, using \celeventester~\cite{Luo:2021}.
These tools are designed to handle 
variants of C11 
including SC accesses, and performing race-detection, which are beyond the scope of this work.
Here, we focus on the consistency-checking component for the $\ramm$/$\rcmm$ fragment, which is common in these tools and our work.
We conducted our experiments on a machine running Ubuntu 22.04 with 2.4GHz CPU and 64GB of memory.


\bparagraph{Benchmarks} 
We used standard benchmark programs from prior state-of-the-art verification and testing papers~\cite{NorrisDemsky:2013,Abdulla:2018,Luo:2021,Kokologiannakis:cav:2021},
as well as the applications \bname{Silo}, \bname{GDAX}, \bname{Mabain}, and \bname{Iris}~\cite{Luo:2021} for online testing.
These benchmarks use C11 concurrency primitives extensively. 
For thorough evaluation, we have scaled up some of them, 
when their baseline versions were too small, by increasing the number of threads or loop counters.
Some benchmarks also contain accesses outside our scope; we converted those accesses to access modes applicable for our experiments, in line with the evaluation in prior works~\cite{Abdulla:2018,Lahav:2019}.


\subsection{Stateless Model Checking}\label{subsec:experiments_offline}

The \Trust model checker explores all behaviors of a bounded program by 
enumerating executions, making use of different strategies to avoid redundant exploration.
One such strategy is to enumerate partial executions $\expartial$
and perform a consistency check for the maximal ones, to verify that they represent valid program behavior.
As the number of explored executions is typically large, it is imperative that consistency checks are performed as fast as possible.

\bparagraph{Consistency checking inside \Trust}
The algorithm for consistency checking in \Trust constructs a writes-before order $\wb$~\cite{Lahav:2015}, which is a partial modification order that serves as a witness of consistency.
The time taken to construct $\wb$ is $O(\NumEvents^3)$, which has been identified as a bottleneck in the model checking task~\cite{Kokologiannakis:2019,Kokologiannakis:2022}.
We replaced \Trust's $\wb$ algorithm for consistency checking with the $\mopartial$ computation of \cref{algo:rf-consistency-rc20}, and measured
(i)~the speedup realized for consistency checking, and
(ii)~the effect of this speedup on the overall model-checking task.
We executed \Trust on several benchmarks, each with a time budget of $2$ hours, measuring the average time for consistency checking (for evaluating (i)) as well as the total number of executions explored (for evaluating (ii)).
Finally, we note that \Trust employs a number of simpler consistency checks during the exploration.
Although we expect that our new algorithm can improve those as well, 
we have left them intact as it was unclear to us how they interact with the rest of the tool,
and in order to maintain soundness of the obtained results.

\bparagraph{Experimental results}
Our results are shown in \cref{tab:expr-results-smc}.
We mark with $\dagger$ benchmarks on which the model checker found an error and halted early.
We observe that \cref{algo:rf-consistency-rc20} is always faster, typically by a significant margin.
The maximum speedup for consistency checking is $162 \times$, and the geometric mean of speedups is $36 \times$.
Regarding the number of executions, the model checker explores $4.3 \times$ more on (geometric) average, and as high as $71.6 \times$ more, when using~\cref{algo:rf-consistency-rc20}.
In some benchmarks, the two approaches observe a similar number of executions.
This is due to consistency checking being only part of the overall model-checking procedure,
which also consists of other computationally intensive tasks such as backtracking.
As consistency checking appears now to not be a bottleneck, it is meaningful to focus further optimization efforts on these other tasks.
For \bname{ttaslock}, we noticed a livelock that blocks the model checker.
Overall, our experiments highlight that the benefit of the new, nearly linear time property of consistency checking leads to a measurable speedup that positively impacts the overall efficiency of model checking.
We refer to\begin{arxiv}~\cref{subsec:app_experiments_smc}\end{arxiv}\begin{pldi}~\cite{arxiv}\end{pldi} for experiments on $\rcmm$, which lead to the same qualitative conclusions.


\begin{table}[H]
	\caption{
		Impact on model checking. 
  		Columns $2$ and $3$ denote the average time (in seconds) spent in consistency checking by resp. \Trust and  our algorithm.
		Columns $5$ and $6$ denote the total number of executions explored by resp. \Trust and  our algorithm.
		Columns $4$ and $7$ denote the respective speedups and ratios.
		\label{tab:expr-results-smc}
	}
	\setlength\tabcolsep{0.6pt}
	\renewcommand{\arraystretch}{1.0}
	\centering
	\scalebox{0.925}{
{\small
		\begin{tabular}{|c|c|c|c|c|c|c||c|c|c|c|c|c|c|}
			\hline
1 & 2 & 3 & 4 & 5 & 6 & 7 & 1 & 2 & 3 & 4 & 5 & 6 & 7\\
			\hline
			\multirow{2}{*}{\textbf{Benchmark}}& 

			\multicolumn{3}{c|}{ {\textbf{Avg. Time}} }  &
		\multicolumn{3}{c||}{ {\textbf{Executions}} }  &
		
		\multirow{2}{*}{\textbf{Benchmark}}
			&
			 \multicolumn{3}{c|}{ {\textbf{Avg. Time}} }
			& \multicolumn{3}{c|}{ {\textbf{Executions}} }
			
			 \\
			\cline{2-4}
			\cline{5-7}
			\cline{9-11}
			\cline{11-14}
			
			&\textbf{TruSt} & 
		\textbf{Our Alg.} &
		\textbf{S}
		&
		\textbf{TruSt} & 
		\textbf{Our Alg.} &
		\textbf{R} &
		       &
			\textbf{TruSt} &
			\textbf{Our Alg.} &
			\textbf{S} &
			\textbf{TruSt} & 
		\textbf{Our Alg.} &
		\textbf{R}
			\\
			\hline
\rule{0pt}{1em} 
barrier & 0.5  & 0.01 & 45.0  & 14K  & 81K  & 5.98  & ms-queue$\dagger$ & 0.07 & 0.01 & 7.0  & 208  & 208  & 1\\
buf-ring & 1.6  & 0.02 & 79.5  & 3K  & 6K  & 2.52  & mutex & 0.8  & 0.01 & 78.0  & 7K  & 218K  & 29.67 \\
chase-lev$\dagger$ & 0.0 & 0.0 & -  & 2  & 2  & 1 & peterson & 0.3  & 0.01 & 26.0  & 4K  & 5K  & 1.16 \\
control-flow & 0.08 & 0.01 & 8.0  & 88K  & 1M  & 16.56  & qu & 0.1  & 0.01 & 12.0  & 2K  & 2K  & 1.04 \\
dekker & 0.2  & 0.01 & 24.0  & 26K  & 64K  & 2.45  & seqlock & 0.2  & 0.01 & 24.0  & 24K  & 171K  & 7.02 \\
dq & 0.1  & 0.01 & 13.0  & 40K  & 163K  & 4.11  & sigma & 0.2  & 0.01 & 22.0  & 4K  & 62K  & 5.97\\
exp-bug & 0.1  & 0.01 & 13.0  & 55K  & 2M  & 29.98  & spinlock & 1.0  & 0.01 & 98.0  & 5K  & 31K  & 6.04 \\
fib-bench & 1.6  & 0.01 & 162.0  & 4K  & 315K  & 71.63  & szymanski & 0.4  & 0.01 & 44.0  & 2K  & 2K  & 1.16 \\
gcd & 0.3  & 0.01 & 31.0  & 18K  & 80K  & 4.46  & ticketlock & 0.9  & 0.02 & 45.5  & 7K  & 89K  & 12.47 \\
lamport & 0.4  & 0.01 & 36.0  & 17K  & 93K  & 5.54  & treiber & 0.6  & 0.01 & 58.0  & 403  & 403  & 1\\
linuxrwlocks & 0.4  & 0.01 & 40.0  & 12K  & 32K  & 2.75  & ttaslock & 1.0  & 0.01 & 97.0  & 401  & 401  & 1\\
mcs-spinlock & 1.1  & 0.01 & 108.0  & 5K  & 38K  & 7.19  & twalock & 0.5  & 0.01 & 52.0  & 8K  & 30K  & 3.58 \\
mpmc & 0.7  & 0.01 & 66.0  & 10K  & 77K  & 8.01  &  &   &   &  &   &   & \\
\hline
\hline
\textbf{Totals} & - & - & - & - & - & - & - & \textbf{14.5} & \textbf{0.26} & - & \textbf{356K} & \textbf{4.6M} & - \\
\hline
		\end{tabular}
}  
	}
\end{table}

%


\subsection{Online Testing}\label{subsec:experiments_online}

We now turn our attention to the online testing setting using \celeventester's framework.
In \celeventester's setting, a partial execution $\expartial$ is constructed incrementally, by iteratively
(i)~revealing a randomly chosen new read/RMW event $\rd$,
(ii)~choosing a valid writer $\rf(\rd)$, and
(iii)~continuing the execution of the program until the next read/RMW events.
Hence, every iteration requires a consistency check.
Although we could use \cref{algo:rf-consistency-rc20} from scratch at each step, this would result in unnecessary recomputations of $\mopartial$.
Instead, we follow a different approach here --- we maintain $\mopartial$ on-the-fly, in a way that incremental consistency checks can be done more efficiently.

\bparagraph{Incremental consistency checking}
Our incremental algorithm constructs a similar minimally coherent partial modification order $\mopartial$ as our offline algorithm (\cref{algo:rf-consistency-rc20}).
However, unlike the offline setting, 
we need efficient incremental consistency checks.
For this, we maintain a per-location order 
$\hbmo_x$ on write/RMW events that satisfies following invariants:
(i)~$\hbmo_x\subseteq (\hb_x \cup \mopartial_x)^+$ and
(ii)~$(\hb_x\cup (\mopartial_x;\po_x^?))\subseteq \hbmo_x$.
In order to decide whether a new read/RMW event $\rd(x)$ can observe a write/RMW event $\wt(x)$, 
we must determine if there exists another write/RMW event $\wt'(x)$ such that
$(\wt', \rd) \in \hb_x$ and $(\wt', \wt) \in (\hb_x \cup \mopartial_x)^+$,
as this would lead to a consistency violation.
Using $\hbmo_x$, this check is performed as follows:
(a)~for each thread $u$, we identify the $\po$-maximal write/RMW event $\wt'(u,x)$ for which $(\wt', \rd) \in \hb$, and
(b)~we update 
$
\hbmo_x \gets \hbmo_x\cup \setpred{(\wt'', \wt')}{(\wt'', \wt') \in \hbmo_x^+}
$.
Due to invariant (ii), at this point we are guaranteed that, for all write events $\wt''$, we have $(\wt'', \wt')\in \hbmo_x$ iff $(\wt'', \wt')\in (\hb_x\cup \mopartial_x)^+$.
We can now test whether $\wt$ is a valid writer for $\rd$ by checking whether $(\wt,\wt')\in \hbmo_x$, for one of the aforementioned write/RMW events $\wt'$.
We refer to\begin{arxiv}~\cref{subsec:app_experiments_c11}\end{arxiv}\begin{pldi}~\cite{arxiv}\end{pldi} for implementation details.

\bparagraph{Main differences with \celeventester}
The consistency-checking algorithm implemented inside \celeventester also infers $\mo$ orderings as implied by read and write coherence.
The two key differences between that approach and our incremental algorithm described above are the following:
(i)~\celeventester's $\mo$ is stronger than our minimally coherent $\mopartial$ that is contained in $\hbmo$, and
(ii)~this $\mo$ is always maintained transitively-closed.
These two differences are expected to make $\hbmo$ computationally cheaper to maintain than \celeventester's $\mo$.
Although we also have to compute transitive paths $\hbmo^+$ when encountering read/RMW operations (step (b) above), in our experience, these paths typically touch a small part of the input, leading to an efficient computation.


\begin{table}[H]
	\caption{
		Impact on online testing. 
		Columns $2$, $3$ and $4$ give the average number of events, threads and locations in each benchmark. 
		Columns $5$ and $6$ denote the average times in seconds to check for consistency by resp. \celeventester and our algorithm.
		Column 6 denotes the speedup.
		\label{tab:expr-results-c11}
	}
	\setlength\tabcolsep{0.7pt}
	\renewcommand{\arraystretch}{1.0}
	\centering
	\scalebox{0.95}{
{\small
		\begin{tabular}{|c|c|c|c|c|c|c||c|c|c|c|c|c|c|c|c|c|c|}
			\hline
1 & 2 & 3 & 4 & 5 & 6 & 7 & 1 & 2 & 3 & 4 & 5 & 6 & 7\\
			\hline
            \textbf{Benchmark} & 
            $\mathbf{n}$ &
			$\mathbf{k}$ &
			$\mathbf{d}$ &
			\textbf{C11Test.} & 
		\textbf{Our Alg.}	
		&
			\textbf{SpeedUp} &
			\textbf{Benchmark} &
			$\mathbf{n}$ &
			$\mathbf{k}$ &
			$\mathbf{d}$ &
			\textbf{C11Test.} & 
		    \textbf{Our Alg.}	
			&
			\textbf{SpeedUp}
			\\
			\hline
\rule{0pt}{1em} 
control-flow & 52K  & 25  & 3  & 4.2  & 0.04 & 104.25 & mutex & 15M  & 11  & 11  & 20.9  & 16.8  & 1.24 \\
sigma & 36K  & 10  & 9  & 2.4  & 0.03 & 80  & gdax & 11M  & 5  & 46K  & 7.0  & 5.7  & 1.23 \\
dq & 599K  & 4  & 2  & 22.1  & 0.5  & 46.78  & spinlock & 5M  & 11  & 10  & 7.0  & 5.9  & 1.18 \\
iris-1 & 1M  & 12  & 45  & 28.4  & 1.8  & 16.25  & ticketlock & 14M  & 6  & 20  & 15.7  & 13.2  & 1.18 \\
seqlock & 478K  & 17  & 20  & 14.4  & 1.4  & 10.38  & ttaslock & 5M  & 11  & 10  & 7.8  & 6.6  & 1.18 \\
exp-bug & 2M  & 4  & 2  & 1.6  & 0.7  & 2.35  & fib-bench & 6M  & 3  & 2  & 3.3  & 2.8  & 1.18 \\
chase-lev & 7M  & 5  & 2  & 12.4  & 5.4  & 2.3  & qu & 1M  & 10  & 29  & 1.6  & 1.3  & 1.18 \\
linuxrwlocks & 7M  & 6  & 10  & 10.9  & 5.9  & 1.84  & treiber & 1M  & 6  & 11  & 1.1  & 1.0  & 1.12 \\
mabain & 5M  & 6  & 18  & 7.3  & 4.0  & 1.82  & silo & 8M  & 4  & 4K  & 5.3  & 4.8  & 1.1 \\
iris-2 & 12M  & 3  & 12  & 9.9  & 5.7  & 1.72  & barrier & 8M  & 5  & 20  & 7.8  & 7.4  & 1.04 \\
mcs-lock & 10M  & 11  & 30  & 17.8  & 12.2  & 1.45  & mpmc & 9M  & 10  & 3  & 12.9  & 12.5  & 1.03\\
lamport & 6M  & 3  & 5  & 3.4  & 2.5  & 1.36  & indexer & 2M  & 17  & 128  & 1.6  & 1.5  & 1.03 \\
peterson & 5M  & 3  & 4  & 2.9  & 2.2  & 1.29  & buf-ring & 5M  & 9  & 12  & 7.0  & 6.8  & 1.02 \\
spsc & 10M  & 3  & 699  & 6.4  & 5.0  & 1.28  & ms-queue & 4M  & 11  & 13  & 8.2  & 8.2  & 0.99 \\
dekker & 16M  & 3  & 3  & 9.2  & 7.2  & 1.27  & gcd & 5M  & 3  & 2  & 2.2  & 2.5  & 0.89 \\
twalock & 4M  & 11  & 4K  & 8.9  & 7.0  & 1.26  & szymanski & 4M  & 3  & 3  & 1.8  & 2.3  & 0.81 \\
\hline
\textbf{Totals} & - & - & - & - & - & - & - & \textbf{196M} & - & - & \textbf{286.4} & \textbf{170.8} & -\\
			\hline
		\end{tabular}
}  
	}
\end{table}
\vspace{-0.2in}

\bparagraph{Experimental results}
Our results are shown in \cref{tab:expr-results-c11}.
For robust measurements, we report averages over $10$ executions per benchmark, focusing on benchmarks for which at least one algorithm took $\geq 1$s.
Our approach achieves a maximum speedup of $104.2 \times$ 
and a geometric speed-up of $2 \times$.
In more detail, we observe significant improvement in the first $5$ benchmarks, and consistent speedups of at least $1.2\times$ on $18$ benchmarks.
We have encountered only 2 benchmarks, \bname{gcd} and \bname{szymanski}, on which the new algorithm is arguably slower.
These benchmarks contain no RMWs and only a small number of write events.
This results in the computation of very small modification orders, diminishing the benefit of our algorithm and results in a marginal slowdown.

\section{Conclusion}\label{sec:conclusion}

Checking the reads-from consistency of concurrent executions is a fundamental computational task in the development of formal concurrency semantics, program verification and testing.
In this paper we have addressed this problem in the context of C11-style weak memory models, 
for which this problem is both highly meaningful, and intricate.
We have developed a collection of algorithms and complexity results that are either optimal or nearly-optimal, and thus accurately characterize the complexity of the problem in this setting.
Further, our experimental evaluation indicates that the new algorithms have a measurable, and often significant, impact on the consistency-checking tasks that arise in practice.
Thus our algorithms enable the development of more performant and scalable program analysis tools in this domain.
This work is focused on non-SC fragments of C11, as otherwise, consistency checking inherits the $\NP$-hardness of $\scmm$ consistency checking.
For applications having an abundance of SC accesses, however, a meaningful direction for future work is to combine our techniques with heuristics developed for checking $\scmm$ consistency (e.g., \cite{Abdulla:2018, Pavlogiannis2019}), and apply them on programs that mix all types of C11 accesses.

\begin{acks}
Andreas Pavlogiannis was partially supported by a research grant (VIL42117) from VILLUM FONDEN.
Umang Mathur was partially supported by the Simons Institute for the Theory of Computing,
and by a Singapore Ministry of Education (MoE) Academic Research Fund (AcRF) Tier 1 grant.
Shankaranarayanan Krishna was partially supported by the SERB MATRICS grant MTR/2019/000095. 
Parosh Aziz Abdulla was partially supported by the Swedish Research Council. 
\end{acks}

\section*{Data and Software Availability Statement}

The artifact developed for this work is available~\cite{artifact}, which contains all source codes and experimental data necessary to
reproduce our evaluation in~\cref{sec:experiments}.


\bibliography{references}

\begin{arxiv}
\newpage

\appendix

\section{Proofs} \label{sec:app_rf}

\subsection{Proofs from {\cref{sec:helper_functions}}}\label{subsec:app_helper_functions}


\lemreadcoherenceatomicity*
\begin{proof}
First we have $(\wt', \wt)\in \mo$, due to read-coherence.
Since $(\wt', \wt)\not \in \rf^+$, we have that $\wt'$ is not in the $\rf$-chain of $\wt$, otherwise we would have $(\wt,\wt')\in \rf^+$, which together with $(\wt', \wt)\in \mo$, would violate write-coherence.
Due to atomicity, we then have $(\wt', \wt_1) \in \mo $, where $\wt_1=\rfinv(\wt)$.
The final statement follows by applying atomicity inductively on the $\rf$-chain, i.e., for each $\wt_i=\rfinv(\wt_{i-1})$, we have $(\wt', \wt_i)\in \mo$.
\end{proof}

\subsection{Proofs from {\cref{subsec:rf_wra}}}\label{subsec:app_rf_wra}

%

\thmwrarfupper*

\begin{proof}
We argue about both correctness and running time.

\bparagraph{Correctness of \cref{algo:rf-consistency-wra}}
We first argue that whenever the algorithm reports a violation, 
the input $\expartial$ is not WRA consistent.
If $\hb$ is acyclic or $\rf$ violates weak-atomicity (\cref{line:rf-wra-hb-acyclic}),
then clearly $\expartial$ violates one of irrHB or  weak-atomicity and thus violates
$\wramm$ consistency.
Next, suppose a violation is reported on \cref{line:wra-rf-weak-read-coherence} when processing event $e$.
Observe that $e.\op \in \set{\rd, \ud}$, and let $x = e.\lloc$, as well as $w = \rfinv(e)$, with $t' = w.\tid$.
There must be a thread $u$ and another event $w_u$
such that $w_u \not\in \set{w, e}$
and further, $(w, w_u) \in \hb$ and $(w_u, e) \in \hb$.
This clearly violates weak-read-coherence since
$(w, w_u) \in \hbloc$, $(w_u, e) \in \hb$ and $(e, w) \in \rf^{-1}$
and thus
$(\hbloc;[\W \cup \Upd];\hb;\rf^{-1})$ is reflexive.

Now, let's assume that no violation was reported.
Then clearly $\hb$ is irreflexive and $\rf$ satisfies weak-atomicity.
Now assume on the contrary that weak-read-coherence was violated by some triplet $(w_u, w, r)$
where $w_u.\lloc = w.\lloc = r.\lloc = x$, $(w, r) \in \rf$, $w_u \not\in \set{w_u, w}$,
$w_u.\tid = u$, $w.\tid = t'$ and $\set{(w, w_u),(w_u, r)} \subseteq \hb$.
Then, when processing event $r$, we will consider the event
$w_u^\textsf{last}$ such that $w_u^\textsf{last}$ is the last event
that is both $\po^?$-after $w_u$ and is $\hb$-before $r$,
yielding $\set{(w, w_u^\textsf{last}),(w_u^\textsf{last}, r)} \subseteq \hb$.
This means the algorithm will report a violation, which is a contradiction.

\bparagraph{Complexity}
First, checking for reflexivity of $\hb$ takes $O(\NumEvents)$ time --- this can be 
done using a cycle detection algorithm on a graph with $|\E| = \NumEvents$ nodes
and with $O(\NumEvents)$ edges (corresponding to $\rf$ and $\po_\imm$ orderings).
Next, checking for weak-atomicity of $\rf$ also takes $O(\NumEvents)$ --- scan over all write events
and check if there are more than one RMW events that read from any given write.
Next, the computation of $\hb$ timestamps takes $O(\NumEvents\cdot\NumThreads)$ time (\cref{prop:HB-computation})
and the computation involved in initializing and accessing the lists $\set{\view{\WtLst}{t,x}{u}}_{t, x, u}$ takes $O(\NumEvents\cdot \NumThreads)$ time.
Finally, at each read event, the algorithm spends $O(\NumThreads)$ time.
This gives a total running time of $O(\NumEvents\cdot\NumThreads)$.
\end{proof}


\subsection{Proofs from {\cref{subsec:rf_sra_lower}}}\label{subsec:app_sra_rf_upper}

\subsubsection{Proof of {\cref{thm:sra_rf_lower}}}
We first present the proof of the lower bound.

\lemschardness*
\begin{proof}
Consider two partial executions $\expartial_1=(\E_1, \po_1, \rf_1)$, $\expartial_2=(\E_2, \po_2, \rf_2)$, where $\E_1$ consists of only read/write events, $\rf_1$ is bijective, and $\expartial_2$ is identical to $\expartial_1$ except that every read event of $\E_1$ is replaced by an RMW event in $\E_2$.
We argue that $\expartial_1\models \scmm$ iff $\expartial_2\models \scmm$.
First, consider an $\mo_2$ witnessing the consistency of $\expartial_2$.
Take $\mo_1$ be identical to $\mo_2$, restricted on the write events of $\E$.
Since the write events of $\E_1$ are a subset of the write/RMW events of $\E_2$, we have
$\mo_1\subseteq \mo_2$ and $\fr_1\subseteq \fr_2$,
where $\fr_i$ is the corresponding from-reads relation derived from $\mo_i$.
Since $(\po_2\cup \rf_2 \cup \mo_2 \cup \fr_2)$ is acyclic we have that $(\po_1\cup \rf_1 \cup \mo_1 \cup \fr_1)$ is also acyclic,
thus $\mo_1$ witnesses the $\scmm$ consistency of $\expartial_1$.

For the inverse direction, consider an $\mo_1$ witnessing the consistency of $\expartial_1$,
and let $\mo_2=\mo_1\cup \fr_1$.
Since there are no read events in $\E_2$, we have $\fr_2\subseteq \mo_2$.
Since $(\po_1\cup\rf_1\cup \mo_1 \cup \fr_1)$ is acyclic, we have that $(\po_2\cup\rf_2\cup \mo_2 ) = (\po_2\cup\rf_2\cup \mo_2 \cup \fr_2)$ is also acyclic.
Thus $\mo_2$ witnesses the $\scmm$ consistency of $\expartial_2$.
\end{proof}

\thmsrarflower*
\begin{proof}
Let $\expartial=(\E, \po, \rf)$ be an instance of the consistency problem for $\scmm$.
In light of \cref{lem:sc_hardness}, we assume that
$\E$ consists of only write/RMW events.
Since $\scmm\sqsubseteq \sramm$, if $\expartial\models \scmm$ then $\expartial\models \sramm$.
For the inverse direction, assume that $\expartial\models \sramm$ and let $\ex=(\E, \po, \rf, \mo)$ be a witness execution, i.e., $\ex\models \sramm$.
Consider the derived relation $\fr$.
Since there are no read events in $\E$, we have $\fr\subseteq \mo$.
Thus $(\po\cup \rf\cup\mo)=(\po\cup \rf\cup\mo\cup \fr)$, and since the former is acyclic, due to strong-write-coherence, so is the latter.
Hence $\ex\models \scmm$, thereby serving as a witness for $\expartial\models \scmm$.
\end{proof}

\subsubsection{Proof of {\cref{thm:sra_rf_upper}}}

We now turn our attention to~\cref{thm:sra_rf_upper}.
We argue separately about soundness (\cref{lem:reach2co}) and completeness (\cref{lem:co2reach}).

\begin{restatable}{lemma}{lemreachtoco}
\label{lem:reach2co}
If node $\E$ is reachable  in $\Gg_{\expartial}$, then $\expartial\models \sramm$.
\end{restatable}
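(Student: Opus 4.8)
The plan is to turn a witnessing path directly into a modification order. I would fix any path $\emptyset = S_0 \to S_1 \to \cdots \to S_{\NumEvents} = \E$ in $\Gg_{\expartial}$; by construction each edge extends the current set by a single executable event, so the path induces a total \emph{addition order} $\prec$ on $\E$, where $\event_1 \prec \event_2$ iff $\event_1$ is added strictly before $\event_2$. I then define the candidate $\mo = \bigcup_x \mo_x$ by letting each $\mo_x$ be the restriction of $\prec$ to the write/RMW events on location $x$. Since $\prec$ is a strict total order on all of $\E$, each $\mo_x$ is a strict total order on $\locx{\W}\cup\locx{\Upd}$, as required by the definition of a modification order, and I set $\ex = \tup{\E, \po, \rf, \mo}$. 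It then remains to verify the four $\sramm$ axioms: PO-RF, strong-Wcoh, Rcoh, and atomicity.

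The observation driving the two ``global'' axioms is that $\prec$ linearizes $\hb$: executability condition \cref{item:executability1} forces every $\hb$-predecessor of an event to already lie in $S$ at the step the event is added, so $\hb \subseteq {\prec}$. Because $\po, \rf \subseteq \hb$ in the release-acquire setting and $\mo \subseteq {\prec}$ by construction, I obtain $\po \cup \rf \subseteq {\prec}$ and $\hb \cup \mo \subseteq {\prec}$. As $\prec$ is a strict total order it is acyclic, which immediately yields both PO-RF and strong-Wcoh.

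For the two remaining axioms I would match a hypothetical cyclic witness to the enabledness conditions. Suppose Rcoh fails, i.e.\ $(\rd, \rd) \in \fr;\rf^?;\hb$ for some read/RMW event $\rd$. Unfolding $\fr = \rf^{-1};\mo$ and collapsing the trailing $\rf^?;\hb$ into a single $\hb$ step (using $\rf \subseteq \hb$ and transitivity of $\hb$) produces a triplet $(\wt, \rd, \wt')$ with $\wt = \rfinv(\rd)$, $(\wt, \wt') \in \mo$ and $(\wt', \rd) \in \hb$. But enabledness condition~(a), applied at the step that adds $\wt$, forces $\wt' \in S$ at that moment, whence $\wt' \prec \wt$ and thus $(\wt', \wt) \in \mo_x$, contradicting $(\wt, \wt') \in \mo$. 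Symmetrically, an atomicity violation $(\ud, \ud) \in \fr;\mo$ unfolds to an RMW $\ud$ with writer $\wt_0 = \rfinv(\ud)$ and a write/RMW $\wt'$ satisfying $\wt_0 \prec \wt' \prec \ud$ in $\prec$; applying enabledness condition~(b) to the triplet $(\wt_0, \ud, \wt')$ at the step that adds $\wt'$ (where $\ud \notin S$ still) forces $\wt_0 \notin S$, contradicting $\wt_0 \prec \wt'$. Hence no such witness exists and $\ex \models \sramm$.

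The main obstacle, and the part requiring the most care, is the bookkeeping in the last step: correctly identifying which event in each triplet is the one ``being executed'' when the enabledness constraint is triggered, and keeping the $\mo$-direction straight so that the conclusion drawn from enabledness (namely $\wt' \prec \wt$, respectively $\wt_0 \notin S$) genuinely contradicts the assumed $\mo$-ordering. I would also state explicitly the collapse $\rf^?;\hb \subseteq \hb$ so as to avoid a separate case analysis on whether the $\rf^?$ step is the identity, and verify that the three events of each triplet are genuinely distinct (using irreflexivity of $\rf$ and of $\mo$, together with the $\setminus[\id]$ in the definition of $\fr$) so that the triplets invoked are legitimate.
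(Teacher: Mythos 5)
Your proposal is correct and follows essentially the same route as the paper's proof: extract the execution order $\prec$ from a root-to-terminal path, define $\mo$ as its per-location restriction, use executability condition (1) to get $\hb\cup\mo\subseteq{\prec}$ (hence strong-write-coherence, and PO-RF), and invoke enabledness conditions (a) and (b) to rule out read-coherence and atomicity violations. The only differences are presentational — you argue the last two axioms by contradiction and make the PO-RF check and triplet-distinctness explicit, whereas the paper states the same facts directly — so there is nothing substantive to flag.
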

\begin{proof}
Consider a path $\rho$ in $\Gg_{\expartial}$ from the root node $\emptyset$ to the terminal node $\E$.
We represent $\rho$ as a sequence of events, which are the ones executed along the corresponding edges traversed by $\rho$.
We construct a modification order $\mo$ such that, for any two write/RMW events $\wt_1, \wt_2$ on the same location,
we have $(\wt_1,\wt_2)\in \mo$ iff $\wt_1$ is executed before $\wt_2$ in $\rho$.
We argue that the resulting execution $\ex=(\E, \po,\rf,\mo)$ satisfies the $\sramm$ axioms.
\begin{compactenum}
\item \emph{Strong-write-coherence:}~Due to \cref{item:executability1} of the executability conditions, no event $\event$ of $\rho$ is $\hb$-ordered before any event appearing before $\event$ in $\rho$.
Moreover, by construction, no event $\event$ of $\rho$ is $\mo$-ordered before any event appearing before $\event$ in $\rho$.
Thus $\rho$ represents a linear extension of $(\hb\cup \mo)$, and hence write-coherence is satisfied.

\item \emph{Read-coherence:}~Consider any triplet $(\wt, \rd, \wt')$ such that $(\wt', \rd)\in\hb$.
It suffices to argue that $\wt'$ appears in $\rho$ before $\wt$, implying that $(\wt', \wt)\in \mo$, and thus read-coherence is not violated for $\rd$.
Indeed, \cref{item:enabledness1} of the enabledness condition makes $\wt$ not enabled in any node $S$ for which $\wt'\not \in S$.
Thus any path from the root $\emptyset$ to a node that contains both $\wt$ and $\wt'$ must execute $\wt'$ before $\wt$.

\item \emph{Atomicity:}~Consider any trilpet $(\wt, \ud, \wt')$, where $\ud\in \Upd$ is an RMW event.
We have to argue that $\wt'$ does not appear between $\wt$ and $\ud$ in $\rho$.
Indeed, \cref{item:enabledness2} of the enabledness condition makes $\wt'$ not enabled in any node $S$ for which $\wt\in S$ and $\ud\not \in S$.
Thus, any path from the root $\emptyset$ to a node that contains both $\wt$ and $\wt'$ must execute $\wt'$ before $\wt$.
\end{compactenum}

The desired result follows.
\end{proof}

\begin{restatable}{lemma}{lemco2reach}\label{lem:co2reach}
If $\expartial\models \sramm$, then node $\E$ is reachable in $\Gg_{\expartial}$.
\end{restatable}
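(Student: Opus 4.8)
The plan is to use the witnessing modification order to \emph{construct} an explicit path from $\emptyset$ to $\E$ in $\Gg_{\expartial}$. Concretely, since $\expartial \models \sramm$, I would fix an $\mo$ with $\ex = \tup{\E, \po, \rf, \mo} \models \sramm$. Strong-write-coherence is exactly $\acy(\hb \cup \mo)$, so this relation admits a linear extension; let $\prec$ be one such total order on $\E$. I would then process the events of $\E$ in the order $\prec$, letting $S_i$ denote the set of the first $i$ events. The claim is that $\emptyset = S_0 \to S_1 \to \cdots \to S_{|\E|} = \E$ is a valid path, which reduces to showing that the $i$-th event $\event$ is executable in $S_{i-1}$ for every $i$.

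For the executability check I would verify the conditions in order. Downward-closure (\cref{item:executability1}) is immediate: since $\prec$ extends $\hb$, every $\hb$-predecessor of $\event$ precedes it in $\prec$ and hence already lies in $S_{i-1}$; in particular each $S_i$ is a genuine (downward-closed) node. If $\event$ is a pure read, no further condition applies. If $\event = \wt$ is a write/RMW, I must establish enabledness. For part~(a), given a triplet $(\wt, \rd, \wt')$ with $(\wt', \rd) \in \hb$, I would argue $(\wt', \wt) \in \mo$: since $\wt, \wt'$ access a common location, $\mo$ orders them, and the alternative $(\wt, \wt') \in \mo$ would give $(\rd, \wt') \in \fr$, so that $(\rd, \wt') \in \fr$ together with $(\wt', \rd) \in \hb$ contradicts read-coherence $\irr(\fr; \rf^?; \hb)$. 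Hence $\wt' \prec \wt$ and $\wt' \in S_{i-1}$. For part~(b), given an RMW $\ud$ and triplet $(\wt', \ud, \wt)$ with $\wt' \in S_{i-1}$, I would show $\ud \in S_{i-1}$: from $\wt' \prec \wt$ and totality of $\mo$ on the location I get $(\wt', \wt) \in \mo$, whence $(\ud, \wt) \in \fr$; were $(\wt, \ud) \in \mo$, this would yield $(\ud, \ud) \in \fr; \mo$, contradicting atomicity $\irr(\fr; \mo)$, so $(\ud, \wt) \in \mo$, giving $\ud \prec \wt$ and $\ud \in S_{i-1}$.

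The main obstacle I anticipate is not the high-level construction but the faithful bookkeeping that turns the ``guess'' intuition behind enabledness into honest $\mo$-orderings. The two delicate points are (i) using that each $\mo_x$ is \emph{total} on same-location writes to upgrade a mere $\prec$-precedence $\wt' \prec \wt$ into a genuine edge $(\wt', \wt) \in \mo$, and (ii) applying read-coherence and atomicity in exactly the contrapositive forms above. Once these are in place the argument is a direct verification, and together with the soundness direction (\cref{lem:reach2co}) it yields the reachability characterization underlying \cref{thm:sra_rf_upper}.
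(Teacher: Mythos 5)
Your proposal is correct and follows essentially the same route as the paper's proof: both fix a witnessing $\mo$, use strong-write-coherence to linearize $(\hb\cup\mo)$, and show that the prefixes of this linearization form a path of downward-closed, executable nodes, with enabledness condition~(a) discharged by read-coherence and condition~(b) by atomicity (in both cases upgrading precedence to a genuine $\mo$-edge via per-location totality). The only cosmetic difference is that the paper constructs the linear extension greedily on the fly (executing a read whenever possible, and otherwise an $(\hb\cup\mo)$-minimal write/RMW), whereas you fix a linear extension $\prec$ upfront; the invariant maintained and the axiom applications are identical.
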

\begin{proof}
Consider any witness complete execution $\ex=(\E, \po, \rf, \mo)$ with $\ex\models \sramm$.
We show the existence of a path $\rho$ from the root $\emptyset$ to the terminal $\E$ incrementally, with the property that every node $S$ in $\rho$ is downward-closed wrt $\hb\cup \mo$.
The statement clearly holds for the root $\emptyset$.
Now assume that the statement holds for some $S$, and we argue that there is an executable event that takes us to $S'$, with the statement holding in $S'$.

If there is a read event $\rd$ such that $S\cup \{ \rd \}$ is downward-closed wrt to $\hb$, 
then $\rd$ is enabled in $S$, and we execute it to arrive at $S'=S\cup \{ \rd\}$.
Since $\mo$ does not relate read events, we have that $S'$ is also downward-closed wrt $(\hb\cup \mo)$.

Otherwise, all $\hb$-minimal events in $\E\setminus S$ are write/RMW events.
Consider any such event $\wt$ that is also $(\hb\cup \mo)$-minimal, and we argue that $\wt$ is executable in $S$.

First, consider any triplet $(\wt, \rd, \wt')$ with $(\wt',\rd)\in \hb$.
Read-coherence on $\ex$ implies that $(\wt', \wt)\in\mo$.
Since $\wt$ is $(\hb\cup \mo)$-minimal in $\E\setminus S$ , we have that $\wt'\in S$.
Thus the enabledness condition in \cref{item:enabledness1} holds.

Second, consider any triplet $(\wt'', \ud, \wt)$, where $\ud\in\Upd$ is an RMW event, and such that $\wt''\in S$.
Since $S$ is downward-closed wrt $(\hb\cup \mo)$, we have that $(\wt'',\wt)\in \mo$.
Note that $\ud\in S$ as well, as otherwise, by the minimality of $\wt$, we would have $(\wt,\ud)\in \mo$, which would violate atomicity.
Thus the enabledness condition in \cref{item:enabledness2} also holds, thereby making $\wt$ executable in $S$.
Finally, observe that, by construction, the set $S'=S\cup\{ \wt \}$ is downward-closed wrt $(\hb\cup \mo)$.

The desired result follows.
\end{proof}
\subsection{Proof of {\cref{subsec:rf_sra_normw}}}\label{subsec:app_sra_rf_normw}

\lemsraminimalcoherence*
\begin{proof}
Since $\mopartial$ is minimally coherent for $\expartial$, we have that $(\hb\cup \mopartial)$ is acyclic.
Let $\mo$ be any linear extension of $(\hb\cup \mopartial)^+$ projected on $\W$, and $\ex=(\E, \po, \rf, \mo)$ the resulting execution.
Clearly $\ex$ satisfies strong-write-coherence.
Now consider any triplet $(\wt, \rd, \wt')$ with $(\wt', \rd)\in \hb$.
As $\mopartial$ is minimally coherent for $\ex$, we have $(\wt', \wt)\in (\hb\cup\mopartial)^+$.
Thus $(\wt', \wt)\in \mo$, and $\ex$ also satisfies read-coherence.
Hence $\ex\models \sramm$, witnessing that $\expartial\models \sramm$, as desired.
\end{proof}

\thmsrarfnormwupper*
\begin{proof}
We start with soundness.
First, the algorithm verifies that $\po\cup\rf$ is acyclic (\cref{line:rf-sra-hb-acyclic}).
At the end of the algorithm, for every triplet $(\wt, \rd, \wt')$ such that $(\wt', \rd)\in \hb$ and $\wt'$ is $\po$-maximal among all such write/RMW events, we have $(\wt', \wt)\in \mopartial$.
It follows that, at the end of the loop of \cref{line:rf-sra-loop-start}, 
for every triplet $(\wt, \rd, \wt')$ with $(\wt',\rd)\in \hb$, we have $(\wt', \wt)\in (\hb\cup \mopartial)$, thus $\mopartial$ satisfies \cref{item:minimal_coherence_sra1} of minimal coherence.
Finally, the algorithm verifies that $(\hb \cup \bigcup_{x \in \E.locs}\mopartial_x)$ is acyclic,
thus $\mopartial$ satisfies \cref{item:minimal_coherence_sra2} of minimal coherence. 
In turn, \cref{lem:sra_minimal_coherence} guarantees that indeed $\expartial\models\sramm$.

We now turn our attention to completeness.
Clearly, if the algorithm returns ``Inconsistent'' due to $(\po\cup \rf)$ being cyclic (\cref{line:rf-sra-hb-acyclic}), we have $\expartial\not \models \sramm$.
Otherwise, observe that, whenever the algorithm inserts an ordering $(\wt_u, \wt_{\mathsf{rf}})\in \mopartial$, we have a triplet $(\wt_{\mathsf{rf}}, \rd, \wt_u)$ and $(\wt_u, \rd)\in \hb$. 
But then \cref{cor:read_coherence_sra} guarantees that, for any $\mo$ witnessing the consistency of $\expartial$, we have $\mopartial\subseteq \mo$.
Hence if the algorithm returns ``Inconsistent'' in \cref{line:rf-sra-cycle}, we have that $(\hb \cup \bigcup_{x \in \E.locs}\mo_x)$ is cyclic, contradicting $\mo$ as a consistency witness.
Thus $\expartial\not \models \sramm$.
\end{proof}
\subsection{Proofs from {\cref{subsec:rf_rc20}}}\label{subsec:app_rf_rc20}

We begin with the proof of \cref{lem:rc_minimal_coherence}.
Consider any location $x\in \E.locs$, and we extend $\mopartial_x$ to a total modification order $\mo_x$ that satisfies read-coherence, write-coherence and atomicity.
We build $\mo_x$ in three phases, as follows.
\begin{compactenum}
\item Initially, we set $\mo_x=\mopartial_x$.
\item While there exist two (distinct, maximal) $\rf$-chains
$\wt^1_1, \wt^1_2,\dots, \wt^1_{i_1}$ and $\wt^2_1, \wt^2_2,\dots, \wt^2_{i_2}$
such that $(\wt^2_1,\wt^1_{i_1})\not \in (\rf_x\cup\hb_x\cup \mo_x)^+$,
we insert an ordering $(\wt^1_{i_1}, \wt^2_1)$ in $\mo_x$.
\item Finally we set $\mo_x$ to be $(\rf_x\cup \mo_x)^+$ projected on $(\WRMW)$.
\end{compactenum}

Towards \cref{lem:rc_minimal_coherence}, we establish three intermediate lemmas.

\begin{restatable}{lemma}{lemcompletionacyclicity}\label{lem:completion_acyclicity}
We have that $(\rf_x\cup \hb_x \cup \mo_x)$ is acyclic at the end of phase~2.
\end{restatable}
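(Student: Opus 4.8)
The plan is to prove that acyclicity of $R := \rf_x\cup\hb_x\cup\mo_x$ is a \emph{loop invariant} of phase~2: it holds at the outset and is preserved by every edge insertion, and therefore holds when phase~2 terminates. I would set this up as an induction on the number of orderings inserted during phase~2. For the base case, at the start of phase~2 we have $\mo_x=\mopartial_x$, so $R=\rf_x\cup\hb_x\cup\mopartial_x$, which is acyclic by \cref{item:minimal_coherence_rc3} of minimal coherence.

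For the inductive step I would use the elementary fact that, when $R$ is acyclic and $a\neq b$, adding a single edge from $a$ to $b$ yields an acyclic relation if and only if $(b,a)\notin R^+$; indeed, any cycle in $R\cup\{(a,b)\}$ must traverse the new edge, and hence consists of that edge together with an $R$-path from $b$ back to $a$, i.e.\ a witness of $(b,a)\in R^+$. The key point is that the insertion guard of phase~2 is the exact opposite orientation of the inserted edge: phase~2 inserts $(\wt^1_{i_1},\wt^2_1)$ precisely under the hypothesis $(\wt^2_1,\wt^1_{i_1})\notin R^+$. Taking $a=\wt^1_{i_1}$ and $b=\wt^2_1$ (distinct, as they are the tail and head of two distinct chains), the condition $(b,a)=(\wt^2_1,\wt^1_{i_1})\notin R^+$ required by the elementary fact is literally the guard, so the insertion preserves acyclicity. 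The induction then gives that $R$ is acyclic at the end of phase~2, which is the claim.

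I do not expect a genuine obstacle here; the only thing demanding care is bookkeeping the two orientations, since the guard is precisely the negation of the condition that would close a cycle through the new edge. I would note that acyclicity is maintained at \emph{every} intermediate step, independently of the order in which chain pairs are examined, so the claim is insensitive to how phase~2 schedules its insertions. Termination of phase~2 (so that ``end of phase~2'' is well defined) is a separate and routine matter, since the inserted orderings only accumulate in the finite set of pairs of write/RMW events on $x$, so $\mo_x$ eventually reaches a fixpoint; it is not needed for the acyclicity argument itself.
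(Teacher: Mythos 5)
Your proposal is correct and takes essentially the same approach as the paper: an induction over the insertions of phase~2, with the base case given by \cref{item:minimal_coherence_rc3} of minimal coherence and the inductive step given by the observation that the insertion guard $(\wt^2_1,\wt^1_{i_1})\not\in(\rf_x\cup\hb_x\cup\mo_x)^+$ is exactly the condition guaranteeing that the newly inserted edge $(\wt^1_{i_1},\wt^2_1)$ closes no cycle. Your write-up just makes explicit the elementary edge-insertion fact and the (routine) termination remark, both of which the paper leaves implicit.
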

\begin{proof}
Since $\mopartial$ is minimally coherent, the statement holds for $\mo_x$ at the beginning of phase~2.
Moreover, in every step of phase~2 inserting an ordering $(\wt^1_{i_1}, \wt^2_1)$ in $\mo_x$, we have 
$(\wt^2_1,\wt^1_{i_1})\not \in (\rf_x\cup\hb_x\cup \mo_x)^+$.
Hence $(\rf_x\cup \hb_x \cup \mo_x)$ remains acyclic after the ordering has been inserted in $\mo_x$.
\end{proof}

\begin{restatable}{lemma}{lemcompletionatomicity}\label{lem:completion_atomicity}
At all times, for every $\rf$-chain $\wt_1, \wt_2,\dots, \wt_{i}$
and write/RMW event $\wt$, all accessing the same location $x$,
if $\TC[\wt]\neq \wt_1$ and $(\wt, \wt_j)\in (\rf_x\cup \hb_x\cup \mo_x)^+$ for some $j\in\set{1,\dots, i}$, then $(\wt, \wt_1)\in (\rf_x\cup\hb_x\cup\mo_x)^+$.
\end{restatable}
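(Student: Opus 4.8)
The plan is to treat the claim as an invariant of the relation $R := (\rf_x\cup\hb_x\cup\mo_x)^+$ that is maintained throughout the three phases, fixing an arbitrary $\rf$-chain $C = \wt_1,\dots,\wt_i$ and an arbitrary write/RMW source $\wt\notin C$ (i.e.\ $\TC[\wt]\neq\wt_1$), and arguing by strong induction on the length of a shortest path witnessing $(\wt,\wt_j)\in R$. Before the induction I would record three facts. First, phase~3 leaves $R$ unchanged: the pairs it adds all lie in $(\rf_x\cup\mo_x)^+\subseteq R$, while the old $\mo_x$ survives the projection, so $R$ is identical before and after phase~3; it therefore suffices to verify the invariant during phases~1 and~2. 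Second, every $\mo_x$-edge present in phases~1--2 that is not an $\rf^+$-edge already points to a chain top: in phase~1 this is exactly \cref{item:minimal_coherence_rc2} of minimal coherence for $\mopartial_x$, and each edge inserted in phase~2 targets a top $\wt^2_1=\TC[\wt^2_1]$ by construction (and $\mopartial_x\subseteq\mo_x$ throughout, so any conclusion landing in $(\rf_x\cup\hb_x\cup\mopartial_x)^+$ lands in $R$). Third, $\hb_x$ is transitive on $x$-events, so along a \emph{shortest} path in $R$ no two consecutive edges are both $\hb_x$; moreover, since $\rfinv$ is a function and $\rf\subseteq(\WRMW)\times(\RRMW)$, the only $\rf_x$-predecessor of a chain event $\wt_j$ with $j\ge2$ is $\wt_{j-1}\in C$, and the only $\rf_x$-predecessor of a read is its unique writer.

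For the induction I would fix a shortest path $\wt=u_0\to\cdots\to u_m=\wt_j$ in $R$ and inspect its last edge $(u_{m-1},\wt_j)$; the case $j=1$ is immediate, so assume $j\ge2$ (whence $\wt_j$ is an RMW with $\rfinv(\wt_j)=\wt_{j-1}$). If $u_{m-1}\in C$, then the strictly shorter prefix reaches a chain event and the inductive hypothesis applies. If $u_{m-1}\notin C$, I distinguish the edge type. It cannot be $\rf_x$, by the third fact. If it is $\mo_x$, then $u_{m-1}$ is a write/RMW with $(u_{m-1},\wt_j)\notin\rf^+$ (as $u_{m-1}\notin C$), so the second fact gives $(u_{m-1},\TC[\wt_j])=(u_{m-1},\wt_1)\in R$. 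If it is $\hb_x$ with $u_{m-1}$ a write/RMW, then $(\wt_{j-1},\wt_j,u_{m-1})$ is a triplet with $(u_{m-1},\wt_j)\in\hb\subseteq\rf^?;\hb$ and $(u_{m-1},\wt_{j-1})\notin\rf^+$, so \cref{item:minimal_coherence_rc1} of minimal coherence yields $(u_{m-1},\TC[\wt_{j-1}])=(u_{m-1},\wt_1)\in R$. In each of these subcases $(u_{m-1},\wt_1)\in R$ composes with the prefix $(\wt,u_{m-1})\in R^?$ to give $(\wt,\wt_1)\in R$.

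The remaining subcase, which I expect to be the main obstacle, is when the last edge is $\hb_x$ but $u_{m-1}$ is a \emph{read} outside $C$: a read has no outgoing $\mo_x$-structure, and minimal coherence only constrains writers. The resolution is to look one step further back. Since the path is shortest and $\hb_x$ is transitive, the edge into the read $u_{m-1}$ cannot be another $\hb_x$-edge, and it cannot be $\mo_x$ (reads are not $\mo$-targets), so it must be $\rf_x$ from $w:=\rfinv(u_{m-1})$; hence $(w,\wt_j)\in\rf;\hb\subseteq\rf^?;\hb$. I then finish exactly as above: if $w\in C$ then $w\neq\wt$ and the strictly shorter prefix to $w$ invokes the inductive hypothesis, while if $w\notin C$ the triplet $(\wt_{j-1},\wt_j,w)$ together with \cref{item:minimal_coherence_rc1} gives $(w,\wt_1)\in R$, which composes with the possibly trivial prefix $(\wt,w)\in R^?$. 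A key point to emphasize is that the source $\wt$ stays fixed in every recursive call, so the induction never needs to be generalised to read sources; and that it is essential to argue via the minimal-coherence conditions rather than \cref{lem:read_coherence_atomicity}, since during the construction no total modification order is yet available.
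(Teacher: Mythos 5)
Your proof is correct and follows essentially the same route as the paper's: both reduce the claim to analyzing the edge by which a path crosses into the $\rf$-chain, discharging the $\mopartial_x$-edge case via condition (2) of minimal coherence, the $\hb$-type case via condition (1) applied to the triplet $(\wt_{j-1},\wt_j,\cdot)$, and the phase-2 edges via the fact that they target chain tops. The only differences are organizational and in rigor: you set it up as an induction on shortest-path length and explicitly treat two details the paper's own proof elides — the case where the entering edge emanates from a read event on the path (which you resolve by stepping back to its unique writer before invoking condition (1), since a read cannot itself form a conflicting triplet) and the observation that phase 3 leaves $(\rf_x\cup\hb_x\cup\mo_x)^+$ unchanged — so if anything your write-up is the more careful one.
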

\begin{proof}
We first prove that the statement holds at the end of phase~1, i.e., it holds for $\mopartial_x$ instead of $\mo_x$.
Consider a path $\wt\Path \wt_j$ in $(\rf_x\cup \hb_x \cup \mopartial_x)^+$, and let $\wt'$ be the first event in this path that belongs to the $\rf$-chain (i.e., $\TC[\wt']=\wt_1$).
Hence $\wt'\neq \wt$.
If $\wt'=\wt_1$, we are done.
Otherwise, let $\wt''$ be the immediate predecessor of $\wt'$ in this path.
We distinguish cases for the edge connecting $\wt''$ and $\wt'$ in the path.
\begin{compactenum}
\item $(\wt'',\wt')\in \mopartial_x$.
Since $\mopartial$ is minimally coherent, \cref{item:minimal_coherence_rc2} implies that $(\wt'', \wt_1)\in \mopartial_x$.
\item $(\wt'',\wt')\in (\rf_x\cup \hb_x)$.
Then $(\wt'', \wt')\in \rf^?;\hb$.
Since $\mopartial$ is minimally coherent, \cref{item:minimal_coherence_rc1} implies that $(\wt'', \wt_1)\in (\rf_x\cup \hb_x \cup \mopartial_x)^+$.
\end{compactenum}
Thus in both cases $(\wt,\wt_1)\in (\rf_x\cup \hb_x\cup \mopartial_x)^+$, as desired.
Second, observe that in every step of phase~2, we insert an $\mo_x$ ordering towards the top of an $\rf$-chain.
Hence the statement holds along every step of phase~2.
\end{proof}

\begin{restatable}{lemma}{lemcompletiontotality}\label{lem:completion_totality}
At the end of phase~2, $(\rf_x\cup \mo_x)^+$ projected on $(\locx{\W}\cup\locx{\Upd})$ is total.
\end{restatable}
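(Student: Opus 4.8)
The plan is to show that once phase~2 terminates, the maximal $\rf$-chains on location $x$ are \emph{pairwise comparable} under $(\rf_x\cup \mo_x)^+$ — that is, for any two distinct maximal chains one is ordered entirely before the other — from which totality on $(\locx{\W}\cup\locx{\Upd})$ follows at once. Recall that weak-atomicity guarantees the write/RMW events on $x$ partition into disjoint maximal $\rf$-chains, each internally totally ordered by $\rf_x^+\subseteq (\rf_x\cup \mo_x)^+$. Hence the only thing to establish is that the chains are ordered against one another.

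First I would prove the termination invariant: at the end of phase~2 no two distinct maximal chains $C_1,C_2$ are incomparable under $(\rf_x\cup \mo_x)^+$ (i.e.\ with neither $(\mathrm{tail}(C_1),\mathrm{head}(C_2))$ nor $(\mathrm{tail}(C_2),\mathrm{head}(C_1))$ in $(\rf_x\cup \mo_x)^+$). Writing $R=(\rf_x\cup \hb_x\cup \mo_x)^+$, the key sub-claim is that for any such incomparable pair at least one of the two candidate orderings is \emph{safe} to insert, namely at least one of $(\mathrm{head}(C_2),\mathrm{tail}(C_1))\notin R$ or $(\mathrm{head}(C_1),\mathrm{tail}(C_2))\notin R$ holds. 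Suppose both failed, so $(\mathrm{head}(C_2),\mathrm{tail}(C_1))\in R$ and $(\mathrm{head}(C_1),\mathrm{tail}(C_2))\in R$. Applying \cref{lem:completion_atomicity} to chain $C_1$ and the external event $\mathrm{head}(C_2)$ (legitimate since $\TC[\mathrm{head}(C_2)]=\mathrm{head}(C_2)\ne \mathrm{head}(C_1)$ as $C_1\ne C_2$) lifts the first ordering to the head of $C_1$, giving $(\mathrm{head}(C_2),\mathrm{head}(C_1))\in R$; symmetrically $(\mathrm{head}(C_1),\mathrm{head}(C_2))\in R$, a cycle contradicting the acyclicity of $R$ from \cref{lem:completion_acyclicity}. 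Thus at least one direction is safe, so the corresponding step of phase~2 could insert a tail-to-head $\mo_x$ edge and make the pair comparable, contradicting that phase~2 has terminated. Hence no incomparable pair survives.

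Given pairwise comparability, I would finish with a direct case analysis. Take distinct $u,v\in \locx{\W}\cup\locx{\Upd}$. If they lie in the same maximal $\rf$-chain they are already $\rf_x^+$-ordered. Otherwise let $C_u,C_v$ be their distinct chains; by the invariant one precedes the other, say $(\mathrm{tail}(C_u),\mathrm{head}(C_v))\in \mo_x$. Then $(u,\mathrm{tail}(C_u))\in \rf_x^*$, $(\mathrm{tail}(C_u),\mathrm{head}(C_v))\in \mo_x$, and $(\mathrm{head}(C_v),v)\in \rf_x^*$ together witness $(u,v)\in (\rf_x\cup \mo_x)^+$, with the symmetric conclusion in the opposite case. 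This yields totality.

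The main obstacle is the termination sub-claim, specifically the fact that an ordering between two chains must be placed between the \emph{tail} of one and the \emph{head} of the other, so that it orders the chains as indivisible blocks (as forced by RMW atomicity). The crucial ingredient is \cref{lem:completion_atomicity}, which lets me lift any ordering reaching an interior event of a chain to one reaching the chain's head; without this lifting I could not convert two blocked tail-level orderings into the head-to-head cycle that contradicts \cref{lem:completion_acyclicity}. Some care is also needed to phrase comparability with respect to $(\rf_x\cup \mo_x)^+$ rather than $R$, so that the between-chain witnesses use only the explicitly inserted $\mo_x$ edges and never $\hb_x$.
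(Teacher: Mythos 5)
Your proposal is correct and takes essentially the same route as the paper's own proof: both arguments use \cref{lem:completion_atomicity} to lift the tail-level orderings between two chains up to their heads, then invoke \cref{lem:completion_acyclicity} to rule out the resulting head-to-head cycle, concluding that at least one tail-to-head ordering must hold between any two distinct $\rf$-chains, from which totality follows by composing with the within-chain $\rf_x^*$ paths. The only difference is organizational—you frame it as a contradiction with phase-2 termination (an enabled ``safe'' insertion would remain), whereas the paper applies the termination condition directly—so the mathematical content is the same.
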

\begin{proof}
Consider any two (distinct, maximal) $\rf$-chains $\wt^1_1, \wt^1_2,\dots, \wt^1_{i_2}$ and $\wt^2_1, \wt^2_2,\dots, \wt^2_{i_2}$ accessing the same location $x$, and such that $(\wt^1_{i_1}, \wt^2_1)\not \in \mo_x$ at the end of phase~2.
Since that phase completed, we have $(\wt^2_1,\wt^1_{i_1})\in (\rf_x\cup \hb_x\cup \mo_x)^+$,
and thus $(\wt^2_1,\wt^1_{1})\in (\rf_x\cup \hb_x\cup \mo_x)^+$ by \cref{lem:completion_atomicity}.
If also $(\wt^2_{i_2}, \wt^1_1)\not \in \mo_x$, we similarly have $(\wt^1_1,\wt^2_{1})\in (\rf_x\cup \hb_x\cup \mo_x)^+$.
But then $(\rf_x\cup\hb_x\cup \mo_x)^+$ would be cyclic, violating \cref{lem:completion_acyclicity}.
Thus $(\wt^2_{i_2}, \wt^1_2) \in \mo_x$.
It follows that $(\rf_x\cup \mo_x)^+$ projected on $(\locx{\W}\cup\locx{\Upd})$ is total. 
\end{proof}

We are now ready to prove \cref{lem:rc_minimal_coherence}.

\lemrcminimalcoherence*
\begin{proof}
Consider the modification order $\mo$ constructed above, and we argue that $\mo$ witnesses the consistency of $\expartial$.
Write-coherence follows from \cref{lem:completion_acyclicity}.
For read-coherence, consider any triplet $(\wt, \rd, \wt')$ such that $(\wt', \rd)\in \rf^?;\hb$.
If $(\wt', \wt)\in \rf^+$, we have $(\wt, \wt')\not \in \mo_x$ due to write-coherence.
Otherwise, since $\mopartial$ is minimally coherent, we have $(\wt', \TC[\wt]) \in \mopartial$, and due to write-coherence, we also have $(\wt, \wt')\not \in \mo_x$.
Hence $(\rd,\rd)\not \in \fr;\rf^?;\hb$, satisfying read-coherence.
Finally, we argue about atomicity.
Consider any RMW event $\RMW$, and any triplet $(\wt, \RMW, \wt')$ with $(\wt', \RMW)\in \mo_x$.
If $\TC[\wt']=\TC[\RMW]$, due to read/write-coherence, we have that $\PC[\wt']<\PC[\wt]$.
Thus $(\wt', \wt)\in \rf_x^+$, and hence $(\wt, \wt')\not \in \mo_x$.
Otherwise, due to \cref{lem:completion_atomicity}, we have that $(\wt', \TC[\RMW])\in (\rf_x\cup\hb_x\cup \mo_x)^+$, and by write-coherence, we again have $(\wt, \wt')\not \in \mo_x$.
Thus, in both cases, $(\RMW, \wt')\not \in \fr$.
The desired result follows.
\end{proof}

\thmrcrfupper*
\begin{proof}
We start with soundness.
First, the algorithm verifies that $(\po\cup\rf)$ is acyclic (\cref{line:rf-rc20-po-rf-acyclic}).
At the end of the algorithm, for every triplet $(\wt, \rd, \wt')$ such that $(\wt',\rd)\in \rf^?;\hb$ and $\wt'$ is $\po$-maximal among all such write/RMW events, if $(\wt', \wt)\not \in \rf^+$, we have $(\wt', \wt)\in \mopartial_x$.
It follows that, at the end of loop of \cref{line:rf-rc20-mo-infer-start}, for every triplet $(\wt, \rd, \wt')$ such that $\wt'\in \rf^?;\hb$, if $(\wt', \wt)\not \in \rf^+$, then $(\wt', \wt)\in \hb^?;\mopartial$.
Thus $\mopartial$ trivially satisfies \cref{item:minimal_coherence_rc1} of minimal coherence.
Moreover, since every $\mopartial$ ordering goes to the top of an $\rf$-chain, $\mopartial$ also satisfies \cref{item:minimal_coherence_rc2} of minimal coherence.
Finally, the algorithm verifies that $(\rf_x \cup \hb_x \cup \mopartial_{x})$ is acyclic for every location $x$, hence $\mopartial$ satisfies \cref{item:minimal_coherence_rc3} of minimal coherence.
Thus, \cref{lem:rc_minimal_coherence} guarantees that indeed $\expartial\models \rcmm$.

We now turn our attention to completeness.
Clearly, if the algorithm returns ``Inconsistent'' due to $(\po\cup \rf)$ being cyclic (\cref{line:rf-rc20-po-rf-acyclic}), we have $\expartial\not \models \rcmm$.
Otherwise, observe that, whenever the algorithm inserts an ordering $(\wt_u, \wt)$ in $\mopartial$ in the loop of \cref{line:rf-rc20-mo-infer-start},
we have $(\wt_u, \rd)\in \rf^?;\hb$, where $\rd$ is a read/RMW event such that there exists a triplet $(\wt_{\mathsf{rf}}, \rd, \wt_u)$ and $(\wt_u, \wt_{\mathsf{rf}})\not \in \rf^+$.
But then \cref{lem:read_coherence_atomicity} guarantees that, for any $\mo$ witnessing the consistency of $\expartial$, we have $\mopartial\subseteq \mo$.
Hence if the algorithm returns ``Inconsistent'' in \cref{line:rf-rc20-cycle}, we have that $(\rf_x\cup \hb_x\cup \mo_x)$ is cyclic for some location $x$, contradicting $\mo$ as a consistency witness.
Thus $\expartial\not \models \rcmm$.
\end{proof}

\subsection{Proofs from {\cref{subsec:rf_rlx}}}\label{subsec:app_rf_rlx}

Towards \cref{thm:rlx_rf_upper}, we first prove the following intermediate lemma

\begin{restatable}{lemma}{lemrlxlastwrite}\label{lem:rlx_last_write}
Before \cref{algo:rf-consistency-relaxed} processes a read/RMW event $\event(t,x)$, 
for every triplet $(\wt_{\mathsf{rf}}, \event, \wt')$,
if $(\wt', \event)\in \rf^?;\po$ then $\wt'=\LW_{t,x}$ or $(\wt', \LW_{t,x})\in (\rf\cup \po\cup\mopartial)^+$.
\end{restatable}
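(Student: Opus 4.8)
The plan is to prove a slightly stronger invariant by induction on the $(\po_x\cup\rf_x)$-order in which \cref{algo:rf-consistency-relaxed} processes the events of location $x$. Fix a thread $t$ and write $\overline{W}_\event$ for the set of \emph{all} write/RMW events $\wt'\in\locx{\W}\cup\locx{\Upd}$ with $(\wt',\event)\in\rf^?;\po$ (i.e.\ we drop the requirement $\wt'\neq\wt_{\mathsf{rf}}$ coming from the triplet, which only shrinks the set). The strengthened claim is: before $\event$ is processed, either $\LW_{t,x}=\nil$ and $\overline{W}_\event=\emptyset$, or $\LW_{t,x}\in\overline{W}_\event$ and $\LW_{t,x}$ is $(\rf\cup\po\cup\mopartial)^+$-maximal in $\overline{W}_\event$, meaning every $\wt'\in\overline{W}_\event$ satisfies $\wt'=\LW_{t,x}$ or $(\wt',\LW_{t,x})\in(\rf\cup\po\cup\mopartial)^+$. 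The lemma is the restriction of this to those $\wt'$ that additionally form a triplet. Two easy facts I would record first: $\LW_{t,x}$ is modified only when a thread-$t$ event on $x$ is processed, so its value before $\event$ equals its value right after the immediate $\po$-predecessor $f$ of $\event$ among thread-$t$ events on $x$; and $\mopartial$ only grows, so any $(\rf\cup\po\cup\mopartial)^+$-path established earlier persists.

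For the induction, the base case is when no such $f$ exists: then $\event$ is the first thread-$t$ event on $x$, so $\LW_{t,x}=\nil$ and $\overline{W}_\event=\emptyset$, and the claim holds vacuously. For the inductive step I would first describe $\overline{W}_\event$ in terms of $\overline{W}_f$. Since $(f,\event)\in\po$ one gets $\overline{W}_f\subseteq\overline{W}_\event$, and a short case analysis on the shape of an $\rf^?;\po$-path into $\event$ shows $\overline{W}_\event=\overline{W}_f\cup A$, where $A=\{f\}$ if $f$ is a plain write, $A=\{\rfinv(f)\}$ if $f$ is a plain read, and $A=\{f,\rfinv(f)\}$ if $f$ is an RMW. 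Here the only genuinely new sources of paths are $f$ itself, reached through $(f,\event)\in\po$, and $w_f:=\rfinv(f)$, reached through $(w_f,f)\in\rf$ followed by $(f,\event)\in\po$.

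The write and RMW cases are then immediate, and I would dispatch them quickly: in both, the algorithm sets the new $\LW_{t,x}=f$ (\cref{line:rf-relaxed-write}), and $f$ dominates everything, since every $\wt'\in\overline{W}_f$ has $(\wt',f)\in\rf^?;\po\subseteq(\rf\cup\po\cup\mopartial)^+$, while $w_f=\rfinv(f)$ (present only in the RMW case) satisfies $(w_f,f)\in\rf$. No appeal to the induction hypothesis is even needed here.

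The main obstacle is the plain-read case, where the algorithm sets the new $\LW_{t,x}=w_f=\rfinv(f)$ and $\overline{W}_\event=\overline{W}_f\cup\{w_f\}$. By the induction hypothesis, $w_f$ will dominate all of $\overline{W}_f$ as soon as it dominates the old value $\LW^{\mathrm{old}}$ of $\LW_{t,x}$ (when $\overline{W}_f=\emptyset$ this is vacuous), so the crux is to establish $\LW^{\mathrm{old}}=w_f$ or $(\LW^{\mathrm{old}},w_f)\in(\rf\cup\po\cup\mopartial)^+$. This is exactly what the guard at \cref{line:rf-relaxed-check-read} secures, and I would split on it. If the guard $(\TC[w_f]\neq\TC[\LW^{\mathrm{old}}])\lor(\PC[w_f]<\PC[\LW^{\mathrm{old}}])$ is false, then $\LW^{\mathrm{old}}$ and $w_f$ lie on a common $\rf$-chain with $\LW^{\mathrm{old}}$ at a position no later than $w_f$, giving $(\LW^{\mathrm{old}},w_f)\in\rf^*$; if it is true, the algorithm inserts $(\LW^{\mathrm{old}},\TC[w_f])$ into $\mopartial$ (\cref{line:rf-relaxed-updated-mo-read}), and composing with $(\TC[w_f],w_f)\in\rf^*$ yields $(\LW^{\mathrm{old}},w_f)\in\mopartial;\rf^*\subseteq(\rf\cup\po\cup\mopartial)^+$. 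Transitivity with the induction hypothesis then propagates domination to all of $\overline{W}_f$, completing the step, and the lemma follows by restricting to triplet witnesses.
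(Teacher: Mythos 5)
Your proof is correct and takes essentially the same route as the paper's: induction over the thread-$t$ events on $x$ in processing order, splitting on the type of the immediate predecessor $f$, with the write/RMW cases dispatched directly and the plain-read case obtained by composing the induction hypothesis with the guard analysis at \cref{line:rf-relaxed-check-read} (either $(\LW^{\mathrm{old}},\rfinv(f))\in\rf^*$ when the guard fails, or $(\LW^{\mathrm{old}},\TC[\rfinv(f)])\in\mopartial_x$ followed by $\rf^*$ when it fires). Your strengthened maximality invariant and the explicit decomposition $\overline{W}_\event=\overline{W}_f\cup A$ merely spell out details that the paper's terser argument leaves implicit (e.g., the RMW subcase $\wt'=\rfinv(f)$, and the chaining of $\wt'$ through $\LW^{\mathrm{old}}$ to $\rfinv(f)$), so the two proofs coincide in substance.
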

\begin{proof}
The proof is by induction on the events of thread $t$ and location $x$ processed by the algorithm.
The statement clearly holds for $\event$ being the first event of thread $t$, as there is no $\wt'$ with $(\wt', \event)\in \rf^?;\po$, while $\LW_{t,x}=\bot$.

Now consider that the algorithm processes an event $\event$, and the statement holds for the event $\event'$ that is an immediate $\po$-predecessor of $\event$ in $t$.
We split cases based on the type of $\event'$.
\begin{compactenum}
\item $\event'\in (\locx{\W}\cup \locx{\Upd})$.
Then $\LW_{t,x}=\event'$, while clearly, either $\wt'=\event'$, or $(\wt', \event')\in \rf^?;\po$. 
\item $\event'\in \locx{\R}$.
Then $\LW_{t,x}=\rfinv(\event')$, while clearly, either $\wt'=\rfinv(\event')$, or $(\wt', \event')\in \rf^?;\po$.
But for every event $\wt'$ with $\wt'\neq \rfinv(\event')$ and $(\wt', \event')\in \rf^?;\po$,
the algorithm has either inserted an ordering $(\wt', \rfinv(\event'))\in\mopartial_x$, or infered that $(\wt',\rfinv(\event))\in \rf^+$.
Thus the statement holds for $\event$.
\end{compactenum}
\end{proof}

\thmrlxrfupper*
\begin{proof}
We start with soundness.
First, the algorithm verifies that $(\po\cup\rf)$ is acyclic (\cref{line:rf-relaxed-hb-acyclic}).
Due to \cref{lem:rlx_last_write},
at the end of the algorithm, for every triplet $(\wt, \rd, \wt')$ such that $(\wt', \rd)\in \rf^?;\po$ if $(\wt', \wt)\not \in \rf^+$, we have $(\wt', \wt)\in \mopartial_x$.
Thus $\mopartial$ satisfies \cref{item:minimal_coherence_rlx1} of minimal coherence.
Moreover, since every $\mopartial$ ordering goes to the top of an $\rf$-chain, $\mopartial$ also satisfies \cref{item:minimal_coherence_rc2} of minimal coherence.
Finally, the algorithm verifies that $(\rf_x \cup \hb_x \cup \mopartial_{x})$ is acyclic for every location $x$ in \cref{line:rf-relaxed-mo-cycle}, hence $\mopartial$ satisfies \cref{item:minimal_coherence_rlx3} of minimal coherence.
Thus, \cref{lem:rc_minimal_coherence} guarantees that indeed $\expartial\models \rlxmm$.

We now turn our attention to completeness.
Clearly, if the algorithm returns ``Inconsistent'' due to $(\po\cup \rf)$ being cyclic (\cref{line:rf-relaxed-hb-acyclic}), we have $\expartial\not \models \rlxmm$.
Otherwise, observe that, whenever the algorithm inserts an ordering $(\wt_u, \wt)$ in $\mopartial$ in the loop of \cref{line:rf-rlx-mo-infer-start},
we have $(\LW_{t,x}, \rd)\in \rf^?;\po$, where $\rd$ is a read/RMW event such that there exists a triplet $(\wt_{\mathsf{rf}}, \rd, \LW_{t,x})$, and $(\LW_{t,x}, \wt_{\mathsf{rf}})\not \in \rf^+$.
But then \cref{lem:read_coherence_atomicity} guarantees that, for any $\mo$ witnessing the consistency of $\expartial$, we have $\mopartial\subseteq \mo$.
Hence if the algorithm returns ``Inconsistent'' in \cref{line:rf-relaxed-mo-cycle}, we have that $(\rf_x\cup \po_x\cup \mo_x)$ is cyclic for some location $x$, contradicting $\mo$ as a consistency witness.
Thus $\expartial\not \models \rlxmm$.

Finally, we turn our attention to complexity.
The computation of $(\TC, \PC)$ in \cref{line:rf-relaxed-hb-acyclic} takes $O(\NumEvents)$ time (\cref{sec:helper_functions}).
For every location $x$, the loop in \cref{line:rf-rlx-mo-infer-start} takes $O(\locx{\NumEvents})$ time, where $\locx{\NumEvents}=|\locx{\E}|$.
The acyclicity check in \cref{line:rf-relaxed-mo-cycle} can also easily performed in $O(\NumEvents_x)$ time, viewed as a cycle detection problem on a graph of $O(\locx{\NumEvents})$ edges.
Summing over all locations $x$, we obtain total time $\sum_x O(\locx{\NumEvents})=O(\NumEvents)$.
\end{proof}

\subsection{Proofs of {\cref{subsec:rf_lower}}}\label{subsec:app_rf_lower}

\bparagraph{Correctness}
Let us now argue about the correctness of the construction i.e., $G$ has a triangle iff $\expartial$ is not consistent under $\wramm$, $\ramm$ or $\sramm$. 
The precise statements we prove are as follows.
First, if $G$ has a triangle then $\expartial\not\models \wramm$, implying $\expartial\not\models \ramm$ and $\expartial\not\models \sramm$.
Second, we show that if $G$ does not have a triangle then $\expartial\models \sramm$, implying $\expartial\not\models \ramm$ and $\expartial\not\models \wramm$. 

We first observe that $\hb$ is indeed acyclic.
This is because the event set $\E$ can be partitioned into $6$ classes
$\E_1, \E_2, \ldots, \E_6$, and the $\hb$ edges only go from $\E_i$ to $\E_{j}$ for $i<j$.
Specifically, $\E_1 = \setpred{\wt_{\alpha}}{\alpha \in V_G}$,
$\E_2 = \setpred{\wt^\beta_{\alpha}}{\set{\alpha, \beta} \in E_G, \alpha < \beta}$,
$\E_3 = \setpred{\juncEvWt_{\alpha}}{\alpha \in V_G}$,
$\E_4 = \setpred{e_{(\alpha, \beta)}}{\set{\alpha, \beta} \in E_G, \alpha < \beta}$, 
$\E_5 = \setpred{\juncEvWt_{\alpha}}{\alpha \in V_G}$, and
$\E_6 = \setpred{\rd^{\alpha}_{\beta}}{\set{\alpha, \beta} \in E_G, \alpha < \beta}$.
Observe that the $\hb$ edges only go from $\E_1$ to $\E_2$, $\E_2$ to $\E_3$,
$\E_3$ to $\E_4$, $\E_4$ to $\E_5$, $\E_5$ to $\E_6$ and also from $\E_1$ to $\E_6$.

\noindent
($\Rightarrow$) Assume that there is a triangle in $G$, and we show that there is a violation of weak-read-coherence.
Consider the triangle $(\alpha, \beta, \gamma)$ in $G$, with $\alpha<\beta<\gamma$.
Consider the events $\wt_{\beta}$, $\wt^\gamma_{\beta}$ and $\rd^{\alpha}_{\beta}$,
which access the same memory location $\varN_{\beta}$.
The construction ensures that
$(\wt_{\beta}, \rd^{\alpha}_{\beta}) \in \rf$, and also$(\wt_{\beta}, \wt^{\gamma}_{\beta}) \in \hb$.
Now observe the sequence of $\hb$ edges: 
$(\wt^{\gamma}_{\beta}, \juncEvWt_{\gamma})$,
$(\juncEvWt_{\gamma}, e_{(\alpha, \gamma)})$,
$(e_{(\alpha, \gamma)}, \juncEvRd_{\alpha})$,
$(\juncEvRd_{\alpha}, \rd^{\alpha}_{\beta})$.
Clearly, all of these together imply that
$(\wt^\gamma_{\beta}, \rd^{\alpha}_{\beta}) \in \hb$.
This is a violation of weak-read-coherence.
See \cref{fig:super-linear-lower-bound} for illustration.

\newcommand{\cyc}{\mathcal{C}}

\noindent
($\Leftarrow$)
Assume that $G$ is triangle-free, and we show that $\expartial\models \sramm$.
Consider the per-location $\mo = \bigcup_{\alpha \in V_G} \mo_{\varN_\alpha}$ 
such that $\mo_\alpha$ is the total order
$\wt_{\alpha} < \wt^{\beta_1}_{\alpha} < \wt^{\beta_2}_{\alpha} \ldots < \wt^{\beta_k}_{\alpha}$,
where $\beta_1 < \beta_2 < \ldots < \beta_k$ are all the neighbors of $\alpha$
(recall that each write event accesses location $\varN_a$).
We will show that $\ex = \tup{\expartial.\E, \expartial.\po, \expartial.\rf, \mo}$
is $\sramm$-consistent.
First, we argue that $\ex$ satisfies strong-write-coherence.
Assume on the contrary that there is a $(\hb \cup \mo)^+$ cycle $\cyc$.
Since $\hb$ and $\mo$ are both acyclic, 
there must be at least one $\mo$ edge and at least one $\hb$ edge in $\cyc$.
Thus, there must be three distinct events $\event_1, \event_2, \event_3$ in the cycle such that
$(\event_1, \event_2) \in \hb$ and $(\event_2, \event_3) \in \mo$, and let $\varN_\alpha = e_2.\lloc = e_3.\lloc$.
Our construction ensures that the only $\hb$ edge incident to a write event of $\varN_\alpha$
emanates from $\wt_{\alpha}$, and thus $\event_1 = \wt_{\alpha}$.
Our construction further ensures that there is no incoming $\hb$ edge into $\wt_{\alpha}$,
while our $\mo$ also does not have incoming edges into $\wt_{\alpha}$,
which is a contradiction as $\event_1$ cannot participate in a $(\hb \cup \mo)^+$ cycle.
Let us now argue that $\ex$ satisfies read-coherence.
Assume on the contrary that it does not.
Then, there are distinct events $\event_1, \event_2, \event_3$
such that $(\event_1, \event_3) \in \rf$, $(\event_1, \event_2) \in \mo$
and $(\event_2, \event_3) \in \hb$.
The only memory locations with $> 1$ read accesses are those on 
the locations associated with a node in the graph.
Thus, there is a node $\beta \in V_G$ such that
$\event_1 = \wt_{\beta}$ and two nodes
$\alpha$ and $\gamma$ such that
$\event_3 = \rd^{\alpha}_{\beta}$ and $\event_2 = \wt^{\gamma}_{\beta}$.
It follows that $\alpha < \beta < \gamma$,
and that $(\alpha, \beta) \in E_G$ and $(beta, \gamma) \in E_G$.
Now consider the $\hb$ path from $\event_2$ to $\event_3$.
The only edge emanating from $\event_2 = \wt^{\gamma}_{\beta}$ is
$(\wt^{\gamma}_{\beta}, \juncEvWt_{\gamma}) \in \hb$.
Similarly, the only non-$\rf$ edge incident onto $\event_3 = \rd^{\alpha}_{\beta}$
is the edge $(\juncEvRd_{\alpha}, \rd^{\alpha}_{\beta}) \in \hb$.
Thus, it must be that $( \juncEvWt_{\gamma}, \juncEvRd_{\alpha}) \in \hb^+$.
This is possible only when the event $\event_{(\alpha, \gamma)}$ exists
as the only edges that are incident onto $\juncEvRd_{\alpha}$
emanate from an event of the form $\event_{(\alpha, *)}$, 
and the only edges that emanate from
$\juncEvWt_{\gamma}$ are incident on events of the form $\event_{(*, \gamma)}$.
Thus, $(\alpha, \gamma) \in E_G$ by the construction, resulting to a triangle $(\alpha, \beta,\gamma)$, which is a contradiction.
Thus $\ex$ satisfies all the axioms of $\sramm$.

\bparagraph{Time Complexity}
The total time to construct $\expartial$ is proportional to $O(|V_G| + |E_G|)$.
Also, the size of $\expartial$ is $O(|V_G| + |E_G|)$.
Thus, if there is an algorithm that solves consistency under any of 
$\sramm$, $\ramm$ or $\wramm$
in time $O(\NumEvents^{\omega/2 - \epsilon})$, then there is an algorithm that checks for triangle-freeness in time $O(\NumEvents^{\omega/2 - \epsilon})$ on graphs of $\NumEvents$ nodes.
Moreover, since our reduction is completely combinatorial, if there is a combinatorial algorithm that solves consistency in $O(\NumEvents^{3/2-\epsilon})$ time,
then there is an algorithm that checks for triangle-freeness in time $O(\NumEvents^{3/2-\epsilon})$ on graphs of $O(\NumEvents)$ edges.

\newpage

\section{Experimental Details}\label{sec:app_experiments}

\subsection{Further Experiments with \Trust}\label{subsec:app_experiments_smc}

Here we report further experiments inside the \Trust model checker on benchmarks that contain $\MOrlx$ accesses.   
$\Trust$ follows the RC11 semantics \cite{Lahav:2017} which has subtle difference from the RC20 model \cite{Lahav:2022} in the definition of release-sequence as it uses $\MOrlx$ accesses.  
The experimental results are shown in \cref{tab:expr-results-smc-rc20}.


\begin{table}[H]
	\caption{
		Performance comparison. 
		Columns $2$ and $3$ denote the total number of executions explored by resp. \Trust and  our algorithm.
		Columns $5$ and $6$ denote the average time (in seconds) spent in consistency checking by resp. \Trust and  our algorithm.
		Columns $4$ and $7$ denote the respective ratios and speedups.
		\label{tab:expr-results-smc-rc20}
	}
	\setlength\tabcolsep{0.6pt}
	\renewcommand{\arraystretch}{1.0}
	\centering
	\scalebox{0.95}{
{\small
		\begin{tabular}{|c|c|c|c|c|c|c||c|c|c|c|c|c|c|}
			\hline
1 & 2 & 3 & 4 & 5 & 6 & 7 & 1 & 2 & 3 & 4 & 5 & 6 & 7\\
			\hline
			\multirow{2}{*}{\textbf{Benchmark}}& 

			\multicolumn{3}{c|}{ {\textbf{Avg. Time}} }  &
		\multicolumn{3}{c||}{ {\textbf{Executions}} }  &
		
		\multirow{2}{*}{\textbf{Benchmark}}
			&
			 \multicolumn{3}{c|}{ {\textbf{Avg. Time}} }
			& \multicolumn{3}{c|}{ {\textbf{Executions}} }
			 \\
			\cline{2-4}
			\cline{5-7}
			\cline{9-11}
			\cline{11-14}
			
			&\textbf{TruSt} & 
		\textbf{Our Alg.} &
		\textbf{R}
		&
		\textbf{TruSt} & 
		\textbf{Our Alg.} &
		\textbf{S} &
		       &
			\textbf{TruSt} &
			\textbf{Our Alg.} &
			\textbf{R} &
			\textbf{TruSt} & 
		\textbf{Our Alg.} &
		\textbf{S}
			\\
			\hline
\rule{0pt}{1em} 
barrier & 0.5  & 0.01 & 45.0  & 14K  & 76K  & 5.65  & mutex & 0.8  & 0.01 & 76.0  & 8K  & 55K  & 7.17 \\
buf-ring & 1.6  & 0.05 & 31.6  & 3K  & 5K  & 1.83  & peterson & 0.3  & 0.01 & 26.0  & 4K  & 5K  & 1.16 \\
chase-lev$\dagger$ & 0.0 & 0.0 & -  & 2  & 2  & 1 & qu & 0.2  & 0.02 & 9.5  & 34  & 2K  & 71.62 \\
dekker & 0.3  & 0.1 & 2.7  & 24K  & 234K  & 9.88  & seqlock & 0.2  & 0.01 & 24.0  & 26K  & 118K  & 4.58 \\
dq & 0.1  & 0.01 & 13.0  & 43K  & 117K  & 2.72  & spinlock & 1.0  & 0.02 & 49.0  & 6K  & 17K  & 2.91 \\
fib-bench & 1.6  & 0.02 & 80.5  & 4K  & 194K  & 43.94  & szymanski & 0.5  & 0.01 & 50.0  & 2K  & 2K  & 1.17 \\
lamport & 0.4  & 0.01 & 36.0  & 17K  & 91K  & 5.41  & ticketlock & 0.9  & 0.03 & 30.3  & 7K  & 62K  & 8.52 \\
linuxrwlocks & 0.4  & 0.02 & 20.0  & 12K  & 21K  & 1.8  & treiber & 0.6  & 0.03 & 19.3  & 403  & 403  & 1\\
mcs-spinlock & 1.1  & 0.02 & 54.0  & 6K  & 23K  & 4.12  & ttaslock & 1.0  & 0.02 & 48.5  & 401  & 401  & 1\\
mpmc & 0.7  & 0.03 & 22.0  & 10K  & 65K  & 6.68  & twalock & 0.6  & 0.02 & 29.0  & 8K  & 13K  & 1.7 \\
ms-queue$\dagger$ & 0.07 & 0.01 & 7.0  & 210  & 210  & 1 &  & & & &   &   & \\
\hline
\hline
\textbf{Totals} & - & - & - & - & - & - & - & & \textbf{12.58} & \textbf{0.46} & \textbf{192K} & \textbf{1.1M} & -\\
\hline
		\end{tabular}
}  
	}
\end{table}

\subsection{Incremental Algorithm for Consistency Checking}\label{subsec:app_experiments_c11}

In this section we present in detail our increlemental algorithm for consistency checking that is evaluated in \cref{subsec:experiments_online}.
Overall, \cref{algo:ra-rf-on-the-fly} has a similar working mechanism as \cref{algo:rf-consistency-rc20}, i.e., it computes the same minimally coherent partial modification order $\mopartial$.
In contrast to that algorithm however, as we need efficient, on-the-fly consistency checks,
this algorithm maintains a per-location order $\hbmo_x$ on write/RMW events that satisfies following invariants:
(i)~$\hbmo_x\subseteq (\hb_x \cup \mopartial_x)^+$ and
(ii)~$(\hb_x\cup (\mopartial_x;\po_x^?))\subseteq \hbmo_x$.

The algorithm is given in pseudocode in \cref{algo:ra-rf-on-the-fly}.
Its main data structure is $\HBMO\colon \E\to \nats_{\geq 0}$,  that is used to store the $\hbmo$ relation.
When a new read or write event is observed, the functions $\rdhandler$ and $\wthandler$ are called, respectively. 
RMW events are handled by first calling $\rdhandler$ followed by $\wthandler$.
The algorithm does not precompute 
the array $\HB$ ($\hb$-timestamps) or the data structures $\set{\view{\WtLst}{t,x}{u}}_{t, x, u}$ (for implementing $\glw(\cdot, \cdot, \cdot)$) and $\set{\view{\RdLst}{t,x}{u}}_{t, x, u}$ (for implementing $\glr(\cdot, \cdot, \cdot)$).
Instead, it maintains the same information on-the-fly (Lines \ref{line:ra-otf-hb}, \ref{line:ra-otf-getlw}, \ref{line:ra-otf-trans-closure-2}, \ref{line:ra-otf-write-hb-update}-\ref{line:ra-otf-write-glw}).

In more detail, read  and write events are handled as follows.
\begin{compactitem}

\item 
For a new read event $\event = \rd(x)$ (or $\event = \ud(x)$) with $w_{\mathsf{rf}} = \rfinv(\event)$, in the first part of the procedure (\crefrange{line:ra-otf-hb}{line:trans-closure-end}) the algorithm performs certain checks (Lines \ref{line:ra-otf-first-check-begin},  \ref{line:ra-otf-third-check-begin2}, \ref{line:ra-otf-trans-closure-3}) 
to ensure that extending the execution with $\event$ results in a consistent execution.
These checks are performed by means of an acyclicity check and they ensure 
that the write coherence and atomicity axioms are not violated.

A core component of the algorithm is the $\HBMO$ clocks, which are essential for performing efficient consistency checks incrementally.
Each write event in the execution has an associated $\HBMO$ vector clock.
RMW events in the same chain are represented with a single $\HBMO$ clock.
Intuitively, $\HBMO$ clock of a write event $\wt$ captures the incoming $\mo$ edges to $\wt$ from the writes in the other threads. 

In order to decide whether the new read/RMW event $\rd(x)$ can observe a write/RMW event $\wt(x)$, 
we must determine if there exists another write/RMW event $\wt'(x)$ such that
$(\wt', \rd) \in \hb_x$ and $(\wt', \wt) \in (\hb_x \cup \mopartial_x)^+$,
as this would lead to a consistency violation (\crefrange{line:ra-otf-hb}{line:ra-otf-check-end}).

An important feature of \cref{algo:ra-rf-on-the-fly} is that to optimize for performance, it does not keep the $\HBMO$ clocks transitively closed.
Due to this reason, the algorithm may fail to reject an inconsistent execution in these first checks.
To aleviate this,
in \crefrange{line:trans-closure-init}{line:trans-closure-end}, a transitive closure procedure is conducted and the above condition is checked again (\cref{line:ra-otf-trans-closure-3}).
If this step is passed, then the algorithm updates the state of the execution (\crefrange{line:ra-otf-update-hbmo}{line:ra-otf-last}) and declares that the execution is consistent.
While the algorithm is updating the state of the execution, it propagates the new inferred $\mo$ edges on  $\wt_{\mathsf{rf}}$ into the writes that are $\po$ ordered after $\TC[\wt_{\mathsf{rf}}]$ (\cref{line:ra-otf-po-prop}).

\item
For a new write event $\event = \wt(t, x)$ (or $\event = \ud(t, x)$), the main goal of the procedure is to initialize or update the $\HBMO$ clock of $\event$.
In particular, the algorithm first fetches the write event $\wt$ which corresponds to the last executed write event in thread $t$ on location $x$.
The algorithm then joins $\HBMO[\TC[\event]]$ with $\HBMO[\wt]$ and $\HB[\event]$ (\cref{line:ra-otf-write-rmw-hbmo}) and propagates the new inferred $\mo$ edges into the writes that are $\po$ ordered after $\TC[\event]$.
The latter step is only relevant for the write events that are associated with an $\rf$-chain. 
\end{compactitem}



\begin{algorithm*}

		
				
		
		\myfun{\rdhandler{$\event, w_{\mathsf{rf}}$}}{
			$\HB[\event] \gets \Hh_{t}[t] + 1$ ; $\PHBMO \gets []$ \label{line:ra-otf-hb} \;
			
			\ForEach{$t' \in tids$}{
				\label{line:ra-otf-first-loop-start}
				\Let $w = $ \glw{($t', x, \Hh_{t}[t']$)} \label{line:ra-otf-getlw} \;
				\If{$w \neq w_{\mathsf{rf}}$} {
					\label{line:ra-otf-check-begin}
					\If {$\TC[w] = \TC[w_{\mathsf{rf}}]$} {
						\If{$\PC[w_{\mathsf{rf}}] < \PC[w]$} {
							\label{line:ra-otf-first-check-begin}
							\declare `Inconsistent'
						}
						
					} 
					\Else {
						\If {$\HBMO[\TC[w]][\TC[w_{\mathsf{rf}}].\tid] \geq \HB[\TC[w_{\mathsf{rf}}]][\TC[w_{\mathsf{rf}}].\tid]$}  {
							\label{line:ra-otf-third-check-begin2}
							\declare `Inconsistent'
						}
					\label{line:ra-otf-third-check-end}
					$\PHBMO \gets \PHBMO \mx \HBMO[w]$
					\label{line:ra-otf-check-end}
				}
				}
			}
		
			\Let $L = tids$; \Let $changedThreads = \emptyset$ \label{line:trans-closure-init} \;
			\While{$L \neq \emptyset$} { 
				\ForEach{$t' \in L$}{
					\If{$\HB[e][t'] < \PHBMO[t']$} {
						\label{line:ra-otf-trans-closure-1}
						\Let $w' = $ \glw{($t', x, \PHBMO[t']$)} \;
						\label{line:ra-otf-trans-closure-2}
						\If {$\HBMO[\TC[w']][\TC[w_{\mathsf{rf}}].\tid] \geq \HBMO[\TC[w_{\mathsf{rf}}]][\TC[w_{\mathsf{rf}}].\tid]$} {
							\label{line:ra-otf-trans-closure-3}
							\declare `Inconsistent'
							\label{line:ra-otf-trans-closure-4}
						}
						$\PHBMO\gets\joinchangedindex{$\PHBMO, \HBMO[w']$}$ \label{line:ra-otf-trans-closure-5}\;
						$changedThreads \gets changedThreads \cup \PHBMO.\mathsf{changedIndexes}$ \label{line:ra-otf-trans-closure-6} 
					}
				}
				
				$L \gets changedThreads$; $changedThreads \gets \emptyset$ \label{line:trans-closure-end}
			}
		
			$\HBMO[\TC[w_{\mathsf{rf}}]] \gets \HBMO[\TC[w_{\mathsf{rf}}]] \mx \PHBMO$ \label{line:ra-otf-update-hbmo} \;
			$\HB[\event] \gets \HB[\event] \mx \HB[w_{\mathsf{rf}}]$; 		$\Hh_{t}[t] \gets \Hh_{t}[t] + 1$ \label{line:ra-otf-update-hb}\;	
			$\poprop(\TC[w_{\mathsf{rf}}])$
			\label{line:ra-otf-po-prop}	
			
			\declare `Consistent' \label{line:ra-otf-last}
		}
	
		\BlankLine
		\myfun{\wthandler{$\event$}}{
			\If{$\event.\op \not = \ud$} {
				$\Hh_{t}[t] \gets \Hh_{t}[t] + 1$;
				$\HB[\event] \gets \Hh_{t}[t]$;
				$\HBMO[\event] \gets []$ \label{line:ra-otf-write-hb-update}\;
			}
			\Let $w = $ \glw{($e.\tid, e.\lloc, \HB[\event]-1$)} \label{line:ra-otf-write-glw} \;
			$\HBMO[\TC[\event]] \gets \HBMO[\TC[\event]] \mx \HB[\event] \mx \HBMO[w]$ \label{line:ra-otf-write-rmw-hbmo} \;
			$\poprop(\TC[\event])$ \;
			\label{line:ra-otf-write-rmw-hbmo-po}
			\declare `Consistent'
		}
		\BlankLine
		
		\caption{Incremental consistency checking for $\ramm$.}
		\label{algo:ra-rf-on-the-fly}
	\end{algorithm*}


\bparagraph{Scalablity experiments}
We complement our experimental results of \cref{tab:expr-results-c11} with some controlled scalability experiments that better illustrate differences in consistency checking between \cref{algo:ra-rf-on-the-fly} and the algorithm implemented inside \celeventester.

The benchmarks consists of patterns shown in \cref{fig:ra_rf_on_the_fly_scalability}.
They are created to exhibit certain behaviors which exploit contrasting aspects of the algorithms.
These benchmarks provide a better understanding of the algorithm by demonstrating when and why \cref{algo:ra-rf-on-the-fly} ensues a better performance.



\begin{figure}[ht]
\centering
\begin{subfigure}{0.45\textwidth}
\centering
\begin{tikzpicture}[yscale=0.7]
  \node (t11) at (0,0) {$\event_1=\wt(t, x)$};
  \node (t12) at (0,-1.5) {$\event_6=\ud(t, x)$};
  \node (t13) at (0,-2.85) {$\vdots$};
  \node (t14) at (0,-4.5) {$\event_7=\ud(t, x)$};

  \node (t21) at (2.2,0)  {$\event_2=\rd(t_1, x)$};
  \node (t22) at (2.2,-1.5) {$\event_3=\wt(t_1, x)$};
  
  \node (t31) at (3.3,-0.85) {$\hdots$};
  
  \node (t41) at (4.6,0)  {$\event_4=\rd(t_k, x)$};
  \node (t42) at (4.6,-1.5) {$\event_5=\wt(t_k, x)$};
 
  \draw[po] (t21) to node[left]{$\po$} (t22);
  \draw[po] (t41) to (t42);
%
%
  \draw[rf,bend left=0] (t11) to node[right,pos=0.5]{$\rf$} (t12);
  \draw[rf,bend right=0] (t12) to node[right,pos=0.8]{$\rf$} (0, -2.6);
  \draw[rf,bend right=0] (t11) to node[above,pos=0.8]{$\rf$} (t21);
  \draw[rf,bend left=35] (t11) to node[above,pos=0.8]{$\rf$} (t41);
  \draw[rf,bend right=0] (t13) to node[right,pos=0.8]{$\rf$} (t14);
  
\end{tikzpicture}
\caption{
Fixed $\rf$-Chain Read-Write
}
\label{subfig:fixed-rmw-chain-rw}
\end{subfigure}
\hfill
\begin{subfigure}{0.45\textwidth}
	\centering
	\begin{tikzpicture}[yscale=0.7]
		\node (t11) at (0,0) {$\event_1=\wt(t, x)$};
		\node (t12) at (0,-1.5) {$\event_2=\ud(t, x)$};
		\node (t13) at (0,-2.85) {$\vdots$};
		\node (t14) at (0,-4.5) {$\event_3=\ud(t, x)$};
		
		\node (t21) at (2.2,0)  {$\event_4=\wt(t_1, x)$};
		\node (t22) at (2.2,-1.5) {$\event_5=\rd(t_1, x)$};
		
		\node (t31) at (3.5,-0.85) {$\hdots$};
		
		\node (t41) at (4.8,0)  {$\event_6=\wt(t_k, x)$};
		\node (t42) at (4.8,-1.5) {$\event_7=\rd(t_k, x)$};
		%
		\draw[po] (t21) to node[left]{$\po$} (t22);
		\draw[po] (t41) to (t42);
		%
		%
		\draw[rf,bend left=0] (t11) to node[right,pos=0.5]{$\rf$} (t12);
		\draw[rf,bend right=0] (t12) to node[right,pos=0.8]{$\rf$} (0, -2.6);
		\draw[rf,bend right=0] (t14) to node[above,pos=0.75]{$\rf$} (t22);
		\draw[rf,bend right=0] (t14) to node[above,pos=0.8]{$\rf$} (t42);
		\draw[rf,bend right=0] (t13) to node[left,pos=0.8]{$\rf$} (t14);
	\end{tikzpicture}
	\caption{
		Fixed $\rf$-chain Write-Read.
	}
	\label{subfig:fixed-rmw-chain-wr}
\end{subfigure}
\hfill
\begin{subfigure}{0.55\textwidth}
	\centering
	\begin{tikzpicture}[yscale=0.7]
		\node (t11) at (0,0) {$\event_3=\rd(t, y)$};
		\node (t12) at (0,-1.5) {$\event_4=\wt(t, x)$};
		\node (t13) at (0,-3) {$\event_5=\wt(t, x)$};
		\node (t14) at (0,-4.5) {$\event_6=\wt(t, x)$};
		\node (t15) at (0,-5.1) {$\vdots$};
		
		\node (t21) at (2.2,0)  {$\event_1=\wt(t', x)$};
		\node (t22) at (2.2,-1.5) {$\event_2=\wt(t', y)$};
		
		\node (t31) at (4.4,0)  {$\event_7=\wt(t_1, x)$};
		\node (t32) at (4.4,-1.5) {$\event_8=\rd(t_1, x)$};

		\node (t41) at (5.5,-0.85) {$\hdots$};
		
		\node (t51) at (6.8,0)  {$\event_9=\wt(t_k, x)$};
		\node (t52) at (6.8,-1.5) {$\event_{10}=\rd(t_k, x)$};

		\draw[po] (t11) to node[left]{$\po$} (t12);
		\draw[po] (t12) to (t13);
		\draw[po] (t13) to (t14);
		\draw[po] (t21) to (t22);
		\draw[po] (t31) to (t32);
		\draw[po] (t51) to (t52);
		%
		%
		\draw[rf,bend right=0] (t22) to node[above,pos=0.5]{$\rf$} (t11);
		\draw[rf,bend right=0] (t21) to node[above,pos=0.5]{$\rf$} (t32);
		\draw[rf,bend left=41] (t21) to node[above,pos=0.5]{$\rf$} (t52);
	\end{tikzpicture}
	\caption{
		New read events.
	}
	\label{subfig:partial-satur-read}
\end{subfigure}
\hfill
\begin{subfigure}{0.35\textwidth}
	\centering
	\begin{tikzpicture}[yscale=0.7]
		\node (t11) at (0,0) {$\event_1=\wt(t, x)$};
		\node (t12) at (0,-1.5) {$\event_2=\wt(t, y)$};
		\node (t13) at (0,-3) {$\event_3=\wt(t, x)$};
		\node (t14) at (0,-4.5) {$\event_4=\wt(t, x)$};
		\node (t15) at (0,-5.1) {$\vdots$};
		
		\node (t21) at (2.2,0)  {$\event_5=\rd(t', y)$};
		\node (t22) at (2.2,-1.5) {$\event_6=\wt(t', x)$};
		\node (t23) at (2.2,-3) {$\event_7=\wt(t', x)$};
		\node (t24) at (2.2,-3.6) {$\vdots$};

		\draw[po] (t11) to node[left]{$\po$} (t12);
		\draw[po] (t12) to (t13);
		\draw[po] (t13) to (t14);
		\draw[po] (t21) to (t22);
		\draw[po] (t22) to (t23);
		%
		%
		\draw[rf,bend right=0] (t12) to node[above,pos=0.5]{$\rf$} (t21);
	\end{tikzpicture}
	\caption{
		New write events.
	}
	\label{subfig:no-write-satur}
\end{subfigure}
\begin{subfigure}{0.45\textwidth}
	\centering
	\begin{tikzpicture}[yscale=0.7]
		\node (t11) at (5.2,0) {$\event_3=\rd(t, x)$};
		\node (t1112d) at (5.2,-1) {$\vdots$};
		\node (t12) at (5.2,-2.5) {$\event_4=\rd(t, x)$};
		\node (t13) at (5.2,-4) {$\event_5=\wt(t, x)$};
		\node (t14) at (5.2,-5.5) {$\event_6=\ud(t, x)$};
		\node (t15) at (5.2,-7) {$\event_7=\ud(t, x)$};
		\node (t16) at (5.2,-7.6) {$\vdots$};
		
		\node (t21) at (0,0)  {$\event_1=\wt(t_1, x)$};
		
		\node (t31) at (1.4,-0.1) {$\hdots$};
		
		\node (t41) at (2.8,0)  {$\event_2=\wt(t_k, x)$};

		\draw[po] (t11) to node[left]{$\po$} (5.2, -0.73);
		\draw[po] (t1112d) to (t12);
		\draw[po] (t12) to (t13);
		%
		%
		\draw[rf,bend left=0] (t21) to node[above,pos=0.5]{$\rf$} (t12);
		\draw[rf,bend left=0] (t41) to node[above,pos=0.5]{$\rf$} (t11);
		\draw[rf,bend left=0] (t13) to node[right,pos=0.5]{$\rf$} (t14);
		\draw[rf,bend left=0] (t14) to node[right,pos=0.5]{$\rf$} (t15);
	\end{tikzpicture}
	\caption{
		HB-aware.
	}
	\label{subfig:hb-aware}
\end{subfigure}
\caption{
Illustration of the partial executions used in our scalability experiments in \cref{fig:scalability_results}.
}
\label{fig:ra_rf_on_the_fly_scalability}
\end{figure}


\begin{figure}[!ht]
	\def\scatterscale{0.27}
	\def\scatterwidth{0.28\textwidth}
	
	\centering
	\begin{subfigure}[b]{\scatterwidth}
		\includegraphics[scale=\scatterscale]{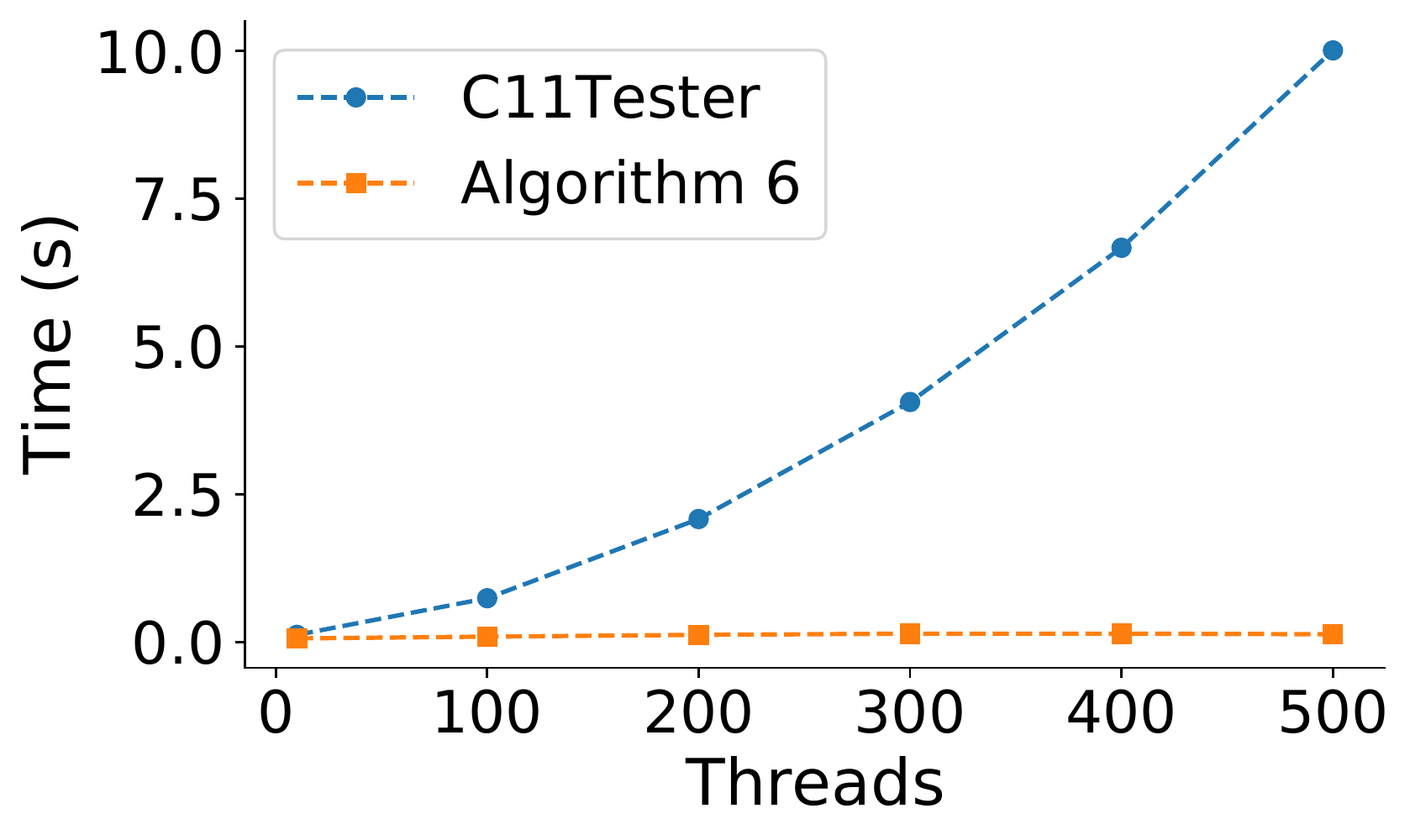}
		\label{subfig:fixed_rmw_chain_rw_result}
		\caption{Fixed $\rf$-Chain Read-Write.}
	\end{subfigure}
	\hfill
	\begin{subfigure}[b]{\scatterwidth}
		\includegraphics[scale=\scatterscale]{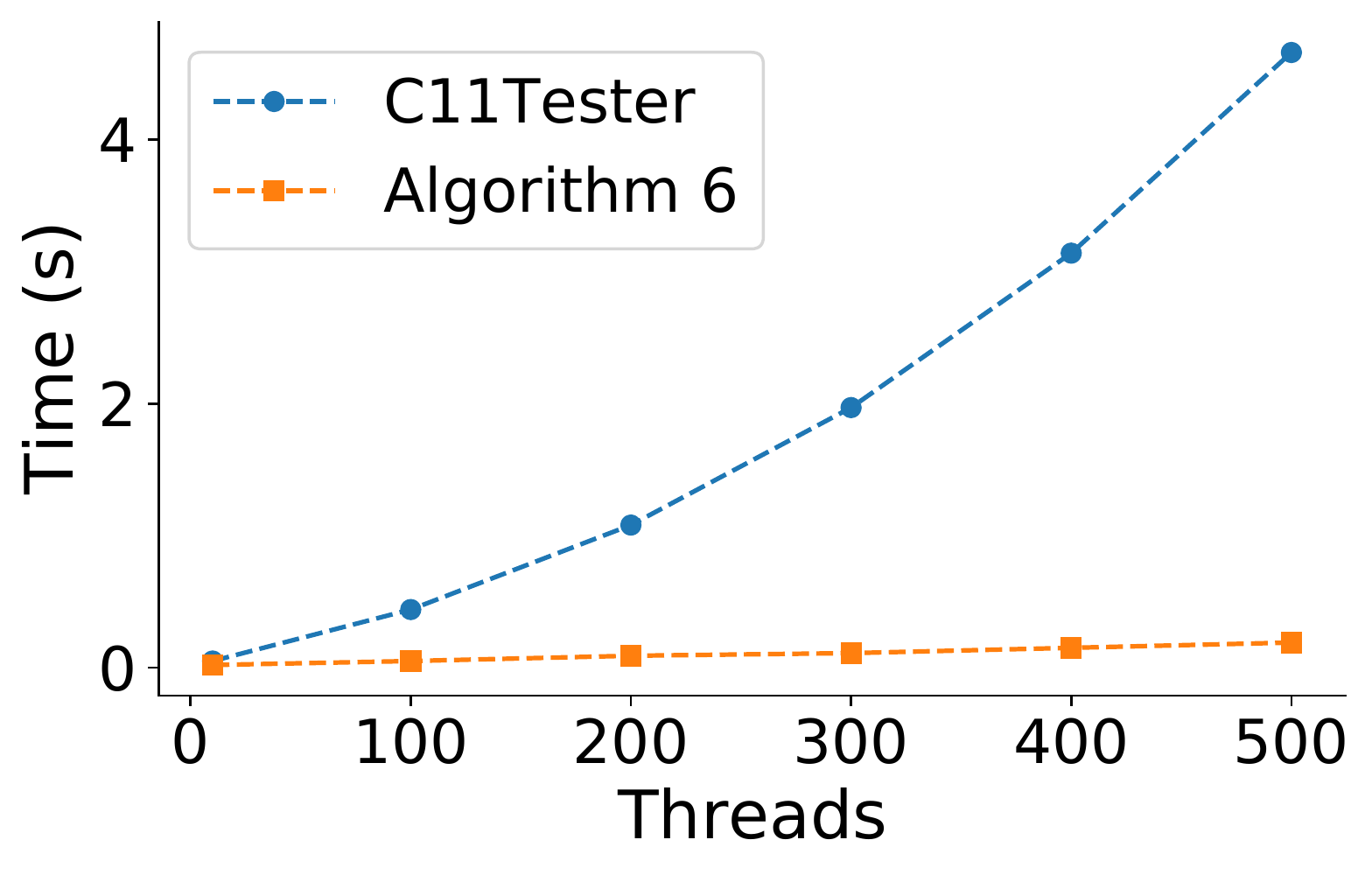}
		\label{subfig:fixed_rmw_chain_wr_result}
		\caption{Fixed $\rf$-Chain Write-Read.}
	\end{subfigure}
	\hfill
	\begin{subfigure}[b]{\scatterwidth}
		\includegraphics[scale=\scatterscale]{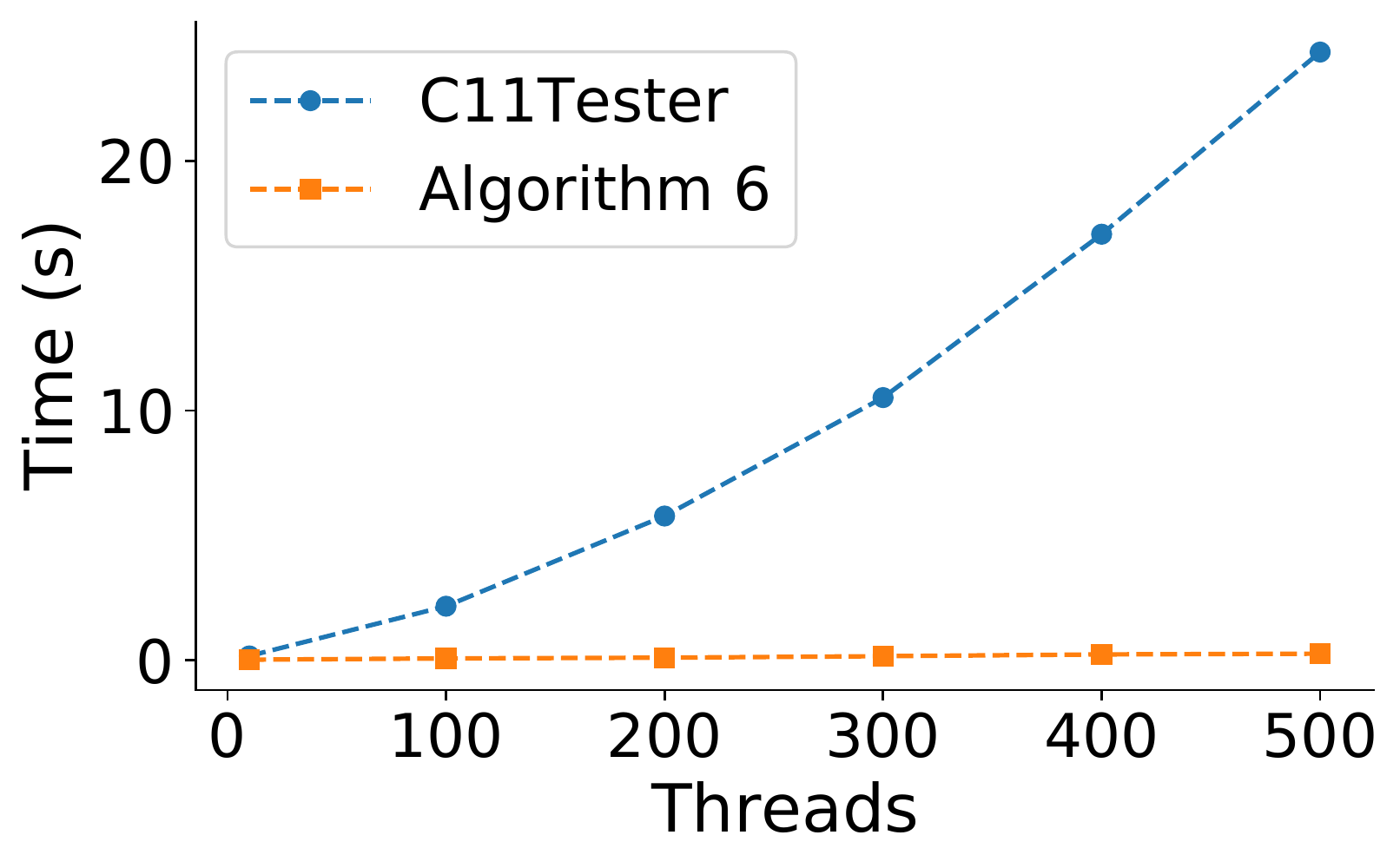}
		\label{subfig:partial_satur_read_result}
		\caption{New read events.}
	\end{subfigure}
\hfill
	\begin{subfigure}[b]{\scatterwidth}
		\includegraphics[scale=\scatterscale]{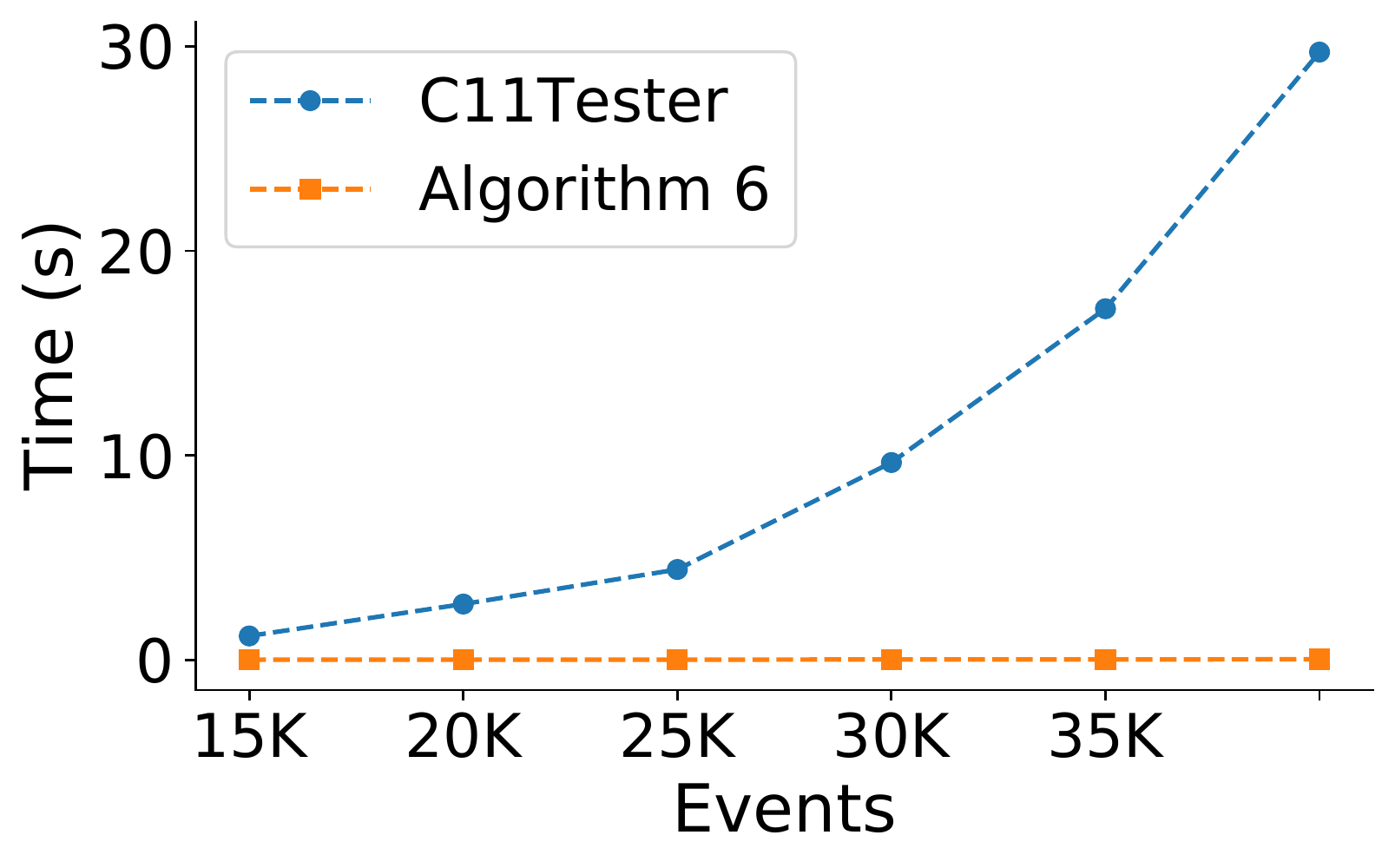}
		\label{subfig:no_w_satur_result}
		\caption{New write events.}
	\end{subfigure}
\hfill
	\begin{subfigure}[b]{\scatterwidth}
		\includegraphics[scale=\scatterscale]{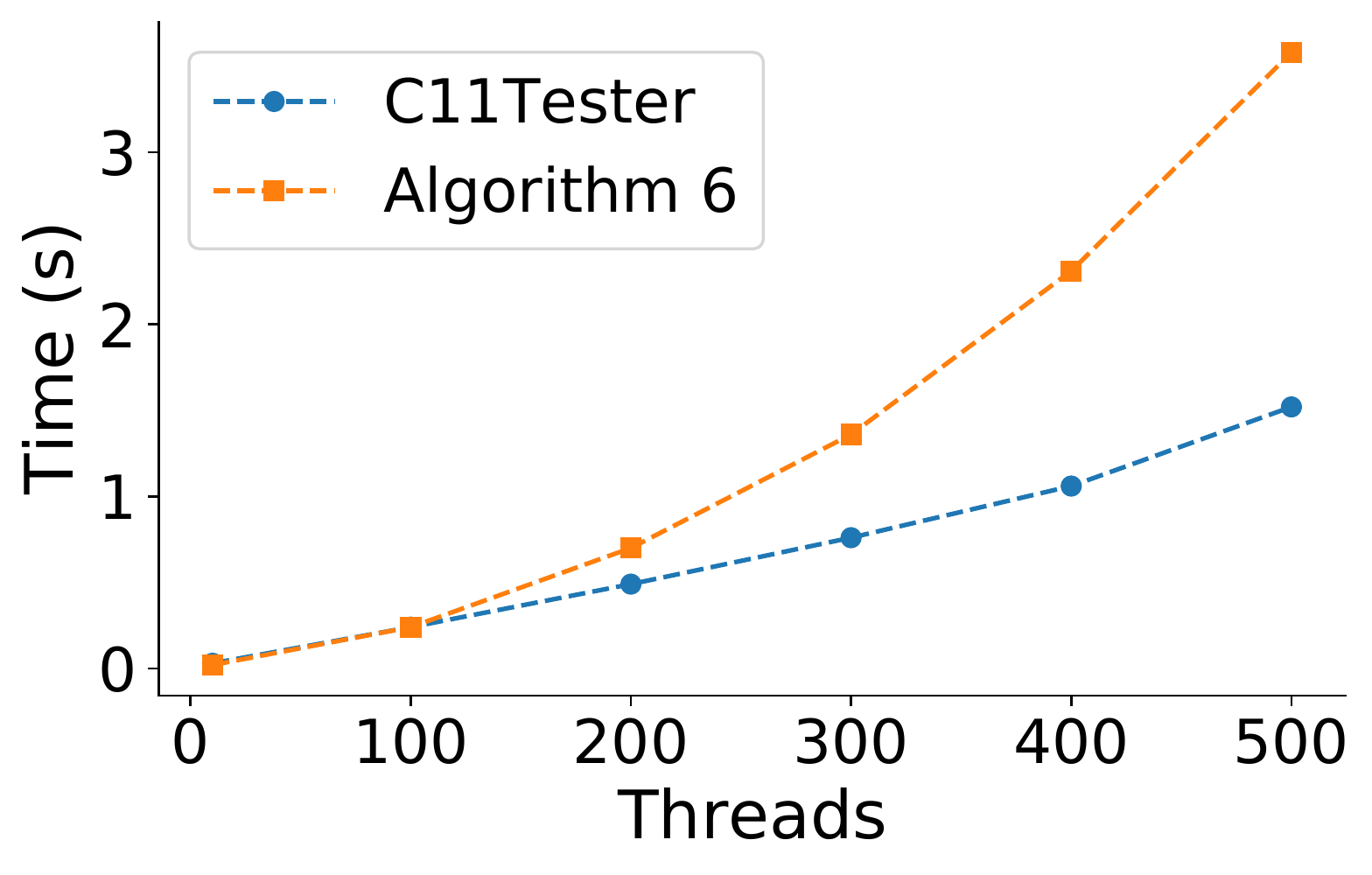}
		\label{subfig:hb_aware_result}
		\caption{HB-aware}
	\end{subfigure}
	\caption{
			Comparison of \celeventester and  \cref{algo:ra-rf-on-the-fly} on five handcrafted benchmarks with increasing number of threads/events.
	}
	\label{fig:scalability_results}
\end{figure}

\end{arxiv}

\end{document}